\documentclass[11pt]{article}

\usepackage[a4paper,margin=1in]{geometry}
\usepackage[T1]{fontenc}
\usepackage[utf8]{inputenc}
\usepackage{lmodern}
\usepackage{microtype}
\usepackage{amsmath,amssymb,amsthm,mathtools}
\usepackage{booktabs}
\usepackage{array}
\usepackage{enumitem}
\usepackage{tikz}
\usepackage{float}
\usepackage{caption}
\usetikzlibrary{positioning}
\usepackage{hyperref}
\usepackage[nameinlink,noabbrev]{cleveref}

\hypersetup{
  colorlinks=true,
  linkcolor=black,
  citecolor=black,
  urlcolor=black
}

\newtheorem{definition}{Definition}[section]
\newtheorem{axiom}[definition]{Axiom}
\newtheorem{assumption}[definition]{Assumption}
\newtheorem{lemma}[definition]{Lemma}
\newtheorem{proposition}[definition]{Proposition}
\newtheorem{theorem}[definition]{Theorem}
\newtheorem{corollary}[definition]{Corollary}
\newtheorem{example}[definition]{Example}
\newtheorem{remark}[definition]{Remark}

\newcommand{\A}{\mathcal{A}}
\newcommand{\B}{\mathcal{B}}
\newcommand{\C}{\mathcal{C}}

\newcommand{\F}{\mathcal{F}}
\newcommand{\G}{\mathcal{G}}

\newcommand{\M}{\mathcal{M}}

\newcommand{\R}{\mathbb{R}}

\newcommand{\eps}{\varepsilon}
\newcommand{\TV}{\operatorname{TV}}
\newcommand{\Om}{\Omega}

\newcommand{\res}{\operatorname{res}}
\newcommand{\conv}{\operatorname{conv}}

\newcommand{\Ker}{\operatorname{Ker}}

\newcommand{\osc}{\operatorname{osc}}
\newcommand{\one}{\mathbf{1}}
\newcommand{\Ent}{\operatorname{H}}
\newcommand{\MI}{\operatorname{I}}

\title{Choric Masking in Ambient Release Systems:\\[3pt] {\large A Finite Certificate Calculus for Trace Indistinguishability under Bounded Audiences}}

\author{
Faruk Alpay\textsuperscript{*}\\
Department of Computer Engineering\\
Bah\c{c}e\c{s}ehir University, Istanbul, Turkiye\\
\texttt{faruk.alpay@bahcesehir.edu.tr}
\and
Taylan Alpay\\
Department of Aerospace\\
University of Turkish Aeronautical Association, Ankara, Turkiye\\
\texttt{s220112602@stu.thk.edu.tr}
}

\date{\vspace{-0.5em}\textsuperscript{*}Correspondence: \texttt{alpay@lightcap.ai}}

\begin{document}
\maketitle

\begin{abstract}
This paper develops a finite trace calculus for systems in which a protected coordinate is never observed directly yet remains statistically readable through visible roles, timing, repeated movement, bounded attention, and downstream object dynamics. An ambient release system is a staged probabilistic environment with hidden loci, norm gates, role masks, audiences, attention lenses, linked rooms, and post-release state. Its security notion, choric masking, asks not for invisibility but for non-singularity: the trace law induced by a protected locus must lie inside, or near, the convex hull of admissible cover traces, as measured by the tests actually available to an audience. Every certificate in the calculus is finite-dimensional. Trace laws form polytopes, audiences induce measurement operators, masking is intersection in the projected measurement space, and exposure is witnessed by separating hyperplanes, kernel obstructions, hypothesis-testing bounds, and Fano-type information lower bounds. We prove measurement-polytope equivalence for exact and approximate masks, dual separation certificates, localization conditions that distinguish group residue from carrier residue, data-processing laws for bounded attention, aperture identities for gaze-thinned observation, lower bounds for mandatory unique gestures, composition rules for linked releases, and support-separation certificates for post-release exposure. A repeated-room theory defines choric risk debt and proves that unresolved pressure can broaden selection and shift cost onto cover populations without increasing carrier hazard, separating absent, unseen, displaced, and delayed residue. The calculus gives designers and auditors of public-facing systems finite, checkable certificates for each of these distinctions.
\end{abstract}

\medskip
\noindent\textbf{Keywords:} trace indistinguishability; anonymity and unlinkability; side channels; leakage certificates; bounded observers; privacy; surveillance and inspection systems; accountability; institutional design.

\section{Introduction}

Many public systems try to protect a hidden coordinate while still requiring a visible performance. A person, institution, device, or record must pass through a sequence of ordinary actions: enter, wait, declare, correct, acknowledge, defer, forward, discard, continue, or leave. The protected coordinate is not directly printed on the public surface, yet the surface is not neutral. Timing, repetition, hesitation, routing, role mismatch, object persistence, and audience attention can create a statistical trace. The problem studied here is the gap between formal non-observation and practical readability.

The paper treats that gap as a finite trace problem. A hidden locus chooses or induces visible movement through a norm gate; an audience sees only an observation sequence; and the security question is whether two hidden loci induce trace laws that are indistinguishable for the tests actually available to the audience. This is close in spirit to simulation and leakage definitions~\cite{goldwasser1989,canetti2001,lindell2017,dwork2006}, but the channel of concern is neither a message transcript alone nor a database answer alone. The channel is the public choreography around the release: who moves, when they move, what they carry, which ordinary roles surround them, how attention is allocated, and what later rooms force them to do.

The central term is \emph{choric masking}. A locus is chorically masked when its trace can be absorbed into a chorus of admissible traces generated by other loci, roles, objects, or cohorts. Masking is therefore not invisibility. It is non-singularity under a specified audience class. A public residue may be real while its carrier remains non-localizable; conversely, a carrier may pass one gate and later become exposed because a carried state changes the set of admissible future scripts.

The manuscript is organized around four claims, each made as a theorem rather than as metaphor.
\begin{enumerate}[label=(\arabic*),leftmargin=2.2em]
\item \textbf{Trace-level masking.} For finite horizons, masking is equivalent to approximate intersection of convex hulls of admissible trace laws. The witness is a coupling or a common point in a measurement polytope.
\item \textbf{Attention-level masking.} A bounded observer does not test the full trace. A stochastic attention lens first thins the trace. Data processing then implies that attention can only reduce the distinguishability available to the observer, and aperture bounds quantify when a real residue becomes practically unreadable.
\item \textbf{Post-release exposure.} Some hidden coordinates are not inert. A carried object or state may decay, become costly, become scarce, create hazard, or demand storage. This residual pressure narrows later scripts and can create a second-room exposure floor even after first-room masking succeeds.
\item \textbf{Repeated unresolved pressure.} When a room reacts to unresolved residue without localizing the source, risk becomes a debt variable. The response may broaden, false positives may rise, per-candidate attention may fall, and the original carrier hazard need not increase. This separates cumulative system pressure from personal non-capture.
\end{enumerate}

The calculus is written for the designer and the auditor, and its outputs are certificates: a separating test, a hull intersection, a coupling witness, an information bound, a pressure ledger. Each certificate states which of three events a public system is currently conflating: absence of residue, residue outside the audience aperture, and residue displaced onto a cover population. The design implication is exact: broad response is not a substitute for localization, and a system that cannot attach pressure to the layer that generated it will charge that pressure to unrelated participants.

\subsection*{Threat model and scope}

A hidden locus is any protected coordinate whose public readability is at issue: a role, object class, affiliation, source, condition, or local state. An audience is not assumed to be omniscient. It has a test class, a finite attention budget, and possibly an adaptive response policy. The model studies statistical separation of trace laws under those constraints. It does not assume malice, guilt, or innocence of any particular carrier; those words are outside the mathematics. The object of analysis is whether the room can localize the source of a residue without charging the wrong layer.

Two audience grades recur. A \emph{passive} audience applies a fixed finite family of bounded tests to the public trace. An \emph{adaptive} audience also carries a response policy whose state (attention allocation, selection breadth, accumulated debt) evolves with unresolved residue; its power is limited by explicit information and budget bounds rather than by fiat. All statements concern distributions over traces and are invariant under relabeling of carriers inside a cover cell, which is exactly the quantity the localization certificates measure. This fixes the intended use of the theory: the calculus certifies when a room can attribute a residue and when it cannot, so that pressure is either localized or recognized as displaced.

All spaces are finite unless explicitly stated otherwise. This is deliberate. Finiteness makes every certificate constructive: residues have cylinder witnesses, masks have finite convex certificates, and failure modes can be rendered as inequalities over budgets, hulls, links, and pressures. The resulting statements are not asymptotic slogans. They are finite horizon claims about what a bounded audience can and cannot infer from visible traces.

\section{Relation to existing frameworks}
\label{sec:related}

The calculus sits at the junction of four literatures, and its contribution can be stated exactly against each.

\paragraph{Cryptographic indistinguishability.}
Simulation-based definitions compare a real transcript with an ideal one and certify protocols by reductions~\cite{goldwasser1989,bellare1998,canetti2001,lindell2017}. The present object differs in channel and in proof move. The channel is not a protocol transcript but the public choreography around a release: who moves, when, in which role, under which gate. The proof move is not a reduction but a finite separating certificate over convex sets of admissible behaviors. Masking is a property of a population of trace laws, so no single participant is required to imitate anyone; the room as a whole must supply a hull.

\paragraph{Statistical disclosure and differential privacy.}
Differential privacy constrains how strongly a curated mechanism may depend on one record~\cite{dwork2006,dworkroth2014,nissim2017}, and its operational content is a bound on binary hypothesis tests~\cite{wasserman2010}. Ambient release has no curator: the release is the behavior itself, produced under a norm gate, and the protected coordinate enters through transition and emission kernels rather than through a database row. The hypothesis-testing reading survives the change of setting --- $\eps$-masking bounds the advantage of every audience-implementable test (\Cref{lem:ht}) --- but the protected object becomes membership of an induced law in a cover hull, and the attack surface includes trace-level carriers of the kind documented in de-anonymization and traceability results~\cite{narayanan2008,dwork2015}. The two frameworks compose rather than compete: a privately curated release inside a room is one admissible script among others, and the room may still leak through timing, roles, and burden.

\paragraph{Anonymity systems and their metrics.}
Mixes and crowds protect a sender by making them one of many plausible sources~\cite{chaum1981,reiter1998}, with protection quantified by anonymity-set size or entropy~\cite{serjantov2002,diaz2002}, standardized as unlinkability~\cite{pfitzmann2010}, and ported to records as $k$-anonymity~\cite{sweeney2002}. Choric masking generalizes the protected population from users behind a relay to layer values behind a measurement hull: cover multiplicity counts the values whose hulls contain the observed measurement, the Fano bound converts probable innocence into a quantitative localization floor parameterized by the information actually carried to the response policy, and the Helly certificate bounds the number of comparisons needed to refute a group cover. The framework also keeps what relay-based metrics drop: gates that force gestures, observers with finite apertures, and releases whose consequences arrive later.

\paragraph{Quantitative information flow.}
Quantitative information flow measures leakage of a secret through a channel with gain functions and operational leakage measures~\cite{smith2009,alvim2020,issa2020}, and its data-processing inequality drives every post-processing argument. The attention lens of this paper is channel pre-composition read through that inequality, and the kernel-enlargement theorem is its nullspace form. Four objects here lie outside the standard inventory: admissible-script polytopes, which make the secret's feasible behavior set itself a convex body shaped by a norm gate; separating-hyperplane certificates computable by finite linear programs over those polytopes; post-release burden, a support-level rather than likelihood-level channel by which a carried state narrows future scripts; and debt dynamics, which track where pressure lands when localization fails.

\paragraph{Social accounts of public visibility.}
Dramaturgical role performance~\cite{goffman1959}, contextual integrity~\cite{nissenbaum2004}, networked privacy~\cite{boyd2012}, obfuscation as deliberate cover traffic~\cite{brunton2015}, and analyses of categorical screening and its distributional costs~\cite{gandy2006,tufekci2014,kearns2019} describe the same room from the outside. The displaced-risk theorems give these accounts a checkable form: cost shifted onto a cover population is here a certified inequality between a debt trajectory, a dilution bound, and a spillover derivative, not a metaphor.

To our knowledge, the combination is new: hull-intersection masking under audience seminorms, constructive separation certificates, attention-thinned readings, burden-induced second-room floors, and a debt calculus for repeated unlocalized response, all inside one finite and inspectable model.

\section{Proof architecture}

The proof system is now organized around a single finite-dimensional object: the signed difference of two public trace laws.  Every later notion is a way of applying a linear or stochastic map to that difference and then asking whether any admissible functional still separates it from zero.  This prevents the manuscript from relying on metaphor.  A ``room'' is a finite carrier, a ``chorus'' is a convex hull of trace laws, a ``residue'' is a signed measure, an ``audience'' is a family of bounded linear readings, an ``attention lens'' is a Markov kernel, a ``linked room'' is another Markov kernel, and ``pressure'' is a state-dependent restriction of the downstream support.

The dependency chain is as follows.  First, finite scripts generate trace polytopes.  Second, an audience induces a measurement operator on those polytopes.  Choric masking is exactly intersection, or approximate intersection, after this operator is applied.  Third, if the projected polytopes do not intersect, finite-dimensional separation produces a signed audience certificate; this is the real proof object.  Fourth, localization is not assumed from separation: it requires a second certificate showing that the separating functional attaches to a layer rather than merely to a group.  Fifth, an attention lens composes with the measurement operator, so its effect is the replacement of one kernel by a larger one; any residue in that new kernel is unobservable to that audience.  Sixth, post-release pressure is modeled by nested downstream script sets.  Support separation in those nested sets creates a later exposure floor even when the entrance projection intersects.  Finally, repeated rooms add a reflected debt recursion.  Debt is not hazard unless another certificate converts aggregate pressure into localized attention, localized residue, or information about the carrier.

This order matters.  The paper does not prove that a vague environment ``feels risky.''  It proves statements of the following form: a signed law lies outside a kernel, a projected polytope is separated by a hyperplane, a Markov lens contracts the relevant seminorm, a support inclusion is strict, or a debt recursion grows without a matching localization certificate.  Those are the only admissible proof moves.

\section{Model overview: staged rooms and visible traces}

A public system often gives its participants a narrow set of movements. One may enter, wait, acknowledge, correct, refuse, forward, remain silent, or leave. Each movement is ordinary when seen alone. A problem appears only when a hidden locus is read through the sequence. The audience may not observe an internal state, a private motive, or a protected attribute, but it may observe a rhythm of hesitation, a pattern of correction, a missing reply, or a change in the crowd around the participant. The leakage is not a single secret bit written on a wall. It is a residue left by movement inside a room.

The aim of this paper is to give that room a precise mathematical form. We study finite staged environments in which a hidden locus produces a public trace through a visible role. The audience is modeled by a family of bounded tests on traces. The room is secure for a pair of hidden loci when no permitted audience test separates the two trace laws by more than a prescribed tolerance. The phrase \emph{choric masking} is used because the trace of one locus is protected only when it can be placed inside a chorus of other admissible traces. The participant is not erased. The participant is made non-singular.

The paper deliberately avoids a story in which protection is reduced to encryption, access control, or a private mechanism. Those objects may be present in an implementation, but they are not the object studied here. The primitive object is the visible trace. This choice matters because many release systems fail after the protected channel has already done its job. A message may be hidden, but the timing of its release, the shape of its acknowledgments, or the roles forced by the surrounding institution may still expose the source.

The room has six components.
\begin{enumerate}[label=(\roman*),leftmargin=2.2em]
\item A finite set of hidden loci, denoted by $\Theta$.
\item A finite set of visible roles, denoted by $R$.
\item A finite state space $S$ and a finite action alphabet $A$.
\item A finite observation alphabet $Y$ from which public traces are built.
\item A norm gate that decides which actions are admissible for a role at a visible history.
\item A class of audience tests that assigns a number in $[0,1]$ to a trace.
\end{enumerate}
The hidden locus affects the evolution of the room, but the audience sees only $Y$-valued traces. A role is a public costume. A locus is the protected coordinate behind the costume. A norm gate is the social and institutional grammar of the room.

The main results are as follows.
\begin{itemize}[leftmargin=2em]
\item A coupling criterion converts indistinguishability into the construction of a joint staged process whose observable statistics agree with high probability.
\item A convexity theorem shows that choric masks are exactly common points, or approximate common points, in convex hulls of admissible trace laws.
\item A stratified localization calculus separates person, object, role, companion, cohort, clock, and side-release layers; exposure becomes a statement about which layer an audience can actually localize.
\item A residue theorem says that whenever an audience can separate two loci, the separation has a finite cylinder witness. Leakage is therefore not mystical: it is carried by a concrete pattern of visible gestures.
\item A cover-multiplicity theorem proves that a real residue may still fail to identify its carrier when enough ordinary trace sources share the same measurement hull.
\item An attention-lens theorem proves that a residue visible in the full trace may vanish for a bounded observer whose gaze, dwell time, and working aperture thin the trace before testing.
\item A unique-gesture lower bound proves that a gesture used by only one hidden layer-value cannot be masked unless the room supplies enough cover mass or the diagnostic coordinates fall outside the effective aperture.
\item A repeated-window theorem shows how a tiny residue becomes large when the room is visited many times.
\item A choric-debt theorem shows how unresolved residue can accumulate into global pressure, broaden selection, dilute per-candidate attention, and externalize cost to cover populations without resolving the original carrier.
\item A composition theorem gives a leakage accounting rule for linked rooms and delayed release gates.
\item A residual-pressure theorem proves that a carried object can create a second-room exposure even after first-room masking has succeeded.
\item A burden certificate separates three failures that are often conflated: no residue, residue without attention, and residue whose downstream pressure is not yet visible.
\end{itemize}
The later sections turn these results into a worked finite example. The example is not empirical. It is a laboratory in which the reader can compute exactly where the exposure sits.

\section{Traces, tests, and ambient distance}

Let $Y$ be a finite observation alphabet. For a horizon $T\geq 1$, write $Y^T$ for the set of public traces $y_{1:T}=(y_1,\ldots,y_T)$. A probability law $P$ on $Y^T$ is a full description of what the audience sees over $T$ visible steps.

\begin{definition}[Audience test]
An audience test over horizon $T$ is a function
\[
    f:Y^T\to [0,1].
\]
A test class is a set $\F_T\subseteq [0,1]^{Y^T}$. It is symmetric if $f\in\F_T$ implies $1-f\in\F_T$.
\end{definition}

The value $f(y_{1:T})$ may be read as suspicion, confidence, attention, severity, or selection probability. No interpretation is fixed. The mathematics only requires boundedness.

\begin{definition}[Ambient distance]
For probability laws $P,Q$ on $Y^T$ and a test class $\F_T$, define
\[
    \Delta_{\F_T}(P,Q)
    := \sup_{f\in\F_T}\left| \mathbb E_P f-\mathbb E_Q f\right|.
\]
When $\F_T=[0,1]^{Y^T}$, this is the total variation distance $\TV(P,Q)$.
\end{definition}

The distance is not a property of the trace laws alone. It is a property of the trace laws under a chosen audience. If the audience can count exact timestamps, the distance may be large. If the audience only sees coarse phases, the same two laws may be close. This dependence is essential. A public system is not experienced by an all-seeing statistician. It is experienced through the tests available to its audience.

\begin{lemma}[Operational reading: hypothesis testing]\label{lem:ht}
Let $P,Q$ be laws on $Y^T$ and let $\F_T$ be a test class. Interpret $f\in\F_T$ as a randomized accept rule for the hypothesis that the trace was produced under $P$. Then for every $f\in\F_T$,
\[
    \underbrace{\mathbb E_Q f}_{\text{false acceptance}}
    \;+\;
    \underbrace{\mathbb E_P (1-f)}_{\text{false rejection}}
    \;\geq\; 1-\Delta_{\F_T}(P,Q).
\]
Thus $(\F_T,\eps)$-closeness of two trace laws states that no audience-implementable decision rule distinguishes them with total error below $1-\eps$.
\end{lemma}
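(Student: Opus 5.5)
The inequality follows directly from the definition of $\Delta_{\F_T}$, with no earlier result needed. Fix $f\in\F_T$. Since $f$ takes values in $[0,1]$ and $Y^T$ is finite, both expectations are finite sums, and linearity of expectation under $P$ gives $\mathbb E_P(1-f)=1-\mathbb E_P f$. The total error therefore rewrites as
\[
    \mathbb E_Q f+\mathbb E_P(1-f) \;=\; 1-\bigl(\mathbb E_P f-\mathbb E_Q f\bigr).
\]

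Next, bound the bracketed difference by its absolute value and then by the supremum over the class:
\[
    \mathbb E_P f-\mathbb E_Q f\;\le\;\left|\mathbb E_P f-\mathbb E_Q f\right|\;\le\;\Delta_{\F_T}(P,Q).
\]
This holds because $f$ itself belongs to $\F_T$. Substituting into the identity gives the claimed lower bound $1-\Delta_{\F_T}(P,Q)$ for every $f\in\F_T$.

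Symmetry of $\F_T$ is not needed, because the absolute value in the definition of $\Delta_{\F_T}$ already covers both orientations of the test. The same argument with the roles of $P$ and $Q$ exchanged gives the bound for accept rules aimed at $Q$.

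For the operational reading, suppose $\Delta_{\F_T}(P,Q)\le\eps$, which is what $(\F_T,\eps)$-closeness means. Then every audience-implementable rule $f\in\F_T$ has total error at least $1-\eps$. The only point that needs care is interpretive rather than technical: $f(y)$ is read as the acceptance probability of a randomized rule, so $\mathbb E_Q f$ is exactly the false-acceptance probability and $\mathbb E_P(1-f)$ the false-rejection probability, with the randomness averaged over both the trace and the rule. I do not expect any real obstacle.
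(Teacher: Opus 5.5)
Your proof is correct and is the same as the paper's: you rewrite the total error as $1-(\mathbb E_P f-\mathbb E_Q f)$ and bound the difference by $\Delta_{\F_T}(P,Q)$, using that $f\in\F_T$. Your remark that symmetry of $\F_T$ is not needed is also right, because the absolute value in $\Delta_{\F_T}$ already covers both orientations.
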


\begin{proof}
By definition $\mathbb E_P f-\mathbb E_Q f\leq \Delta_{\F_T}(P,Q)$. Rearranging gives $\mathbb E_Q f+1-\mathbb E_P f\geq 1-\Delta_{\F_T}(P,Q)$.
\end{proof}

\begin{remark}
This is the test-class restriction of the binary hypothesis-testing identity that underlies statistical readings of privacy definitions~\cite{wasserman2010}. The restriction is the point: the room is experienced through $\F_T$, not through the full power set.
\end{remark}

\begin{definition}[Cylinder]
For $I\subseteq \{1,\ldots,T\}$ and $u\in Y^I$, the cylinder determined by $(I,u)$ is
\[
    C(I,u)=\{y_{1:T}\in Y^T: y_i=u_i\text{ for every } i\in I\}.
\]
The full cylinders are the singleton cylinders with $I=\{1,\ldots,T\}$.
\end{definition}

\begin{definition}[Residue]
For laws $P,Q$ on $Y^T$ and an event $C\subseteq Y^T$, the signed residue of $C$ is
\[
    \res_{P,Q}(C)=P(C)-Q(C).
\]
Its magnitude $|\res_{P,Q}(C)|$ is the exposure carried by $C$.
\end{definition}

The word \emph{residue} is useful because many leaks are not direct labels. They are leftovers after the expected public story is subtracted from the observed one. A rare correction, a silence at the wrong tick, or a delayed handoff may carry a residue even if it is a valid move.

\begin{lemma}[Elementary properties]
For every test class $\F_T$ and laws $P,Q,H$ on $Y^T$:
\begin{enumerate}[label=(\alph*),leftmargin=2em]
\item $0\leq \Delta_{\F_T}(P,Q)\leq 1$.
\item $\Delta_{\F_T}(P,Q)=\Delta_{\F_T}(Q,P)$.
\item $\Delta_{\F_T}(P,H)\leq \Delta_{\F_T}(P,Q)+\Delta_{\F_T}(Q,H)$.
\item If $\F_T\subseteq \G_T$, then $\Delta_{\F_T}(P,Q)\leq \Delta_{\G_T}(P,Q)$.
\end{enumerate}
\end{lemma}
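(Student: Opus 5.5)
Each of the four claims follows from the definition of $\Delta_{\F_T}$ as a supremum of absolute differences of expectations. Write $g_f(P,Q):=|\mathbb E_P f-\mathbb E_Q f|$, so that $\Delta_{\F_T}(P,Q)=\sup_{f\in\F_T} g_f(P,Q)$. The plan is to prove a pointwise statement for a single fixed test $f$ and then pass to the supremum. This step needs care only in (a) and (c).

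For (a), nonnegativity holds because each $g_f$ is an absolute value. If $\F_T$ is empty, I read the supremum as $0$, so that $\Delta_{\F_T}$ takes values in $[0,1]$ as stated. For the upper bound, $f$ takes values in $[0,1]$, so both $\mathbb E_P f$ and $\mathbb E_Q f$ lie in $[0,1]$. Two numbers in $[0,1]$ differ by at most $1$, hence $g_f\le 1$ for every $f$. For (b), $g_f(P,Q)=g_f(Q,P)$ by the symmetry of the absolute value, and taking the supremum preserves this equality.

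For (c), I fix $f\in\F_T$ and insert $\mathbb E_Q f$:
\[
g_f(P,H)\le g_f(P,Q)+g_f(Q,H)\le \Delta_{\F_T}(P,Q)+\Delta_{\F_T}(Q,H).
\]
The right-hand side does not depend on $f$, so taking the supremum over $f$ gives the claim. This is the only step where the order of operations matters. One must bound each term by its own supremum before taking the outer supremum. The reverse order would fail, because the near-optimal tests for the pairs $(P,Q)$ and $(Q,H)$ need not coincide.

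For (d), the supremum is taken over a larger index set, so $\sup_{\F_T}g_f\le\sup_{\G_T}g_f$. This is immediate.

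There is no genuine obstacle. The only subtle points are the supremum exchange in (c) and the empty-class convention in (a). As a consistency check, choosing $\F_T=[0,1]^{Y^T}$ recovers the corresponding properties of $\TV$.
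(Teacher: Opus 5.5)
Your proof is correct and follows essentially the same route as the paper. Both arguments take boundedness, symmetry of the absolute value, and the real triangle inequality for a single fixed test $f$, pass to the supremum, and get (d) from enlarging the index set; your handling of the empty class and of bounding each term by its own supremum in (c) simply spells out what the paper's one-line proof leaves implicit.
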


\begin{proof}
Boundedness gives (a), symmetry of the absolute value gives (b), and the triangle inequality for real numbers gives (c). Inclusion of test classes gives (d).\end{proof}

\section{Ambient release systems}

We now describe the staged room. The definition is finite because finiteness forces all claims to be inspectable. There are no measurability exceptions hidden in the walls.

\begin{definition}[Ambient release system]
An ambient release system is a tuple
\[
    \A=(\Theta,R,S,A,Y,s_0,K,O,N,\rho),
\]
where:
\begin{enumerate}[label=(\roman*),leftmargin=2.2em]
\item $\Theta$ is a finite set of hidden loci.
\item $R$ is a finite set of visible roles.
\item $S$ is a finite state space with initial state $s_0\in S$.
\item $A$ is a finite action alphabet.
\item $Y$ is a finite observation alphabet.
\item $K:S\times A\times \Theta\to \Delta(S)$ is a transition kernel.
\item $O:S\times A\times R\times \Theta\to \Delta(Y)$ is an observation kernel.
\item $N\subseteq R\times Y^{<T}\times A$ is the norm gate. The statement $(r,h,a)\in N$ means that role $r$ may use action $a$ after visible history $h$.
\item $\rho\in\Delta(R)$ is a public role prior, used when the room assigns roles without revealing the hidden locus.
\end{enumerate}
Here $\Delta(E)$ denotes the set of probability distributions on a finite set $E$.
\end{definition}

The hidden locus is allowed to affect both state transition and emission. This is not because the locus is morally responsible for the trace. It is because the trace may be statistically shaped by conditions that are invisible to the audience. For instance, one locus may face a delay, another may face a stricter gate, and a third may have fewer safe acknowledgments available.

\begin{definition}[Admissible script]
Fix $T$. A script for role $r$ is a sequence of stochastic kernels
\[
    \pi_t(\cdot\mid h)\in \Delta(A),\qquad h\in Y^{t-1},\quad 1\leq t\leq T.
\]
It is $N$-admissible if
\[
    \pi_t(a\mid h)>0 \quad\Longrightarrow\quad (r,h,a)\in N
\]
for every $t,h,a$. The set of all $N$-admissible scripts for role $r$ is denoted by $\Pi_T(r)$.
\end{definition}

A script is not a hidden algorithm in the sense of a protected implementation. It is a public-form movement rule constrained by the room. It may be randomized, but only among actions allowed by the norm gate.

\begin{definition}[Trace law]
For hidden locus $\theta\in\Theta$, role $r\in R$, and script $\pi\in\Pi_T(r)$, the induced trace law $P_{\theta,r}^{\pi,T}$ on $Y^T$ is the marginal law of the process
\[
    S_0=s_0,
\]
then, for $1\leq t\leq T$,
\[
    A_t\sim \pi_t(\cdot\mid Y_{1:t-1}),\qquad
    Y_t\sim O(\cdot\mid S_{t-1},A_t,r,\theta),\qquad
    S_t\sim K(\cdot\mid S_{t-1},A_t,\theta).
\]
\end{definition}

The audience does not see $S_t$, $A_t$, $r$, or $\theta$ unless these are encoded in $Y_t$. The trace law is therefore the public shadow of the staged process.

\begin{definition}[Admissible trace set]
The admissible trace set of hidden locus $\theta$ at horizon $T$ is
\[
    \M_T(\theta)=\left\{P_{\theta,r}^{\pi,T}: r\in R,\ \pi\in\Pi_T(r)\right\}.
\]
Its choric hull is
\[
    \operatorname{Ch}_T(\theta)=\conv \M_T(\theta).
\]
A point in $\operatorname{Ch}_T(\theta)$ is a mixture of admissible role-script traces for locus $\theta$.
\end{definition}

The hull represents the chorus. A system may not be able to make every individual trace harmless, but it may be able to arrange a population of traces so that the public distribution does not single out a locus.

\section{The axiomatic floor}

The definitions above are not merely descriptive. They impose a small set of axioms on what counts as a room. We record them explicitly because every later certificate uses them. The point of the axioms is to prevent drift: a protected locus may be discussed only through visible traces, admissible continuations, mixtures of public scripts, and tests available to an audience.

\begin{axiom}[Finite public carrier]
For each horizon $T$, the public carrier is the finite set
\[
    \Om_T=Y^T.
\]
Every security statement at horizon $T$ is a statement about probability laws in $\Delta(\Om_T)$.
\end{axiom}

\begin{axiom}[Gate legality]
A script may place positive probability only on actions admitted by the norm gate. Equivalently,
\[
    \pi_t(a\mid h)>0 \quad\Longrightarrow\quad (r,h,a)\in N.
\]
No later mixture, quotient, or side channel may reintroduce mass on an action that the gate forbids at the moment of choice.
\end{axiom}

\begin{axiom}[External choric mixing]
If a public population can realize trace laws $P_1,\ldots,P_m$ for the same hidden locus, then it can realize every externally mixed law
\[
    \sum_{i=1}^m \alpha_i P_i,
    \qquad \alpha_i\geq 0,
    \qquad \sum_i\alpha_i=1.
\]
The mixing variable is not a new secret. It is a population-level or ceremony-level randomizer outside the single trace.
\end{axiom}

\begin{axiom}[Audience boundedness]
An audience at horizon $T$ is represented by bounded tests $f:\Om_T\to[0,1]$. If an audience can use a test $f$, then it can use its refusal $1-f$. Passing to convex mixtures of tests does not create a stronger audience.
\end{axiom}

\begin{axiom}[Parastatic naturality]
Any auxiliary release, delay, label, moderator note, or downstream republication is represented by a Markov link
\[
    L:\Om_T\to\Delta(\Xi)
\]
for some finite public side carrier $\Xi$. A link that is common to two loci is not allowed to increase their full-observation distance.
\end{axiom}

The axioms make the room finite-dimensional. This is not cosmetic. It means that exposure is not hidden in an informal narrative; it is a vector, a hyperplane, or a finite residue.

\begin{definition}[Trace polytope]
For a hidden locus $\theta$, the trace polytope at horizon $T$ is
\[
    C_T(\theta)=\operatorname{Ch}_T(\theta)\subseteq\Delta(\Om_T).
\]
Its extreme generators are the trace laws induced by deterministic admissible scripts.
\end{definition}

\begin{proposition}[Polytope consequence]
For every hidden locus $\theta$ and horizon $T$, $C_T(\theta)$ is a nonempty compact convex polytope. In particular, all infima and suprema in the definitions of choric distance, exposure radius, and room radius are attained.
\end{proposition}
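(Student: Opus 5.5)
The plan is to show that $\M_T(\theta)$ lies inside the convex hull of finitely many trace laws, namely those induced by deterministic admissible scripts. It then follows that $C_T(\theta)$ is the convex hull of a finite set. A convex hull of finitely many points in the finite-dimensional simplex $\Delta(\Om_T)\subset\R^{Y^T}$ is a polytope, hence compact and convex. Nonemptiness needs a separate argument. I will assume, as the definition of a room implicitly requires, that for some role $r$ and every history $h\in Y^{t-1}$ with $t\le T$ the gate admits at least one action, i.e.\ $\{a:(r,h,a)\in N\}\neq\emptyset$. Then $\Pi_T(r)\neq\emptyset$, so $\M_T(\theta)\neq\emptyset$. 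If the gate were empty at some reachable history, no script would exist; I will state this standing non-blocking assumption explicitly.

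The key step is a Kuhn-type decomposition. Fix $r$ and $\pi\in\Pi_T(r)$. A deterministic admissible script is a selector $\sigma$ assigning to each $t$ and $h\in Y^{t-1}$ an action $\sigma_t(h)$ with $(r,h,\sigma_t(h))\in N$. There are finitely many such selectors, because $Y^{<T}$ and $A$ are finite. I define weights
\[
  \lambda(\sigma)=\prod_{t=1}^T\prod_{h\in Y^{t-1}}\pi_t(\sigma_t(h)\mid h).
\]
These are product weights of independent draws, one per $(t,h)$. By gate legality, $\lambda(\sigma)>0$ only for admissible $\sigma$, and $\sum_\sigma\lambda(\sigma)=1$.

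I then claim $P^{\pi,T}_{\theta,r}=\sum_\sigma\lambda(\sigma)P^{\sigma,T}_{\theta,r}$. To verify this, write the law of the joint process $(A_{1:T},Y_{1:T},S_{0:T})$ as a product of kernels. In that product, $\pi$ enters only through the factors $\pi_t(a_t\mid y_{1:t-1})$, evaluated along the realized history. Under the product measure on selectors, the actions $\sigma_t(h)$ at distinct histories are independent. Along any single path, each history $y_{1:t-1}$ is visited at most once, so the probability that the selector agrees with $a_1,\dots,a_T$ along the path is exactly $\prod_t\pi_t(a_t\mid y_{1:t-1})$. Summing out the unvisited coordinates gives factor $1$. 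Marginalizing to $Y^T$ yields the identity.

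Because the mixture weights $\lambda$ are fixed in advance rather than chosen after seeing the trace, the deterministic laws are genuinely mixed, and $P^{\pi,T}_{\theta,r}$ lies in their convex hull. Hence $\M_T(\theta)\subseteq\conv D$, where $D$ is the finite set of deterministic-script laws over all $r\in R$. Since also $D\subseteq\M_T(\theta)$, we get $C_T(\theta)=\conv D$. This also justifies the remark that the extreme generators are deterministic-script laws; strictly, the extreme points form a subset of $D$. I expect this decomposition to be the main obstacle, specifically the bookkeeping that history-dependence does not break the product structure. That point is resolved by the observation that each history occurs at most once per path.

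The attainment claims follow by compactness. The choric distance, exposure radius and room radius are defined later, but I will argue generically that they are infima or suprema of functions continuous on products of such polytopes. The function $(P,Q)\mapsto\Delta_{\F_T}(P,Q)$ is a supremum of $1$-Lipschitz maps in $\ell^1$, since tests take values in $[0,1]$. It is therefore Lipschitz, hence continuous, on the compact set $C_T(\theta)\times C_T(\theta')$, and Weierstrass gives attainment. Suprema over the finite index sets $\Theta$ and $R$ are trivially attained.
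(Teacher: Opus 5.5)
Your proposal is correct and follows the paper's proof step for step. Like the paper, you presample one legal action per history to write each stochastic script as a mixture of deterministic scripts, take the finite convex hull over roles, impose a non-blocking gate assumption for nonemptiness, and use continuity plus compactness for attainment; your version of the assumption (legality at every history for some role) matches the literal definition of $\Pi_T(r)$ more closely than the paper's ``reachable'' phrasing.

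The one step worth making explicit concerns the nested exposure radius. There you need continuity of $P\mapsto\max_{\theta'\in B}\min_{Q\in C_T(\theta')}\Delta_{\F_T}(P,Q)$, and this holds because finite maxima and partial minima of uniformly Lipschitz functions over compact sets remain Lipschitz.
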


\begin{proof}
The history tree $Y^{<T}$ is finite and the action alphabet is finite. A deterministic admissible script chooses one legal action at each reachable history, so there are only finitely many such scripts. A stochastic admissible script is a product of local distributions over finite legal action sets. Sampling all local choices in advance expands it into a mixture of deterministic admissible scripts. Hence the induced trace laws of stochastic scripts lie in the convex hull of finitely many deterministic trace laws. This hull is a compact convex polytope in the finite-dimensional simplex $\Delta(\Om_T)$. Nonemptiness follows whenever every reachable gate has at least one legal action; otherwise the room has a dead history and is excluded at that horizon. Continuity of $\Delta_{\F_T}$ on a compact finite-dimensional set gives attainment.
\end{proof}

\begin{definition}[Legibility functional]
A test $f:\Om_T\to[0,1]$ induces the signed legibility functional
\[
    \ell_f(P,Q)=\mathbb E_P f-\mathbb E_Q f.
\]
It is a public hyperplane reading one law against another.
\end{definition}

\begin{lemma}[Convex closure of the audience]
Let $\operatorname{co}(\F_T)$ be the convex hull of $\F_T$. Then
\[
    \Delta_{\operatorname{co}(\F_T)}(P,Q)=\Delta_{\F_T}(P,Q).
\]
If $\F_T$ is closed under complements, then
\[
    \Delta_{\F_T}(P,Q)=\sup_{f\in\F_T}\big(\mathbb E_P f-\mathbb E_Q f\big).
\]
\end{lemma}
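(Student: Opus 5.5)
The plan is to prove the two claims separately, using only the linearity of the map $f\mapsto \ell_f(P,Q)=\mathbb E_P f-\mathbb E_Q f$ in the test $f$.

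\emph{First claim.} Since $\F_T\subseteq\operatorname{co}(\F_T)$, monotonicity (part (d) of the elementary properties lemma) immediately gives $\Delta_{\F_T}(P,Q)\le\Delta_{\operatorname{co}(\F_T)}(P,Q)$. For the reverse inequality, take an arbitrary $g\in\operatorname{co}(\F_T)$. By definition of the convex hull it is a finite combination $g=\sum_{i=1}^m\alpha_i f_i$ with $f_i\in\F_T$, $\alpha_i\ge 0$ and $\sum_i\alpha_i=1$. Note that $g$ still takes values in $[0,1]$, so it is a legitimate test. Because $\Om_T$ is finite, expectation is a finite sum and hence linear in the test, so $\ell_g(P,Q)=\sum_i\alpha_i\,\ell_{f_i}(P,Q)$. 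The triangle inequality then gives
\[
|\ell_g(P,Q)|\le\sum_i\alpha_i|\ell_{f_i}(P,Q)|\le\Delta_{\F_T}(P,Q).
\]
Taking the supremum over $g$ yields $\Delta_{\operatorname{co}(\F_T)}(P,Q)\le\Delta_{\F_T}(P,Q)$, and the two inequalities together give equality.

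\emph{Second claim.} Assume $\F_T$ is closed under complements. Trivially $\sup_{f}\ell_f(P,Q)\le\sup_f|\ell_f(P,Q)|$. For the other direction, take any $f\in\F_T$ with $\ell_f(P,Q)<0$. Then $1-f\in\F_T$, and since both $P$ and $Q$ are probability laws, $\mathbb E_P 1=\mathbb E_Q 1=1$, so
\[
\ell_{1-f}(P,Q)=-\ell_f(P,Q)=|\ell_f(P,Q)|.
\]
Hence every value $|\ell_f(P,Q)|$ is attained by some signed value $\ell_{f'}(P,Q)$ with $f'\in\F_T$, so $\sup_f|\ell_f|\le\sup_f\ell_f$. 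This proves the identity.

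There is no real obstacle here. The only points requiring care are that convex combinations of $[0,1]$-valued tests remain $[0,1]$-valued, so that $\operatorname{co}(\F_T)$ is a genuine test class, and that the argument uses finite combinations only, so no closure or measurability issue arises on the finite carrier. One might also note that the convex-hull step does not need symmetry of $\F_T$.
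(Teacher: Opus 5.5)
Your proposal is correct and follows essentially the paper's route. The paper also uses linearity of $f\mapsto\mathbb E_P f-\mathbb E_Q f$, so the supremum over $\operatorname{co}(\F_T)$ is already reached on the generators, and the complement swap $f\mapsto 1-f$ for the signed form. Your bound $|\ell_g|\le\sum_i\alpha_i|\ell_{f_i}|$ for the first claim is slightly cleaner than the paper's appeal to the complement rule there, because the first claim does not assume $\F_T$ is closed under complements.
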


\begin{proof}
The map $f\mapsto \mathbb E_P f-\mathbb E_Q f$ is affine. The supremum of an affine functional over a convex hull is already attained on the original set of generators. The absolute value is handled by the complement rule: replacing $f$ by $1-f$ changes the sign of the difference.
\end{proof}

\begin{theorem}[Dual residue certificate]
Let $C,D\subseteq\Delta(\Om_T)$ be compact convex trace polytopes and define
\[
    d_{\rm TV}(C,D)=\inf_{P\in C,Q\in D}\TV(P,Q).
\]
If $d_{\rm TV}(C,D)>0$, then there exists a bounded public score $g:\Om_T\to\R$ with oscillation
\[
    \osc(g)=\max_{\omega}g(\omega)-\min_{\omega}g(\omega)\leq 1
\]
such that, after orienting $C$ and $D$,
\[
    \inf_{P\in C}\mathbb E_P g-
    \sup_{Q\in D}\mathbb E_Q g
    \geq d_{\rm TV}(C,D).
\]
For any closest pair $(P^\star,Q^\star)$, the event
\[
    E^\star=\{\omega:P^\star(\omega)>Q^\star(\omega)\}
\]
attains the pairwise gap
\[
    P^\star(E^\star)-Q^\star(E^\star)=\TV(P^\star,Q^\star).
\]
The first object is a uniform separating certificate for the two polytopes; the second object is the concrete residue carrier for the closest pair.  They need not be the same object.
\end{theorem}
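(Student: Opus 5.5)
The plan is a minimax argument on the finite-dimensional carrier $\R^{\Om_T}$. The key identity is the dual formula for total variation on a finite set,
\[
    \TV(P,Q)=\max_{g:\Om_T\to[0,1]}\big(\mathbb E_P g-\mathbb E_Q g\big),
\]
and the set $G=[0,1]^{\Om_T}$ is compact and convex. I would write
\[
    d_{\rm TV}(C,D)=\min_{(P,Q)\in C\times D}\ \max_{g\in G}\ \big(\mathbb E_P g-\mathbb E_Q g\big).
\]
The min is attained by compactness of the polytopes (Polytope consequence). The payoff is bilinear in $(P,Q)$ and $g$, and both $C\times D$ and $G$ are compact convex. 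So von Neumann's minimax theorem, or Sion's, lets me exchange min and max. This gives some $g^\star\in G$ with
\[
    \min_{P\in C,Q\in D}\big(\mathbb E_P g^\star-\mathbb E_Q g^\star\big)=d_{\rm TV}(C,D).
\]
The minimum over a product splits as $\inf_{P\in C}\mathbb E_P g^\star-\sup_{Q\in D}\mathbb E_Q g^\star$. Since $g^\star$ takes values in $[0,1]$, we get $\osc(g^\star)\leq 1$, which is the uniform certificate. The phrase ``after orienting'' only covers the labeling of the two polytopes. With the signed payoff above, no orientation choice is needed, because $\TV$ is already the max of the signed difference over $G$, which is closed under $g\mapsto 1-g$. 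The hypothesis $d_{\rm TV}>0$ is used only to make the certificate strictly separating.

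An alternative route, in case one prefers to avoid minimax, is the separating hyperplane theorem. I would apply it to the compact convex set $C-D\subset\{\mu:\mu(\Om_T)=0\}$ and the open $\ell_1$-ball of radius $2d$. Take a functional $g$ with $\|g\|_\infty$-dual normalization, and use the fact that constants act trivially on signed measures of zero mass. Then shift and scale $g$ so that $\osc(g)\le1$; for zero-mass measures, $\sup_{\osc g\le1}\int g\,d\mu=\tfrac12\|\mu\|_1$. I would present the minimax version and mention this one as the geometric reading.

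The second claim is elementary. For any pair, take $E^\star=\{P^\star>Q^\star\}$. Then
\[
    P^\star(E^\star)-Q^\star(E^\star)=\sum_\omega (P^\star-Q^\star)_+=\tfrac12\|P^\star-Q^\star\|_1=\TV(P^\star,Q^\star),
\]
because $\sum_\omega (P^\star-Q^\star)(\omega)=0$. Closest pairs exist by compactness. Finally, I would remark that the indicator $\one_{E^\star}$ need not separate all of $C$ from $D$ uniformly, whereas $g^\star$ does. This justifies the ``need not be the same object'' clause; a small two-point polytope example suffices to illustrate it, though the claim itself needs no proof.

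The main obstacle is the exchange of $\min$ and $\max$: checking compactness and convexity of both sides and bilinearity of the payoff, and making sure the oscillation normalization matches the $\TV$ convention (sup over $[0,1]$-valued tests, equivalently half the $\ell_1$ norm). Everything else is bookkeeping.
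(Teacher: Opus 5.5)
Your proof is correct, but the route you plan to present is not the paper's. The ``geometric reading'' you relegate to a remark is essentially what the paper does. The paper works in the mass-zero space $V$ and separates the compact convex set $C-D$ from the open $\TV$-ball of radius $d_{\rm TV}(C,D)$. It then has to identify the dual unit ball of $\TV$ on that space as $\{g:\osc(g)\le 1\}$ modulo constants, and use the support function of that ball to read off the margin.

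You instead put the duality into the variational formula $\TV(P,Q)=\max_{g\in[0,1]^{\Om_T}}(\mathbb E_Pg-\mathbb E_Qg)$ once. Von Neumann's theorem then exchanges the order of optimization for the bilinear payoff on $(C\times D)\times[0,1]^{\Om_T}$.

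This buys several things:
\begin{itemize}
\item The certificate is already $[0,1]$-valued, i.e.\ a genuine audience test, with margin $d_{\rm TV}(C,D)$ obtained directly and no shift-and-scale bookkeeping.
\item The orientation clause becomes vacuous by complement symmetry, as you note.
\item The saddle-point structure ties the two objects of the statement together. For any closest pair, $\mathbb E_{P^\star}g^\star-\mathbb E_{Q^\star}g^\star$ is at least $d_{\rm TV}(C,D)$ by the uniform bound, and at most $\TV(P^\star,Q^\star)=d_{\rm TV}(C,D)$ because $g^\star$ is $[0,1]$-valued. This forces $g^\star=1$ on $\{P^\star>Q^\star\}$ and $g^\star=0$ on $\{P^\star<Q^\star\}$. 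So $g^\star$ and $\one_{E^\star}$ can differ only on the tie set $\{P^\star=Q^\star\}$, which is the precise content of ``need not be the same object.''
\end{itemize}

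The paper's route, for its part, gives the hyperplane-versus-norm-ball picture that it reuses for the finite separation certificate on projected polytopes (an $\ell_\infty$-neighbourhood with $\ell_1$-normalized multipliers). Your argument adapts there just as well, by maximizing over the $\ell_1$ unit ball of multipliers. In depth the two routes are equivalent, since the minimax theorem is itself proved by separation.
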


\begin{proof}
Work in the vector space $V=\{\mu\in\R^{\Om_T}:\sum_\omega\mu(\omega)=0\}$ of signed mass-zero measures and write $H=C-D=\{P-Q:P\in C,Q\in D\}$.  The set $H$ is compact and convex.  The assumption $d_{\rm TV}(C,D)>0$ means that $0\notin H$ and that the closed TV ball of radius $d_{\rm TV}(C,D)$ around zero touches $H$ but does not enter its interior.  Finite-dimensional convex separation gives a linear functional $g$ separating $H$ from the open TV ball with dual norm at most one.  The dual unit ball of TV on $V$ is the set of functions with oscillation at most one, because adding a constant to $g$ does not change $\langle g,\mu\rangle$ for $\mu\in V$.  Hence
\[
    \langle g,P-Q\rangle\geq d_{\rm TV}(C,D)
    \qquad(P\in C,Q\in D),
\]
which is the displayed uniform margin.  For the closest pair, the usual finite formula for total variation gives
\[
    \TV(P^\star,Q^\star)=\sum_{\omega:(P^\star-Q^\star)(\omega)>0}
        (P^\star(\omega)-Q^\star(\omega)),
\]
which is exactly the event gap on $E^\star$.
\end{proof}

\begin{corollary}[No invisible positive distance]
If two loci have positive full-audience choric distance, then there is a finite public event whose probability gap is positive for every pair of optimally chosen choric laws. Thus every positive exposure floor has a public carrier.
\end{corollary}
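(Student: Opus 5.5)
The corollary follows from the Dual residue certificate theorem, applied to the two full-audience trace polytopes $C=C_T(\theta)$ and $D=C_T(\theta')$.

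\emph{Setup.} By the Polytope consequence proposition both sets are nonempty compact convex polytopes in $\Delta(\Om_T)$. The full-audience choric distance is $d_{\rm TV}(C,D)$, and it is attained at a closest pair $(P^\star,Q^\star)$. By hypothesis this distance is positive.

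\emph{Choice of event.} The theorem gives a score $g$ with $\osc(g)\leq 1$ and uniform margin $\langle g,P-Q\rangle\geq d_{\rm TV}(C,D)$. It also gives the event $E^\star=\{P^\star>Q^\star\}$, which carries gap $\TV(P^\star,Q^\star)=d_{\rm TV}(C,D)>0$. The corollary asks for one event that works for every optimal pair, so the plan is to build it from $g$ rather than from a single $E^\star$:
\begin{enumerate}[label=(\roman*),leftmargin=2.2em]
\item Fix any optimal pair $(P,Q)$. It satisfies
\[
d\leq\langle g,P-Q\rangle\leq \osc(g)\,\TV(P,Q)\leq d,
\]
where $d=d_{\rm TV}(C,D)$.
\item Hence every inequality in (i) is tight. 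Normalise $g$ so that $\min g=0$ and $\max g=\osc(g)=1$.
\item Equality in the TV--oscillation duality forces $P-Q$ to be positive only where $g=1$ and negative only where $g=0$.
\item Set $E=\{\omega:g(\omega)=1\}$. Then $P(E)-Q(E)=\TV(P,Q)=d>0$.
\end{enumerate}
Since $g$ does not depend on the pair, $E$ is a single finite public event whose gap is positive for every optimal pair. This gives the ``every pair'' clause.

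\emph{Main obstacle.} The hard step is the equality-case argument in (iii). From
\[
\textstyle\sum_\omega g(\omega)(P-Q)(\omega)=\tfrac12\sum_\omega|(P-Q)(\omega)|
\]
with $0\leq g\leq 1$ and $\sum_\omega (P-Q)(\omega)=0$, one must deduce the support conditions on $g$. I would write
\[
\langle g,\mu\rangle=\langle g-\tfrac12,\mu\rangle,\qquad \mu=P-Q,
\]
bound $|g-\tfrac12|\leq\tfrac12$, and read off that equality forces $g-\tfrac12=\tfrac12\,\mathrm{sign}(\mu)$ wherever $\mu\neq 0$. A subsidiary check is that the $\osc(g)\leq1$ bound from the theorem is tight. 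This holds because $d>0$ and the margin equals $d$.

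\emph{Fallback.} If the equality analysis turns out delicate, I would settle for the weaker reading that each optimal pair has its own carrier $E^\star$, which the theorem provides directly.
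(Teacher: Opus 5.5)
Your proposal is correct. It uses the same theorem as the paper, but it proves more than the paper's citation does.

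The paper's entire proof is ``apply the dual residue certificate to $C_T(\theta)$ and $C_T(\theta')$.'' Read literally, that citation gives one of two objects. The first is the uniform score $g$, which is not an event. The second is the pair-dependent carrier $E^\star=\{P^\star>Q^\star\}$. The theorem itself warns that these ``need not be the same object.'' So the one-line proof secures only the weaker reading in which each optimal pair has its own carrier, which is your fallback.

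Your equality-case step closes the gap to the statement as written, with one event for all optimal pairs. For any optimal pair, the chain $d\leq\langle g,P-Q\rangle\leq\osc(g)\,\TV(P,Q)\leq d$ forces $\osc(g)=1$, using $d>0$. After normalising $g$ to $[0,1]$, the same chain forces $g=1$ on $\{P>Q\}$ and $g=0$ on $\{P<Q\}$. Because $g$ is fixed before any pair is chosen, $E=\{g=1\}$ carries the full gap $d$ uniformly.

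This is a complementary-slackness argument. It buys the $\exists$-event-$\forall$-pair quantifier order for a few extra lines. It also shows that every threshold set $\{g\geq t\}$ with $0<t\leq1$ carries the same gap $d$ for every optimal pair.

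Two supporting points are handled correctly. You rightly rely on compactness (the Polytope consequence) for the existence of optimal pairs. The ``after orienting'' caveat in the theorem affects only the sign of the gap, not its positivity.
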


\begin{proof}
Apply the dual residue certificate to $C_T(\theta)$ and $C_T(\theta')$.
\end{proof}

\begin{theorem}[Radius as a minimax program]
For a finite event audience generated by $G_1,\ldots,G_k\subseteq\Om_T$, the exposure radius of $\theta$ against $B$ is the optimum of the finite program
\[
\min_{P\in C_T(\theta)}\ \max_{\theta'\in B}\ \min_{Q\in C_T(\theta')}\ \max_{1\leq i\leq k}
    |P(G_i)-Q(G_i)|.
\]
If each polytope is written by deterministic-script generators, this program is a nested finite-dimensional linear optimization problem; fixing the outer comparison locus makes it a linear program after epigraph variables are introduced.
\end{theorem}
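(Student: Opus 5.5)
The plan is to unfold the definitions and then linearize. The exposure radius of $\theta$ against a comparison set $B$ is not yet formally displayed. I will read it as the natural quantity suggested by the surrounding text:
\[
\min_{P\in C_T(\theta)}\max_{\theta'\in B}\Delta_{\F_T}\bigl(P,C_T(\theta')\bigr),
\qquad
\Delta_{\F_T}(P,C)=\min_{Q\in C}\Delta_{\F_T}(P,Q).
\]
Here $\F_T$ is the event audience, i.e.\ the indicators $\one_{G_1},\ldots,\one_{G_k}$, closed under complements and convex mixtures.

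The first step reduces the audience to its generators. By the lemma on convex closure of the audience, passing to $\operatorname{co}(\F_T)$ does not change $\Delta_{\F_T}$. The complement $1-\one_{G_i}$ only flips the sign of $P(G_i)-Q(G_i)$. Hence $\Delta_{\F_T}(P,Q)=\max_i|P(G_i)-Q(G_i)|$, and substituting this into the radius gives the displayed nested program verbatim.

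The second step is attainment. By the polytope consequence proposition, each $C_T(\cdot)$ is a nonempty compact polytope. The map $(P,Q)\mapsto\max_i|P(G_i)-Q(G_i)|$ is continuous and convex. So the inner minimum over $Q$ is attained, and the resulting function of $P$ is continuous. The maximum over the finite set $B$ preserves continuity, and the outer minimum over the compact $C_T(\theta)$ is attained. This justifies writing $\min$ and $\max$ rather than $\inf$ and $\sup$.

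The third step is linearization. I write $P=\sum_j\alpha_jP_j$ and $Q=\sum_l\beta_lQ_l$ over deterministic-script generators, with $\alpha$ and $\beta$ in simplices. Then $P(G_i)-Q(G_i)$ is linear in $(\alpha,\beta)$. For a fixed outer comparison locus $\theta'$, introduce an epigraph variable $s$ with the constraints
\[
s\ge \pm\Bigl(\textstyle\sum_j\alpha_jP_j(G_i)-\sum_l\beta_lQ_l(G_i)\Bigr)\quad\text{for all } i.
\]
Minimizing $s$ over $(\alpha,\beta,s)$ is then a finite LP.

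The main obstacle is the interleaving of $\max_{\theta'}$ between two minimizations. The inner $\min_Q$ cannot simply be merged with the outer $\min_P$, because each $\theta'$ gets its own $Q$. I would handle this by introducing separate weights $\beta^{(\theta')}$, one per $\theta'\in B$, together with a common epigraph variable $t\ge s_{\theta'}$. Since $B$ is finite and the choice of $Q$ for one $\theta'$ does not constrain the choice for another, the max of mins equals a joint minimum over $(\alpha,(\beta^{(\theta')})_{\theta'},t)$. This is legitimate because each inner minimization depends on $P$ only, and the value $\max_{\theta'}\min_{\beta^{(\theta')}}$ equals $\min_{(\beta^{(\theta')})}\max_{\theta'}$ when the minimizations are over independent factors.

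This gives a single LP, which is in fact stronger than the stated claim. It also exhibits the nested structure asserted in the theorem: fixing $\theta'$ yields exactly the LP built in the third step.
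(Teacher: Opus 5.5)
Your proposal is correct and follows essentially the paper's own argument. For the event audience, $\Delta_{\F_T}(P,Q)=\max_i|P(G_i)-Q(G_i)|$; event masses are linear in the deterministic-generator weights; an epigraph variable linearizes the $\ell_\infty$ gap for a fixed comparison locus; and compactness of the trace polytopes gives attainment. Your reading of the exposure radius also matches the paper's later definition $\inf_P\sup_{\theta'}\inf_Q\Delta_{\F_T}$. Your extra step is also valid: you give each $\theta'\in B$ its own weight vector $\beta^{(\theta')}$ and a shared epigraph variable, using $\max_{\theta'}\min_{\beta^{(\theta')}}=\min_{(\beta^{(\theta')})_{\theta'}}\max_{\theta'}$ over independent factors. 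This shows the whole radius is a single linear program, which is strictly stronger than the paper's ``nested'' formulation.
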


\begin{proof}
For an event audience, $\Delta_{\F_T}$ is exactly the maximum absolute difference of the listed event masses. Each event mass is a linear function of the mixture weights over deterministic-script generators. The maximum absolute value is linearized by a scalar epigraph variable $t$ with constraints
\[
    -t\leq P(G_i)-Q(G_i)\leq t,
    \qquad i=1,\ldots,k.
\]
The remaining statements follow from compactness of the trace polytopes and the preceding polytope consequence.
\end{proof}

\section{Canonical certificate calculus}

The preceding axioms become useful only after they are turned into a certificate calculus.  This section supplies that calculus.  It is the mathematical core of the manuscript: masking, attention loss, localization failure, and delayed pressure are all expressed as statements about affine images of finite polytopes.

Let
\[
    V_T=\left\{\mu\in\R^{\Om_T}:\sum_{\omega\in\Om_T}\mu(\omega)=0\right\}
\]
be the vector space of signed trace differences.  If $P,Q\in\Delta(\Om_T)$, then $P-Q\in V_T$.  For a finite audience basis $\F_T=\{f_1,\ldots,f_m\}$ define the measurement operator
\[
    \Phi_{\F_T}:\Delta(\Om_T)\to\R^m,
    \qquad
    \Phi_{\F_T}(P)=\bigl(\mathbb E_P f_1,\ldots,\mathbb E_P f_m\bigr).
\]
On signed differences write
\[
    \|\mu\|_{\F_T}=\|\Phi_{\F_T}(\mu)\|_\infty
    =\max_{1\leq i\leq m}|\mathbb E_\mu f_i|.
\]
This is a seminorm.  Its kernel is the set of signed residues that the audience basis cannot read.

\begin{theorem}[Measurement-polytope equivalence]
Let $C,D\subseteq\Delta(\Om_T)$ be compact convex trace sets and let $\F_T=\{f_1,\ldots,f_m\}$ be a finite audience basis.  Define the projected polytopes
\[
    K_C=\Phi_{\F_T}(C),
    \qquad
    K_D=\Phi_{\F_T}(D)
    \subseteq\R^m.
\]
Then
\[
    \inf_{P\in C,Q\in D}\Delta_{\F_T}(P,Q)
    =
    \operatorname{dist}_\infty(K_C,K_D),
\]
where
\[
    \operatorname{dist}_\infty(K_C,K_D)
    =\inf_{x\in K_C,y\in K_D}\|x-y\|_\infty.
\]
Consequently, perfect choric masking for this audience is equivalent to
\[
    K_C\cap K_D\neq\varnothing,
\]
and $\eps$-masking is equivalent to the two projected polytopes having $\ell_\infty$ distance at most $\eps$.
\end{theorem}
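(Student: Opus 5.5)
The plan is to observe that $\Phi_{\F_T}$ is the restriction to $\Delta(\Om_T)$ of a linear map $\R^{\Om_T}\to\R^m$, $\mu\mapsto(\langle f_1,\mu\rangle,\ldots,\langle f_m,\mu\rangle)$. The ambient distance for a finite basis is then a pointwise identity in measurement space. For $P,Q\in\Delta(\Om_T)$, by definition
\[
    \Delta_{\F_T}(P,Q)=\max_{1\leq i\leq m}|\mathbb E_P f_i-\mathbb E_Q f_i|
    =\|\Phi_{\F_T}(P)-\Phi_{\F_T}(Q)\|_\infty,
\]
so $\Delta_{\F_T}(P,Q)=\|P-Q\|_{\F_T}$. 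No appeal to the convex-closure lemma is needed, since the basis is finite and the absolute value is already inside the maximum.

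The main identity then follows by reparametrizing the infimum. The map $(P,Q)\mapsto(\Phi_{\F_T}(P),\Phi_{\F_T}(Q))$ sends $C\times D$ onto $K_C\times K_D$ by the definition of $K_C,K_D$ as images. Because the objective depends on $(P,Q)$ only through this image, the infimum over $C\times D$ equals the infimum of $\|x-y\|_\infty$ over $K_C\times K_D$, which is $\operatorname{dist}_\infty(K_C,K_D)$. I will write this as two inequalities: every pair $(P,Q)$ yields a pair $(x,y)$ with the same value, and every $(x,y)$ has a preimage pair with the same value.

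For the consequences, compactness is the one point needing care, and it is where I expect the only real obstacle. $C$ and $D$ are compact convex (for trace polytopes, by the polytope consequence proposition), and $\Phi_{\F_T}$ is linear hence continuous. So $K_C,K_D$ are compact convex subsets of $\R^m$; they are polytopes when $C,D$ are, since linear images of polytopes are polytopes. The continuous function $\|x-y\|_\infty$ attains its infimum on the compact set $K_C\times K_D$, and this attainment is what upgrades ``infimum zero'' to an actual common point. Perfect masking, meaning there exist $P\in C$ and $Q\in D$ with $\Delta_{\F_T}(P,Q)=0$, is therefore equivalent to the existence of $x\in K_C$ and $y\in K_D$ with $x=y$, i.e.\ $K_C\cap K_D\neq\varnothing$. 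Likewise, $\eps$-masking, meaning $\min_{P,Q}\Delta_{\F_T}(P,Q)\leq\eps$ with the minimum attained, is equivalent to $\operatorname{dist}_\infty(K_C,K_D)\leq\eps$.

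I will also remark that the mixtures realizing an intersection point are admissible by the external choric mixing axiom. This explains why convexity of $C$ and $D$, and hence of $K_C$ and $K_D$, is the natural hypothesis, even though the identity itself needs only that $K_C,K_D$ are images.
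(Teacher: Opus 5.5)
Your proposal is correct and follows essentially the same route as the paper. Both arguments use the pointwise identity $\Delta_{\F_T}(P,Q)=\|\Phi_{\F_T}(P)-\Phi_{\F_T}(Q)\|_\infty$, pass the infimum through the surjection $C\times D\to K_C\times K_D$, and then use compactness of the images to turn zero distance into intersection and to make the $\eps$-threshold statement exact. You spell out the attainment step more carefully than the paper does, but the argument is the same.
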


\begin{proof}
For $P\in C$ and $Q\in D$,
\[
    \Delta_{\F_T}(P,Q)
    =\max_i|\mathbb E_P f_i-\mathbb E_Q f_i|
    =\|\Phi_{\F_T}(P)-\Phi_{\F_T}(Q)\|_\infty.
\]
Taking the infimum over $P,Q$ gives exactly the distance between the images.  The zero-distance statement follows because $K_C,K_D$ are compact; zero distance is therefore intersection.  The approximate statement is the same identity with threshold $\eps$.
\end{proof}

\begin{theorem}[Finite separation certificate]
Suppose $K_C,K_D\subseteq\R^m$ are the projected polytopes above and
\[
    \operatorname{dist}_\infty(K_C,K_D)>\eps.
\]
Then there exists a vector $\lambda\in\R^m$ with $\|\lambda\|_1\leq1$ and a scalar $a\in\R$ such that, after orienting the two polytopes,
\[
    \lambda\cdot x\geq a+\eps
    \quad(x\in K_C),
    \qquad
    \lambda\cdot y\leq a
    \quad(y\in K_D).
\]
Equivalently, the signed audience score
\[
    g_\lambda=\sum_{i=1}^m\lambda_i f_i
\]
separates the two trace sets by margin at least $\eps$:
\[
    \inf_{P\in C}\mathbb E_P g_\lambda
    -
    \sup_{Q\in D}\mathbb E_Q g_\lambda
    \geq\eps.
\]
If the original $f_i$ take values in $[0,1]$, then $g_\lambda$ has oscillation at most one after subtracting a constant, so it is a bounded public certificate.
\end{theorem}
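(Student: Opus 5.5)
The plan is to reduce everything to one separation in $\R^m$ between the difference set of the projected polytopes and an $\ell_\infty$ ball. Then read the $\ell_1$ normalization of the separating vector as the dual norm of $\ell_\infty$.

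First, $K_C=\Phi_{\F_T}(C)$ and $K_D=\Phi_{\F_T}(D)$ are compact and convex. They are linear images of compact convex sets, and by the Polytope consequence they are in fact polytopes. Hence the difference set
\[
H=K_C-K_D=\{x-y:\ x\in K_C,\ y\in K_D\}
\]
is compact and convex. The hypothesis $\operatorname{dist}_\infty(K_C,K_D)>\eps$ says exactly that $H$ is disjoint from the closed ball $B_\eps=\{z:\|z\|_\infty\leq\eps\}$. Both sets are compact and convex, so strict finite-dimensional separation gives a nonzero $\lambda\in\R^m$ with
\[
\min_{h\in H}\lambda\cdot h>\max_{z\in B_\eps}\lambda\cdot z=\eps\|\lambda\|_1 .
\]
The equality is the $\ell_\infty/\ell_1$ duality, attained at $z=\eps\operatorname{sign}(\lambda)$. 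Rescaling to $\|\lambda\|_1=1$ gives $\lambda\cdot x-\lambda\cdot y>\eps$ for all $x\in K_C$ and $y\in K_D$. We then set $a=\max_{y\in K_D}\lambda\cdot y$, which is attained by compactness, and this yields both displayed inequalities. The ``orientation'' clause only absorbs the choice of which polytope plays the role of $C$; equivalently, replace $\lambda$ by $-\lambda$. If the statement is read as allowing $\eps=0$ or $\eps<0$, the same argument works with $B_\eps$ replaced by the point $0$, or trivially.

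Second, the translation back to trace sets is pure linearity. For $P\in C$ we have $\lambda\cdot\Phi_{\F_T}(P)=\sum_i\lambda_i\mathbb E_P f_i=\mathbb E_P g_\lambda$. Taking the infimum over $C$ and the supremum over $D$ converts the separation into
\[
\inf_{P\in C}\mathbb E_P g_\lambda-\sup_{Q\in D}\mathbb E_Q g_\lambda\geq\eps .
\]
For the boundedness claim, note that each $f_i$ has $\osc(f_i)\leq1$. Oscillation is a seminorm, so $\osc(g_\lambda)\leq\sum_i|\lambda_i|\osc(f_i)\leq\|\lambda\|_1\leq1$. Subtracting $\min g_\lambda$ therefore puts $g_\lambda$ into $[0,1]$ without changing any difference of expectations. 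As in the proof of the Dual residue certificate, constants pair to zero with $V_T$.

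The only step needing care is the first one: obtaining the margin with the correct constant, rather than mere strict separation. This is exactly where the choice of the $\ell_\infty$ ball and its support function $\eps\|\lambda\|_1$ matters. The normalization $\|\lambda\|_1\leq1$ is forced by that duality, and it is also what guarantees the oscillation bound in the last step.
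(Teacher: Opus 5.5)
Your proof is correct and is essentially the paper's argument. The paper separates $K_C$ from the closed $\ell_\infty$ $\eps$-neighborhood $K_D+B_\eps$, which is the same geometric statement as your separation of $K_C-K_D$ from $B_\eps$, and both read off the margin $\eps\|\lambda\|_1$ from the support function of the $\ell_\infty$ ball before normalizing. Your explicit seminorm bound $\osc(g_\lambda)\leq\|\lambda\|_1\leq1$ also supplies the oscillation claim, which the paper states without proof.
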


\begin{proof}
The closed $\ell_\infty$ $\eps$-neighborhood of $K_D$ is convex and disjoint from $K_C$.  By finite-dimensional hyperplane separation, there are $\lambda\neq0$ and $a$ with $\lambda\cdot x>a\geq\lambda\cdot y$ for $x\in K_C$ and $y$ in that neighborhood.  The support function of the $\ell_\infty$ unit ball is the $\ell_1$ norm, so requiring separation from the whole $\eps$-neighborhood normalizes the margin by $\eps\|\lambda\|_1$.  Scaling gives $\|\lambda\|_1\leq1$ and the displayed inequalities.  Substituting $x=\Phi_{\F_T}(P)$ and $y=\Phi_{\F_T}(Q)$ gives the signed audience score.
\end{proof}

\begin{corollary}[Kernel obstruction]
Let $C,D$ be compact convex trace sets.  Perfect masking for a finite audience basis holds if and only if
\[
    (C-D)\cap\Ker\Phi_{\F_T}\neq\varnothing,
\]
where $C-D=\{P-Q:P\in C,Q\in D\}$.  Approximate masking means that $C-D$ intersects the $\eps$-tube around this kernel in the seminorm $\|\cdot\|_{\F_T}$.
\end{corollary}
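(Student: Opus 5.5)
The plan is to reduce the corollary to the measurement-polytope equivalence theorem by exploiting linearity of $\Phi_{\F_T}$. First, extend $\Phi_{\F_T}$ from $\Delta(\Om_T)$ to all of $\R^{\Om_T}$ by the same formula $\mu\mapsto(\mathbb E_\mu f_1,\ldots,\mathbb E_\mu f_m)$, where $\mathbb E_\mu f_i=\sum_\omega f_i(\omega)\mu(\omega)$. This extension is linear, so for $P\in C$ and $Q\in D$ we have $\Phi_{\F_T}(P-Q)=\Phi_{\F_T}(P)-\Phi_{\F_T}(Q)$. Its kernel, restricted to $V_T$, is exactly the set of signed residues with $\|\mu\|_{\F_T}=0$. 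The kernel is therefore read on $V_T$; since $C-D\subseteq V_T$ automatically, no ambiguity arises.

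For the exact statement, chain equivalences. By the measurement-polytope equivalence, perfect masking means $K_C\cap K_D\neq\varnothing$. This holds if and only if there exist $P\in C$ and $Q\in D$ with $\Phi_{\F_T}(P)=\Phi_{\F_T}(Q)$. By linearity this is the same as $\Phi_{\F_T}(P-Q)=0$, that is, $P-Q\in(C-D)\cap\Ker\Phi_{\F_T}$. Both directions are immediate, because every element of $C-D$ has the form $P-Q$ by definition.

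For the approximate statement, define the $\eps$-tube as $\{\mu\in V_T:\inf_{\nu\in\Ker\Phi_{\F_T}}\|\mu-\nu\|_{\F_T}\leq\eps\}$. Because $\|\cdot\|_{\F_T}$ vanishes on the kernel, the triangle inequality gives $\|\mu-\nu\|_{\F_T}=\|\mu\|_{\F_T}$ for every $\nu$ in the kernel. The tube is therefore simply $\{\mu:\|\mu\|_{\F_T}\leq\eps\}$. Hence $C-D$ meets the tube if and only if some pair satisfies $\|\Phi_{\F_T}(P)-\Phi_{\F_T}(Q)\|_\infty\leq\eps$. Compactness of $C$ and $D$ ensures the infimum in the measurement-polytope equivalence is attained, so this is equivalent to $\operatorname{dist}_\infty(K_C,K_D)\leq\eps$, which is $\eps$-masking by that theorem.

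The only step needing care is the tube definition. If the tube were instead defined through a different norm, such as TV distance to the kernel, the equivalence would fail in general. I will therefore fix the seminorm definition explicitly, as the statement indicates, and use the observation that a seminorm is constant on cosets of its kernel. I will also state the attainment of the infimum explicitly rather than leave it implicit.
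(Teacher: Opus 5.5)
Your proof is correct and follows the paper's argument: linearity of $\Phi_{\F_T}$ turns $\Phi_{\F_T}(P)=\Phi_{\F_T}(Q)$ into $P-Q\in\Ker\Phi_{\F_T}$, and the approximate clause is read off from the seminorm distance to the kernel. The paper's one-line proof leaves two points implicit, and you make both explicit: the $\eps$-tube collapses to the ball $\{\mu:\|\mu\|_{\F_T}\leq\eps\}$ because a seminorm is constant on cosets of its kernel, and compactness makes the infimum attained, so ``some pair within $\eps$'' matches ``distance at most $\eps$''.
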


\begin{proof}
The equality $\Phi_{\F_T}(P)=\Phi_{\F_T}(Q)$ is equivalent to $P-Q\in\Ker\Phi_{\F_T}$.  The approximate statement is the definition of the seminorm distance to the kernel.
\end{proof}

\begin{definition}[Layer-localization certificate]
Let $L$ be a finite layer alphabet and suppose each layer value $\ell\in L$ has a trace polytope $C_\ell\subseteq\Delta(\Om_T)$.  The value $\ell_0$ is $(\F_T,\gamma)$-localizable against a comparison set $B\subseteq L\setminus\{\ell_0\}$ when
\[
    \operatorname{dist}_\infty\left(
       \Phi_{\F_T}(C_{\ell_0}),
       \conv\bigcup_{\ell\in B}\Phi_{\F_T}(C_\ell)
    \right)
    \geq\gamma.
\]
If this distance is zero, the audience can at most certify a group residue, not a layer-local residue.
\end{definition}

\begin{proposition}[Group residue is not localization]
Assume a finite audience basis.  If
\[
    \Phi_{\F_T}(C_{\ell_0})
    \cap
    \conv\bigcup_{\ell\in B}\Phi_{\F_T}(C_\ell)
    \neq\varnothing,
\]
then every separating statement that is true for the group convex hull is compatible with a trace mixture in which the carrier layer is not $\ell_0$.  Therefore the observation can certify at most that some member of the comparison group carries the residue; it cannot certify $\ell_0$ as its source.
\end{proposition}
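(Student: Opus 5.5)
The plan is to turn the nonempty intersection into an explicit indistinguishable witness. Pick a point $x$ in
\[
\Phi_{\F_T}(C_{\ell_0})\cap\conv\bigcup_{\ell\in B}\Phi_{\F_T}(C_\ell).
\]
Since $x$ lies in the convex hull of a finite union of compact convex sets, Carathéodory in $\R^m$ lets us write
\[
x=\sum_{j}\alpha_j\Phi_{\F_T}(Q_j)
\]
with $Q_j\in C_{\ell_j}$, $\ell_j\in B$, finitely many terms, and weights $\alpha_j\geq0$ summing to one. Linearity of $\Phi_{\F_T}$ in the law, from the definition of the measurement operator, gives $x=\Phi_{\F_T}(\bar Q)$ with $\bar Q=\sum_j\alpha_jQ_j$. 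External choric mixing, applied at the population level over the comparison group, says $\bar Q$ is a realizable trace mixture in which every component carrier is a layer in $B$, hence not $\ell_0$. On the other side, $x=\Phi_{\F_T}(P)$ for some $P\in C_{\ell_0}$.

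The second step is the indistinguishability claim. By construction $\Phi_{\F_T}(P)=\Phi_{\F_T}(\bar Q)$, so $P-\bar Q\in\Ker\Phi_{\F_T}$, exactly as in the kernel-obstruction corollary. Every audience score is a combination $g_\lambda=\sum_i\lambda_if_i$ with finitely many terms, and therefore $\mathbb E_Pg_\lambda=\mathbb E_{\bar Q}g_\lambda$. By the measurement-polytope equivalence theorem, $\Delta_{\F_T}(P,\bar Q)=0$. Lemma~\ref{lem:ht} then shows that no audience test decides between ``$\ell_0$ generated the trace'' and ``the $B$-mixture generated it'' with total error below $1$.

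Third, I would formalize the phrase ``every separating statement true for the group hull.'' I read such a statement as a linear inequality in the measurement coordinates, $\lambda\cdot y\geq a$ or $\lambda\cdot y\leq a$, of the kind produced by the finite separation certificate. It is true for the group hull when it holds on all of $\conv\bigcup_{\ell\in B}\Phi_{\F_T}(C_\ell)$. It then holds at $x$, and hence for the observation law $\bar Q$, whose carrier layers all lie in $B$. Any such inequality is therefore consistent with a non-$\ell_0$ source. At the same time, the value $x$ is also produced by $P$, so the statement cannot exclude $\ell_0$ either. What the observation supports is the disjunction ``some member of $B\cup\{\ell_0\}$'', or, read against the comparison group, ``the group hull''. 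This is the claimed group-level certificate. Finally, the distance in the layer-localization definition is zero, so $\ell_0$ fails $(\F_T,\gamma)$-localizability for every $\gamma>0$.

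I expect the main obstacle to be the interpretive step: making ``certify $\ell_0$ as its source'' precise. My plan is to define a certification of $\ell_0$ as a test or functional of the audience whose value separates $\Phi_{\F_T}(C_{\ell_0})$ from the group hull. That is exactly a localization certificate with $\gamma>0$. The proposition then becomes the contrapositive of the existence of the witness $\bar Q$: any putative certificate evaluated at $P$ and at $\bar Q$ gives identical values, so its margin is $0$. The remaining care is confirming that the mixture $\bar Q$ is legitimate even though it mixes different layers. External choric mixing is stated for a single hidden locus, so I would note that the proposition only needs $\bar Q$ as a hypothetical law compatible with the observation, not as a realizable script of one locus.
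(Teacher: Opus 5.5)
Your proposal is correct and follows essentially the same route as the paper's proof. Like the paper, you pick a common projected point, realize it as $\Phi_{\F_T}(P)$ with $P\in C_{\ell_0}$ and as $\Phi_{\F_T}$ of a convex mixture of comparison-layer laws, and conclude that every basis test, and hence every certificate built from the basis, has equal expectations under the two laws. Your Carath\'eodory step and your explicit reading of ``separating statements'' as linear inequalities are harmless additions; the finiteness of the mixture already follows from the definition of the convex hull.
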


\begin{proof}
Choose a common projected point $z$.  There is $P\in C_{\ell_0}$ with $\Phi(P)=z$ and a convex combination $Q=\sum_j a_jQ_j$, with each $Q_j\in C_{\ell_j}$ for $\ell_j\in B$, such that $\Phi(Q)=z$.  Every audience test in the basis has the same expectation under $P$ and under the mixture $Q$.  Thus any conclusion depending only on those expectations remains true when the same projected observation is generated by the comparison mixture.  The source layer is not identified.
\end{proof}

\begin{theorem}[Information lower bound for localization]
Let a carrier index $\Sigma$ be uniform on a cover cell of size $m\geq2$, and let $Z$ be the entire observation available to the response policy in one window.  For any estimator $\widehat\Sigma(Z)$,
\[
    \Pr\{\widehat\Sigma\neq\Sigma\}
    \geq
    1-\frac{\MI(\Sigma;Z)+\log 2}{\log m}.
\]
Consequently, to guarantee correct localization probability at least $q$, the observation channel must carry
\[
    \MI(\Sigma;Z)\geq q\log m-\log 2.
\]
A room that increases cover size without increasing carrier information cannot turn pressure into reliable localization.
\end{theorem}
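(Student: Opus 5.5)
This is Fano's inequality in its standard form, and the plan is to derive it from the data-processing inequality and the chain rule for entropy. Throughout, $\Ent$ denotes entropy and logarithms are taken in a common base, so that $\log 2$ bounds the binary entropy. Let $E=\one\{\widehat\Sigma\neq\Sigma\}$ and $p_e=\Pr\{E=1\}$.

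\emph{Step 1: conditional Fano bound.} Expand $\Ent(E,\Sigma\mid\widehat\Sigma)$ with the chain rule in two ways. First, $\Ent(E\mid\Sigma,\widehat\Sigma)=0$, because $E$ is a function of $(\Sigma,\widehat\Sigma)$. Second,
\[
\Ent(E\mid\widehat\Sigma)+\Ent(\Sigma\mid E,\widehat\Sigma)\le \log 2 + p_e\log(m-1).
\]
The first term is at most the entropy of a binary variable. For the second term, given $E=0$ the value of $\Sigma$ is determined by $\widehat\Sigma$, and given $E=1$ it ranges over at most $m-1$ values. This gives $\Ent(\Sigma\mid\widehat\Sigma)\le \log 2+p_e\log(m-1)\le \log 2 + p_e\log m$.

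\emph{Step 2: data processing.} Since $\Sigma\to Z\to\widehat\Sigma$ is a Markov chain, $\MI(\Sigma;\widehat\Sigma)\le \MI(\Sigma;Z)$. This is the same data-processing principle invoked for attention lenses and links (Parastatic naturality). If the estimator is randomized, its internal randomness is independent of $\Sigma$ given $Z$, so the chain still holds. Because $\Sigma$ is uniform on the cover cell, $\Ent(\Sigma)=\log m$. Combining,
\[
\log m - \MI(\Sigma;Z)\le \Ent(\Sigma\mid\widehat\Sigma)\le \log 2 + p_e\log m,
\]
and dividing by $\log m>0$ (valid since $m\ge2$) yields the displayed bound.

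\emph{Step 3: consequences.} If the estimator localizes correctly with probability at least $q$, then $p_e\le 1-q$. Substituting into the bound gives $1-q\ge 1-(\MI+\log2)/\log m$, which rearranges to $\MI(\Sigma;Z)\ge q\log m-\log2$. The closing sentence is a reading of this inequality: if $m$ grows while $\MI(\Sigma;Z)$ stays fixed, the lower bound on $p_e$ tends to $1$.

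The only delicate point is bookkeeping: using the same logarithm base for $\log 2$ and for the entropies, and justifying the Markov chain for randomized estimators. Neither is a genuine obstacle.
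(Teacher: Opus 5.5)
Your proposal is correct and follows the paper's route: Fano's inequality with $\Ent(\Sigma)=\log m$, the relaxation $\log(m-1)\le\log m$, rearrangement, and then substituting $p_e\le 1-q$ for the corollary. The only difference is that you derive Fano yourself, using the chain rule on $\Ent(E,\Sigma\mid\widehat\Sigma)$ and the mutual-information data-processing inequality (a standard fact, not a consequence of the paper's TV-based Parastatic naturality axiom), whereas the paper cites Fano directly in the form $\Ent(\Sigma\mid Z)\le\log 2+P_e\log m$.
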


\begin{proof}
The first inequality is Fano's inequality in finite form~\cite{cover2006}.  Indeed,
\[
    \Ent(\Sigma\mid Z)=\Ent(\Sigma)-\MI(\Sigma;Z)=\log m-\MI(\Sigma;Z).
\]
If $P_e=\Pr\{\widehat\Sigma\neq\Sigma\}$, then
\[
    \Ent(\Sigma\mid Z)\leq \log 2+P_e\log m,
\]
using the standard bound $h(P_e)\leq\log2$ and $\log(m-1)\leq\log m$.  Rearranging gives the displayed lower bound on $P_e$.  If the success probability is at least $q$, then $P_e\leq1-q$, so the same inequality gives $\MI(\Sigma;Z)\geq q\log m-\log2$.
\end{proof}

\begin{theorem}[Attention is kernel enlargement]
Let $L^{\rm att}:\Om_T\to\Delta(\widetilde\Om_T)$ be an attention lens and let $\widetilde\Phi$ be the measurement operator of the post-attention audience.  The effective audience operator is
\[
    \Phi^{\rm eff}=\widetilde\Phi L^{\rm att}.
\]
For any two trace laws $P,Q$,
\[
    \|P-Q\|_{\Phi^{\rm eff}}
    \leq
    \TV(P,Q).
\]
Moreover, if
\[
    P-Q\in\Ker(\widetilde\Phi L^{\rm att})
    \quad\text{but}\quad
    P-Q\notin\Ker(\Phi_{\F_T}),
\]
then the pair is exposed for the full audience basis but perfectly masked for the attention-filtered audience.  Thus attention loss is not a metaphor; it is membership in a larger nullspace.
\end{theorem}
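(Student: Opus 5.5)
The plan has two steps: first the contraction inequality, then the kernel statement. Throughout, a law or signed measure on $\Om_T$ is treated as a vector in $\R^{\Om_T}$. The lens $L^{\rm att}$ is then a column-stochastic linear map $\R^{\Om_T}\to\R^{\widetilde\Om_T}$, with $(L^{\rm att}\mu)(\tilde\omega)=\sum_\omega \mu(\omega)L^{\rm att}(\tilde\omega\mid\omega)$, and $\widetilde\Phi$ is the measurement operator of a finite post-attention basis $\{\tilde f_1,\ldots,\tilde f_k\}$ with values in $[0,1]$. The composite $\Phi^{\rm eff}=\widetilde\Phi L^{\rm att}$ is linear. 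Its $i$-th coordinate is $\mu\mapsto\mathbb E_\mu g_i$ with the pulled-back test
\[
g_i(\omega)=\sum_{\tilde\omega}L^{\rm att}(\tilde\omega\mid\omega)\tilde f_i(\tilde\omega).
\]
Each $g_i$ is a convex combination of values in $[0,1]$, so it again lies in $[0,1]$. This pull-back observation is the only structural input. It says that the attention-filtered audience is itself an audience on $\Om_T$ in the sense of the audience-boundedness axiom, namely the finite basis $\{g_1,\ldots,g_k\}$.

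For the inequality, I would write $\|P-Q\|_{\Phi^{\rm eff}}=\max_i|\mathbb E_P g_i-\mathbb E_Q g_i|=\Delta_{\{g_i\}}(P,Q)$. Since $\{g_i\}\subseteq[0,1]^{\Om_T}$, part (d) of the elementary-properties lemma, applied with the full test class, bounds this by $\Delta_{[0,1]^{\Om_T}}(P,Q)=\TV(P,Q)$. As an alternative that makes the data-processing reading explicit, one can first show $\TV(L^{\rm att}P,L^{\rm att}Q)\le\TV(P,Q)$ by the triangle inequality on $\sum_{\tilde\omega}|\sum_\omega(P-Q)(\omega)L^{\rm att}(\tilde\omega\mid\omega)|$, and then bound the post-attention seminorm by TV on $\widetilde\Om_T$. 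I would present the pull-back route as the main argument and mention the other as a remark.

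For the second claim, I would first note that perfect masking of a single pair for a finite basis means $\Phi(P)=\Phi(Q)$. This is the Kernel obstruction corollary with $C=\{P\}$ and $D=\{Q\}$, equivalently the measurement-polytope equivalence applied to singletons. The hypothesis $P-Q\notin\Ker\Phi_{\F_T}$ gives $\|P-Q\|_{\F_T}>0$, so some $f_i$ separates, which is exposure to the full basis. The hypothesis $P-Q\in\Ker(\widetilde\Phi L^{\rm att})$ gives $\Phi^{\rm eff}(P)=\Phi^{\rm eff}(Q)$, which is perfect masking for the effective audience. The phrase ``larger nullspace'' needs a comment. $\Ker(\widetilde\Phi L^{\rm att})\supseteq\Ker L^{\rm att}$ holds always. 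Containment $\Ker\Phi_{\F_T}\subseteq\Ker\Phi^{\rm eff}$ holds when each $g_i$ lies in the span of the $f_j$ and constants, for instance when the post-attention audience is a coarsening of the full audience. I would state this as the intended reading rather than as an unconditional claim.

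The main obstacle is interpretive, not technical. The statement does not fix the relation between $\F_T$ and the post-attention basis, so the inequality must be proved against TV, as it is stated, and not against $\|\cdot\|_{\F_T}$. Comparing against $\|\cdot\|_{\F_T}$ would require $\{g_i\}\subseteq\operatorname{co}(\F_T)$ together with the Convex closure lemma. I would note this as an optional strengthening under that inclusion hypothesis.
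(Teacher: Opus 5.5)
Your proposal is correct and matches the paper's proof: the paper argues that the lens contracts total variation and that bounded post-attention tests are dominated by $\TV$, which is your ``alternative'' route, and your main pull-back argument is the dual form of that step (the paper uses the same pull-back in its data-processing proposition for audience distance); the kernel part is identical. Your caution that ``larger nullspace'' is an interpretive reading rather than a proven inclusion $\Ker\Phi_{\F_T}\subseteq\Ker\Phi^{\rm eff}$ is fair, since the paper's proof does not establish that containment either.
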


\begin{proof}
The equality $\Phi^{\rm eff}=\widetilde\Phi L^{\rm att}$ is the definition of first applying the Markov kernel and then measuring.  Markov kernels contract total variation, and bounded tests are dominated by total variation, so the displayed contraction follows.  The kernel statement is immediate: the post-attention measurements coincide exactly when the signed difference lies in the kernel of the composed operator.  If the same difference is not in the full-audience kernel, some full trace test separates it.
\end{proof}

\begin{proposition}[Bernoulli aperture identity]
Let $C\subseteq\Om_T$ be a cylinder and suppose the attention lens reports the event only when an independent aperture variable $D_C\sim\operatorname{Bernoulli}(p_C)$ hits it.  If the reported indicator is $\widetilde{\one_C}=D_C\one_C$, then for any laws $P,Q$,
\[
    \mathbb E_P\widetilde{\one_C}-\mathbb E_Q\widetilde{\one_C}
    =p_C\bigl(P(C)-Q(C)\bigr).
\]
A genuine cylinder residue therefore becomes invisible at the rate at which its aperture probability goes to zero.
\end{proposition}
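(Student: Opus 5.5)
The identity follows from linearity of expectation together with the independence of the aperture variable $D_C$ from the trace. First I will make the probabilistic setup explicit. The attention lens acts on a trace $\omega\in\Om_T$ by drawing $D_C\sim\operatorname{Bernoulli}(p_C)$ independently of $\omega$ and reporting $\widetilde{\one_C}(\omega,D_C)=D_C\one_C(\omega)$. Under a trace law $P$, the joint law of $(\omega,D_C)$ is therefore the product $P\otimes\operatorname{Bernoulli}(p_C)$, and likewise for $Q$. In the language of the preceding kernel-enlargement theorem, this is a Markov lens $L^{\rm att}$ whose output is the pair $(\omega,D_C)$, or equivalently just the reported bit.

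Second, I will compute the expectation by conditioning on the trace, or equivalently by Fubini on the finite product space. For fixed $\omega$,
\[
\mathbb E[D_C\one_C(\omega)\mid\omega]=\one_C(\omega)\,\mathbb E[D_C]=p_C\one_C(\omega).
\]
Averaging over $\omega\sim P$ then gives $\mathbb E_P\widetilde{\one_C}=p_C P(C)$, and the same computation gives $\mathbb E_Q\widetilde{\one_C}=p_C Q(C)$. Subtracting the two yields the displayed identity $p_C\bigl(P(C)-Q(C)\bigr)=p_C\,\res_{P,Q}(C)$.

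Finally, for the closing sentence of the statement, I will observe that $|P(C)-Q(C)|\le1$. The reported gap is therefore at most $p_C$, and it tends to zero linearly as $p_C\to0$, even when the cylinder residue $\res_{P,Q}(C)$ is fixed and nonzero. There is no real obstacle here. The one point that needs care is to state the independence assumption precisely: $D_C$ must be independent of the trace and must have the same law under $P$ and $Q$. Without that, the factorization $\mathbb E[D_C\one_C]=p_C P(C)$ fails. This is the only place where the hypothesis is used.
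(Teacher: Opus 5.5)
Your proposal is correct and is the paper's argument. Independence of $D_C$ from the trace factors the expectation as $\mathbb E_P[D_C\one_C]=p_C P(C)$, and likewise for $Q$; subtracting gives the identity. Your conditioning/Fubini step and the remark $|P(C)-Q(C)|\le 1$ for the closing sentence just make that one-line proof explicit.
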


\begin{proof}
Independence gives $\mathbb E_P[D_C\one_C]=p_CP(C)$ and similarly for $Q$.  Subtract.
\end{proof}

\begin{theorem}[Debt-hazard decoupling]
Let a repeated room have one-window hazard
\[
    h(D)=1-\exp(-\alpha(D)r(D))
\]
for a fixed carrier, where $\alpha(D)>0$ and $r(D)>0$ are differentiable functions of debt on an interval.  Then
\[
    \frac{d}{dD}h(D)>0
    \quad\Longleftrightarrow\quad
    \frac{d}{dD}\log\bigl(\alpha(D)r(D)\bigr)>0.
\]
Debt alone is therefore not a hazard certificate.  Debt becomes carrier hazard only through an increase in carrier-local attention, carrier-local residue, or both.
\end{theorem}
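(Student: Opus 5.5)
The plan is to treat $h$ as a composition $h(D)=\psi(\lambda(D))$ with $\lambda(D)=\alpha(D)r(D)$ and $\psi(x)=1-e^{-x}$, and to read off the sign of $h'$ by the chain rule. Since $\alpha$ and $r$ are differentiable and strictly positive on the interval, $\lambda$ is differentiable and strictly positive, so $\log\lambda$ is well defined and differentiable there. This is why positivity is part of the hypotheses.

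First, compute
\[
    h'(D)=e^{-\lambda(D)}\lambda'(D).
\]
Because $e^{-\lambda(D)}>0$, we get $h'(D)>0$ if and only if $\lambda'(D)>0$.

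Second, write
\[
    \frac{d}{dD}\log\lambda(D)=\frac{\lambda'(D)}{\lambda(D)}.
\]
Since $\lambda(D)>0$, this derivative is positive if and only if $\lambda'(D)>0$.

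Chaining the two equivalences gives the displayed statement pointwise in $D$. For the interpretive clause, the product rule on the logarithm gives
\[
    \frac{d}{dD}\log\lambda=\frac{\alpha'}{\alpha}+\frac{r'}{r}.
\]
Hence positivity requires that at least one of $\alpha'/\alpha$ or $r'/r$ be positive. So hazard rises only if carrier-local attention $\alpha$ or carrier-local residue $r$ rises, and debt $D$ enters only through these two factors. A sentence noting that a broadening response can make $\alpha'<0$ (attention diluted over more candidates) shows that debt may increase while hazard falls. This is the decoupling the statement asserts.

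There is no genuine obstacle; the argument is elementary calculus. The only point needing care is the strictness of the equivalence. It relies on the strict positivity of $e^{-\lambda}$ and of $\lambda$, so both facts must be stated explicitly rather than left implicit.
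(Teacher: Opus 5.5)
Your proof is correct and follows the paper's argument: differentiate to get $h'(D)=e^{-\alpha r}(\alpha r)'$, use positivity of the exponential, then use positivity of $\alpha r$ to pass to the logarithmic derivative. Your added decomposition $(\log \alpha r)'=\alpha'/\alpha+r'/r$ makes the interpretive clause a little more explicit than the paper, which only remarks that the product involves nothing but carrier-local attention and carrier-local residue.
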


\begin{proof}
Differentiate:
\[
    h'(D)=\exp(-\alpha r)\,\frac{d}{dD}(\alpha r).
\]
The exponential factor is positive, so the sign is the sign of $(\alpha r)'$.  Since $\alpha r>0$, this is the sign of $(\log(\alpha r))'$.  The interpretation follows because the product contains only carrier-local attention and carrier-local residue.
\end{proof}

\section{Choric masking}

\begin{definition}[Choric mask]
Let $\F_T$ be an audience test class. Hidden loci $\theta,\theta'\in\Theta$ are $(\F_T,\eps)$-chorically masked at horizon $T$ if there exist laws
\[
    P\in \operatorname{Ch}_T(\theta),\qquad Q\in\operatorname{Ch}_T(\theta')
\]
such that
\[
    \Delta_{\F_T}(P,Q)\leq \eps.
\]
They are perfectly chorically masked if this holds with $\eps=0$.
\end{definition}

This definition does not require one participant to imitate another. It requires the two hidden loci to have compatible public mixtures. The distinction is important. A public environment can be safe without every individual acting alike, provided the aggregate trace available to the audience has no separating residue.

\begin{definition}[Exposure radius]
The exposure radius of $\theta$ against a comparison set $B\subseteq\Theta$ is
\[
    \mathfrak r_T(\theta;B,\F_T)
    =\inf_{P\in\operatorname{Ch}_T(\theta)}\ \sup_{\theta'\in B}\ \inf_{Q\in\operatorname{Ch}_T(\theta')}\Delta_{\F_T}(P,Q).
\]
When $B=\Theta\setminus\{\theta\}$, write $\mathfrak r_T(\theta;\F_T)$.
\end{definition}

The exposure radius is the smallest ambient distance that remains after the locus chooses its best chorus. If it is zero, the locus can enter the room without leaving a testable role residue against all other loci. If it is positive, there is an irreducible public asymmetry.

\begin{definition}[Room radius]
The room radius is
\[
    \mathfrak R_T(\A,\F_T)=\sup_{\theta\in\Theta}\mathfrak r_T(\theta;\F_T).
\]
The room is $\eps$-level at horizon $T$ when $\mathfrak R_T(\A,\F_T)\leq\eps$.
\end{definition}

The room radius is a social parameter as much as a mathematical one. It measures whether the public grammar of the room makes some hidden loci inherently louder than others.

\section{Stratified rooms and localization}

The previous definitions are enough for a two-name comparison, but many public rooms are not two-name rooms. A visible trace may be produced by several hidden layers at once: a protected person, an object category, a companion relation, a queue cohort, a sensor condition, a clock phase, and an auxiliary public label. A residue can be present in the room without being localizable to the protected layer. This section makes that distinction explicit.

\begin{definition}[Stratified hidden carrier]
A stratified carrier is a finite product
\[
    \Lambda=\Lambda_0\times\Lambda_1\times\cdots\times\Lambda_d.
\]
A point \(\lambda=(\lambda_0,\ldots,\lambda_d)\in\Lambda\) is a layered hidden state. The coordinate \(\lambda_j\) is called layer \(j\). The earlier model is recovered by taking \(d=0\) and \(\Theta=\Lambda_0\).
\end{definition}

In a stratified room the transition and observation kernels may depend on the whole layered state. For example, one coordinate may describe a bearer, another an object class, another a companion relation, another a clock phase, and another a sensor condition. The audience still sees only \(Y_{1:T}\). The layer names are analytic coordinates, not labels shown to the audience.

\begin{definition}[Layer choric hull]
Let an ambient release system have hidden carrier \(\Lambda\). Fix a layer \(j\) and a layer value \(a\in\Lambda_j\). The layer choric hull of \(a\) is
\[
    C_T^{(j)}(a)
    =\conv\{P_{\lambda,r}^{\pi,T}:\lambda_j=a,\ \lambda\in\Lambda,\ r\in R,\ \pi\in\Pi_T(r)\}.
\]
For a set \(U\subseteq\Lambda_j\), put
\[
    C_T^{(j)}(U)=\conv\bigcup_{a\in U}C_T^{(j)}(a).
\]
\end{definition}

This is the first place where the model becomes more than a protected-versus-unprotected comparison. A layer value is not compared only with one opposite value. It is compared with every public trace that can be generated by all nuisance layers while that coordinate is fixed.

\begin{definition}[Measurement map]
Let \(\G_T=\{G_1,\ldots,G_k\}\) be a finite family of public events. The measurement map is
\[
    \Phi_{\G_T}:\Delta(Y^T)\to[0,1]^k,
    \qquad
    \Phi_{\G_T}(P)=(P(G_1),\ldots,P(G_k)).
\]
The measurement hull of layer value \(a\) is
\[
    H_T^{(j)}(a;\G_T)=\Phi_{\G_T}\big(C_T^{(j)}(a)\big)\subseteq[0,1]^k.
\]
\end{definition}

If the audience can only use the events in \(\G_T\), then it does not see the whole trace polytope. It sees only a projected polytope in the measurement cube. Two layer values may be far apart under full traces and indistinguishable under \(\Phi_{\G_T}\). Conversely, a single well-chosen event can split two values even if their raw traces look informally similar.

\begin{proposition}[Measurement reduction]
Let \(\F(\G_T)\) be the test class generated by the indicators \(1_{G_1},\ldots,1_{G_k}\) and their complements. Then for laws \(P,Q\),
\[
    \Delta_{\F(\G_T)}(P,Q)
    =\|\Phi_{\G_T}(P)-\Phi_{\G_T}(Q)\|_\infty.
\]
Consequently, for layer values \(a,b\in\Lambda_j\),
\[
    \inf_{P\in C_T^{(j)}(a),\ Q\in C_T^{(j)}(b)}
    \Delta_{\F(\G_T)}(P,Q)
    =\operatorname{dist}_\infty\big(H_T^{(j)}(a;\G_T),H_T^{(j)}(b;\G_T)\big).
\]
\end{proposition}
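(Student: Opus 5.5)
The plan is to compute the ambient distance directly over the explicit test class, then reduce the hull statement to the measurement-polytope equivalence already proved in the canonical certificate calculus.

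\textbf{Step 1: the pointwise identity.} The class $\F(\G_T)$ consists of the $2k$ functions $\one_{G_i}$ and $1-\one_{G_i}$. For $f=\one_{G_i}$ the difference is
\[
    \mathbb E_P f-\mathbb E_Q f=P(G_i)-Q(G_i).
\]
For the complement $f=1-\one_{G_i}$ the difference is $-(P(G_i)-Q(G_i))$, so both tests have the same absolute value. Taking the supremum over this finite family gives
\[
    \Delta_{\F(\G_T)}(P,Q)=\max_i|P(G_i)-Q(G_i)|=\|\Phi_{\G_T}(P)-\Phi_{\G_T}(Q)\|_\infty .
\]
Equivalently, one can note that the class is closed under complements and apply the convex-closure lemma, but the direct computation is already complete. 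If one reads ``generated by'' as including convex combinations of these tests, the same lemma shows the value does not change.

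\textbf{Step 2: passing to hulls.} Take infima over $P\in C_T^{(j)}(a)$ and $Q\in C_T^{(j)}(b)$. By Step 1, the left-hand side becomes
\[
    \inf\|x-y\|_\infty \quad\text{over } x\in\Phi_{\G_T}(C_T^{(j)}(a)),\ y\in\Phi_{\G_T}(C_T^{(j)}(b)).
\]
The reparametrization $x=\Phi_{\G_T}(P)$, $y=\Phi_{\G_T}(Q)$ is surjective onto the image sets by definition. So the infimum is exactly $\operatorname{dist}_\infty(H_T^{(j)}(a;\G_T),H_T^{(j)}(b;\G_T))$. This is precisely the measurement-polytope equivalence theorem applied with the basis $f_i=\one_{G_i}$, and I would simply cite it.

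\textbf{The only point needing care.} That theorem was stated for compact convex trace sets, so I must check that the layer hulls qualify. Each $C_T^{(j)}(a)$ is the convex hull of trace laws over finitely many layered states $\lambda$ with $\lambda_j=a$, finitely many roles, and admissible scripts. By the polytope-consequence argument, each such family is generated by finitely many deterministic-script laws. Hence $C_T^{(j)}(a)$ is a polytope, so it is compact and convex.

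Attainment is not actually needed for the equality of infima. It matters only for the intersection reading, and compactness supplies it there. The step I expect to be the main obstacle is therefore this compactness check, and even it is routine.
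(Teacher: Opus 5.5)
Your proposal is correct and follows the paper's proof essentially verbatim. You compute the gap for each indicator and its complement to get the $\ell_\infty$ norm of the measurement difference, then minimize over the two layer hulls using the image sets under $\Phi_{\G_T}$; the compactness check you flag is harmless but unnecessary, since, as you note yourself, the direct reparametrization already gives equality of the infima without citing the measurement-polytope theorem or its hypotheses.
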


\begin{proof}
For an event indicator, the expectation gap is exactly \(P(G_i)-Q(G_i)\), and complements only change the sign. Taking the maximum over the listed events gives the \(\ell_\infty\) norm of the measurement difference. The second display follows by minimizing this quantity over the two layer hulls and using the definition of the image sets under \(\Phi_{\G_T}\).
\end{proof}

\begin{definition}[Localization set]
For a public law \(P\), layer \(j\), audience class \(\F_T\), and tolerance \(\eps\geq0\), define
\[
    \operatorname{Loc}_{j,\eps}^{\F_T}(P)
    =\left\{a\in\Lambda_j:
      \inf_{Q\in C_T^{(j)}(a)}\Delta_{\F_T}(P,Q)\leq\eps
    \right\}.
\]
This is the set of layer values that remain compatible with \(P\) at resolution \(\eps\).
\end{definition}

A residue is not the same as a localization. The audience may know that the room differs from a baseline and still not know which layer caused the difference. The following theorem is the formal version of that statement.

\begin{theorem}[Residue without localization]
Fix layer \(j\), a finite event family \(\G_T\), and a public law \(P\). If
\[
    \Phi_{\G_T}(P)\in H_T^{(j)}(a;\G_T)
\]
for \(m\) distinct values \(a\in\Lambda_j\), then every audience whose tests factor through \(\Phi_{\G_T}\) has a zero-tolerance localization set of size at least \(m\):
\[
    \left|\operatorname{Loc}_{j,0}^{\F(\G_T)}(P)\right|\geq m.
\]
More generally, if \(\Phi_{\G_T}(P)\) is within \(\eps\) in \(\ell_\infty\) distance of the measurement hulls of \(m\) values, then
\[
    \left|\operatorname{Loc}_{j,\eps}^{\F(\G_T)}(P)\right|\geq m.
\]
\end{theorem}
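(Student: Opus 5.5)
The plan is to reduce membership in the localization set to a statement about measurement hulls, using the Measurement reduction proposition, and then count.

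First, fix a value \(a\in\Lambda_j\) with \(\Phi_{\G_T}(P)\in H_T^{(j)}(a;\G_T)\). By definition of the measurement hull there is \(Q_a\in C_T^{(j)}(a)\) with \(\Phi_{\G_T}(Q_a)=\Phi_{\G_T}(P)\). The Measurement reduction proposition gives
\[
    \Delta_{\F(\G_T)}(P,Q_a)=\|\Phi_{\G_T}(P)-\Phi_{\G_T}(Q_a)\|_\infty=0 .
\]
Hence \(\inf_{Q\in C_T^{(j)}(a)}\Delta_{\F(\G_T)}(P,Q)=0\), so \(a\in\operatorname{Loc}_{j,0}^{\F(\G_T)}(P)\). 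Doing this for each of the \(m\) distinct values places \(m\) distinct elements in the localization set, which proves the first bound.

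For the approximate version, I would read the hypothesis as \(\operatorname{dist}_\infty\big(\Phi_{\G_T}(P),H_T^{(j)}(a;\G_T)\big)\leq\eps\) for each of the \(m\) values. The key step is the identity
\[
    \inf_{Q\in C_T^{(j)}(a)}\Delta_{\F(\G_T)}(P,Q)
    =\inf_{x\in H_T^{(j)}(a;\G_T)}\|\Phi_{\G_T}(P)-x\|_\infty .
\]
This follows from the same proposition applied pointwise in \(Q\), since \(x\) ranges exactly over the images \(\Phi_{\G_T}(Q)\). The identity shows that each such \(a\) satisfies the defining inequality of \(\operatorname{Loc}_{j,\eps}\), and the count gives at least \(m\) members.

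The one point needing care is whether "within \(\eps\)" is read as an infimum or an attained minimum. I would handle it by noting that \(C_T^{(j)}(a)\) is the convex hull of finitely many deterministic-script laws, as in the Polytope consequence proposition. Its image under the linear map \(\Phi_{\G_T}\) is therefore a compact polytope, so the distance is attained. Either reading then yields the non-strict inequality required in the definition of the localization set. I would also remark that "tests factor through \(\Phi_{\G_T}\)" is used only through \(\F(\G_T)\). For a general factoring class, the same argument goes through for the exact case, because equal measurements force equal test expectations.
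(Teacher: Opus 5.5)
Your proposal is correct and follows the paper's proof. You choose \(Q_a\in C_T^{(j)}(a)\) with \(\Phi_{\G_T}(Q_a)=\Phi_{\G_T}(P)\), invoke measurement reduction to get zero distance, and count the \(m\) values; your explicit identity \(\inf_{Q\in C_T^{(j)}(a)}\Delta_{\F(\G_T)}(P,Q)=\operatorname{dist}_\infty\bigl(\Phi_{\G_T}(P),H_T^{(j)}(a;\G_T)\bigr)\) for the \(\eps\)-case spells out what the paper compresses into ``the approximate claim is identical.''
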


\begin{proof}
If \(\Phi_{\G_T}(P)\in H_T^{(j)}(a;\G_T)\), then there exists \(Q_a\in C_T^{(j)}(a)\) such that \(\Phi_{\G_T}(Q_a)=\Phi_{\G_T}(P)\). By measurement reduction, \(\Delta_{\F(\G_T)}(P,Q_a)=0\). Hence \(a\in\operatorname{Loc}_{j,0}^{\F(\G_T)}(P)\). Repeating this for the \(m\) values gives the first claim. The approximate claim is identical with \(\ell_\infty\) distance at most \(\eps\).
\end{proof}

\begin{definition}[Cover multiplicity]
For a layer \(j\), event family \(\G_T\), and measurement vector \(v\in[0,1]^k\), the cover multiplicity is
\[
    m_T^{(j)}(v;\G_T)
    =\left|\{a\in\Lambda_j:v\in H_T^{(j)}(a;\G_T)\}\right|.
\]
The \(\eps\)-cover multiplicity \(m_{T,\eps}^{(j)}(v;\G_T)\) counts the values whose measurement hulls lie within \(\eps\) of \(v\) in \(\ell_\infty\) distance.
\end{definition}

Cover multiplicity is the number of voices that can carry the same public measurement. It generalizes the size of an anonymity set~\cite{chaum1981,reiter1998,serjantov2002} from users behind a relay to layer values behind a measurement hull. A high value means the residue is spread across a chorus. A low value means the residue is almost a name.

\begin{corollary}[Carrier lower bound]
Let \(P\) be the public trace law of a room and suppose the audience tests factor through \(\Phi_{\G_T}\). Then no sound localization certificate for layer \(j\) at tolerance \(\eps\) can return fewer than
\[
    m_{T,\eps}^{(j)}\big(\Phi_{\G_T}(P);\G_T\big)
\]
layer values without discarding at least one value that is compatible with the observed public measurements.
\end{corollary}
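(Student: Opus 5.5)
The plan is to read the corollary as a direct consequence of the residue-without-localization theorem together with the measurement reduction proposition, once ``sound localization certificate'' is made precise. I will take a localization certificate for layer \(j\) at tolerance \(\eps\) to be a rule that, from the public measurements available to the audience, returns a subset \(S\subseteq\Lambda_j\). I will call it sound if \(S\) never excludes a value \(a\) that is compatible with the observation at resolution \(\eps\). Since the audience tests factor through \(\Phi_{\G_T}\), the only information the certificate receives is the vector \(v=\Phi_{\G_T}(P)\). Soundness therefore means that \(S\) must contain every \(a\) for which some \(Q\in C_T^{(j)}(a)\) satisfies \(\Delta_{\F(\G_T)}(P,Q)\leq\eps\), that is, \(S\supseteq\operatorname{Loc}_{j,\eps}^{\F(\G_T)}(P)\).

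The key steps, in order, are as follows. First, fix \(v=\Phi_{\G_T}(P)\) and let \(U\) be the set of values counted by \(m_{T,\eps}^{(j)}(v;\G_T)\), namely those \(a\) with \(\operatorname{dist}_\infty(v,H_T^{(j)}(a;\G_T))\leq\eps\). Second, for each \(a\in U\), use compactness of \(H_T^{(j)}(a;\G_T)\) to pick a point of the hull attaining this distance. This is the image of a compact convex polytope under a linear map, so the infimum is attained by the polytope consequence proposition. Lift that point to some \(Q_a\in C_T^{(j)}(a)\). Third, invoke measurement reduction to get \(\Delta_{\F(\G_T)}(P,Q_a)=\|v-\Phi_{\G_T}(Q_a)\|_\infty\leq\eps\), so \(a\in\operatorname{Loc}_{j,\eps}^{\F(\G_T)}(P)\). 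This is exactly the approximate clause of the residue-without-localization theorem applied with \(m=|U|\). Fourth, conclude that any returned set of size smaller than \(|U|\) misses some \(a\in U\). That value is compatible with the observed measurements, which is the claimed discarding.

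To make ``factor through \(\Phi_{\G_T}\)'' bite, I will add a short indistinguishability remark in the spirit of the proposition that group residue is not localization. For \(a\in U\) with exact hull membership, \(P\) and \(Q_a\) produce identical measurement vectors, so no rule depending only on \(v\) can treat them differently. In the \(\eps\)-case the remark is just the definition of the localization set at resolution \(\eps\).

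The main obstacle is definitional rather than technical. The corollary's notion of soundness must be pinned to \(\operatorname{Loc}_{j,\eps}\), and the \(\eps\)-cover multiplicity, phrased via distance to the hulls, must match the localization set, phrased via \(\inf_Q\Delta\). I expect these to be identified exactly by measurement reduction together with attainment of the infimum. If attainment were in doubt, I would fall back on working with the infimum directly, since only the inequality \(\leq\eps\) is needed.
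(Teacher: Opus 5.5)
Your proposal is correct and follows the paper's own argument: the paper likewise applies the approximate clause of the residue-without-localization theorem to place every value counted by \(m_{T,\eps}^{(j)}\) among the compatible values, and then observes that a certificate returning fewer values must omit one of them. Your extra compactness-and-lifting step is harmless but not needed. As you note in your fallback, measurement reduction already identifies \(\inf_{Q\in C_T^{(j)}(a)}\Delta_{\F(\G_T)}(P,Q)\) with \(\operatorname{dist}_\infty\big(v,H_T^{(j)}(a;\G_T)\big)\).
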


\begin{proof}
By the residue-without-localization theorem, each value counted by \(m_{T,\eps}^{(j)}\) has a choric law whose measurements are \(\eps\)-close to those of \(P\). A certificate returning fewer values must omit at least one such compatible value.\end{proof}

\begin{definition}[Choric core]
For a set \(U\subseteq\Lambda_j\), the measurement core is
\[
    \operatorname{Core}_T^{(j)}(U;\G_T)
    =\bigcap_{a\in U}H_T^{(j)}(a;\G_T).
\]
If the core is nonempty, the values in \(U\) admit a common measurement under the audience \(\G_T\).
\end{definition}

\begin{theorem}[Finite Helly certificate]
Let \(|\G_T|=k\). For a finite set \(U\subseteq\Lambda_j\), if every subfamily \(V\subseteq U\) with \(|V|\leq k+1\) has nonempty measurement core, then
\[
    \operatorname{Core}_T^{(j)}(U;\G_T)\neq\varnothing.
\]
Thus a common group mask for an event audience with \(k\) measurements can be certified by checking only \((k+1)\)-wise intersections of measurement hulls.
\end{theorem}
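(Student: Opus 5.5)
The plan is to reduce the statement to the classical Helly theorem in $\R^k$. Each measurement hull $H_T^{(j)}(a;\G_T)$ is a convex subset of $[0,1]^k\subseteq\R^k$, and the measurement core of $U$ is exactly the intersection of the finite family $\{H_T^{(j)}(a;\G_T):a\in U\}$. The hypothesis that every subfamily of size at most $k+1$ has nonempty core is precisely the $(k+1)$-wise intersection hypothesis of Helly's theorem in dimension $k$. The conclusion then follows at once.

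First I will check that the hulls are convex and compact. By definition $C_T^{(j)}(a)$ is a convex hull of trace laws $P_{\lambda,r}^{\pi,T}$ with $\lambda_j=a$. The Polytope consequence applies with the finite carrier $\Lambda$ replacing $\Theta$: finitely many $\lambda$, finitely many roles, and finitely many deterministic admissible scripts. Hence $C_T^{(j)}(a)$ is the convex hull of finitely many points of $\Delta(Y^T)$, so it is a compact convex polytope, assuming no dead histories so that it is nonempty. The map $\Phi_{\G_T}(P)=(P(G_1),\ldots,P(G_k))$ is the restriction of a linear map $\R^{Y^T}\to\R^k$. A linear image of a polytope is a polytope, namely the convex hull of the images of the generators. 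So each $H_T^{(j)}(a;\G_T)$ is a compact convex polytope in $\R^k$.

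Next I will handle the size cases. If $|U|\le k+1$, the hypothesis applied to $V=U$ gives the conclusion directly. If $|U|>k+1$, the family $\{H_T^{(j)}(a;\G_T)\}_{a\in U}$ is a finite family of convex sets in $\R^k$ in which every $k+1$ members intersect. Helly's theorem in its finite form then gives a common point, which lies in $\operatorname{Core}_T^{(j)}(U;\G_T)$. If I want the argument self-contained, I would include the standard Radon-based induction on $|U|$. Take sets $U$ of size $n+1\geq k+2$. For each $i$, pick a point $x_i$ in the intersection omitting the $i$-th set, which exists by induction. Radon's theorem partitions these $n+1\geq k+2$ points into two groups whose convex hulls meet, and any point of that meeting lies in every set.

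The only place that needs care is the dimension count. The hulls live in $[0,1]^k$, which is $k$-dimensional, so the Helly number is $k+1$. One might try to improve this using affine relations among the event masses. For example, if the $G_i$ partition $Y^T$, the masses sum to one and the hulls lie in a $(k-1)$-dimensional affine subspace. That would only strengthen the certificate, and it is not needed for the stated bound. No compactness issue arises, since the family is finite. Finiteness of $U$ is guaranteed because $\Lambda_j$ is finite.
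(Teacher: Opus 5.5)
Your proposal is correct and is the paper's argument: each measurement hull is a linear image of a convex layer hull, hence a convex subset of $\R^k$, and the conclusion is Helly's theorem in $\R^k$ applied to the finite family $\{H_T^{(j)}(a;\G_T):a\in U\}$. Your extra details are sound, namely the case $|U|\le k+1$, the Radon-based induction, and the remark that finiteness makes compactness unnecessary. The no-dead-history caveat is not needed either, since the hypothesis with $|V|=1$ already makes every hull nonempty.
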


\begin{proof}
Each measurement hull \(H_T^{(j)}(a;\G_T)\) is a convex subset of \(\mathbb R^k\), because it is the linear image of a convex polytope. The claim is exactly Helly's theorem~\cite{helly1923} in \(\mathbb R^k\) applied to the finite family \(\{H_T^{(j)}(a;\G_T):a\in U\}\).\end{proof}

\begin{figure}[H]
\centering
\resizebox{0.98\linewidth}{!}{%
\begin{tikzpicture}[node distance=1.15cm, >=stealth, thick, every node/.style={font=\small}]
\node[draw, rounded corners, minimum width=2.3cm, minimum height=.72cm] (p) {person layer};
\node[draw, rounded corners, below=of p, minimum width=2.3cm, minimum height=.72cm] (o) {object layer};
\node[draw, rounded corners, below=of o, minimum width=2.3cm, minimum height=.72cm] (c) {cohort layer};
\node[draw, rounded corners, right=2.3cm of o, minimum width=2.6cm, minimum height=.72cm] (gate) {norm gate};
\node[draw, rounded corners, right=2.3cm of gate, minimum width=2.5cm, minimum height=.72cm] (trace) {public trace};
\node[draw, rounded corners, right=2.3cm of trace, minimum width=2.7cm, minimum height=.72cm] (meas) {measurement hull};
\node[draw, rounded corners, below=of meas, minimum width=2.7cm, minimum height=.72cm] (loc) {localization set};
\draw[->] (p.east) -- (gate.west);
\draw[->] (o.east) -- (gate.west);
\draw[->] (c.east) -- (gate.west);
\draw[->] (gate) -- (trace);
\draw[->] (trace) -- (meas);
\draw[->] (meas) -- (loc);
\end{tikzpicture}%
}
\caption{A stratified room. A real residue may appear in the measurement hull while the localization set remains wide across person, object, or cohort layers.}
\label{fig:stratified-room}
\end{figure}

\section{Bounded attention and perceptual apertures}

The previous sections treat the audience as a class of tests on the full public trace. This is already relative to an audience, but it still grants the audience a complete trace before the test is applied. Many rooms are not experienced that way. A human observer, a crowded desk, or a small review group receives a thinned trace: some coordinates are missed, some are delayed, some are merged with nearby events, and some are encoded only as a coarse impression. The distinction is not cosmetic. A residue can exist in the public trace while failing to exist inside the observer's effective aperture.

\begin{definition}[Attention lens]
Fix a horizon \(T\). An attention lens is a Markov kernel
\[
    L:Y^T\to \Delta(\widetilde Y^T),
\]
where \(\widetilde Y\) is a finite perceived alphabet. For a law \(P\in\Delta(Y^T)\), write \(LP\) for the pushforward law on \(\widetilde Y^T\):
\[
    (LP)(z)=\sum_{y\in Y^T}P(y)L(z\mid y).
\]
An attentive audience is a pair \((L,\widetilde\F_T)\), where \(\widetilde\F_T\subseteq[0,1]^{\widetilde Y^T}\) is a test class on perceived traces.
\end{definition}

The lens is not an error term added after the model. It is part of the public reading process. It may represent gaze direction, dwell time, memory compression, channel delay, queue density, social distraction, or the fact that several candidates are inspected at once. The observer never applies \(f\) to \(Y_{1:T}\). The observer applies \(f\) to \(\widetilde Y_{1:T}\).

\begin{definition}[Attention-filtered distance]
For an attentive audience \((L,\widetilde\F_T)\), define
\[
    \Delta_{L,\widetilde\F_T}(P,Q)
    :=\Delta_{\widetilde\F_T}(LP,LQ)
    =\sup_{f\in\widetilde\F_T}\left|\mathbb E_{LP}f-\mathbb E_{LQ}f\right|.
\]
The associated lifted full-trace test class is
\[
    L^\ast\widetilde\F_T=\{f\circ L:f\in\widetilde\F_T\},
\]
where \((f\circ L)(y)=\sum_z f(z)L(z\mid y)\). Thus
\[
    \Delta_{L,\widetilde\F_T}(P,Q)=\Delta_{L^\ast\widetilde\F_T}(P,Q).
\]
\end{definition}

\begin{proposition}[Data processing for audience distance]
For every attention lens \(L\), test class \(\widetilde\F_T\), and laws \(P,Q\),
\[
    \Delta_{L,\widetilde\F_T}(P,Q)\leq \TV(P,Q).
\]
If \(L^\ast\widetilde\F_T\subseteq \F_T\), then
\[
    \Delta_{L,\widetilde\F_T}(P,Q)\leq \Delta_{\F_T}(P,Q).
\]
\end{proposition}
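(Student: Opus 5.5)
The plan is to reduce everything to the lifted test class $L^\ast\widetilde\F_T$ defined just above. The definition of the attention-filtered distance already records the identity $\Delta_{L,\widetilde\F_T}(P,Q)=\Delta_{L^\ast\widetilde\F_T}(P,Q)$. I will verify this identity once by exchanging finite sums:
\[
\mathbb E_{LP}f=\sum_{z}\sum_{y}P(y)L(z\mid y)f(z)=\sum_y P(y)(f\circ L)(y)=\mathbb E_P(f\circ L).
\]
After this, both inequalities become statements about test classes on the original carrier $\Om_T=Y^T$.

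For the first inequality, observe that each lifted test $f\circ L$ is a convex combination of values of $f$ in $[0,1]$, so $f\circ L$ takes values in $[0,1]$. Hence $L^\ast\widetilde\F_T\subseteq[0,1]^{Y^T}$. Part (d) of the elementary-properties lemma (monotonicity in the test class) then gives
\[
\Delta_{L^\ast\widetilde\F_T}(P,Q)\leq\Delta_{[0,1]^{Y^T}}(P,Q)=\TV(P,Q),
\]
where the last equality is the identification recorded in the definition of ambient distance. An alternative route is to note that TV contracts under Markov kernels, since $\TV(LP,LQ)\le\TV(P,Q)$ by the triangle inequality on $\sum_z|\sum_y(P-Q)(y)L(z\mid y)|$, and $\widetilde\F_T\subseteq[0,1]^{\widetilde Y^T}$. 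I would mention this route only as a remark.

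For the second inequality, the hypothesis $L^\ast\widetilde\F_T\subseteq\F_T$ lets me apply part (d) of the same lemma directly:
\[
\Delta_{L,\widetilde\F_T}(P,Q)=\Delta_{L^\ast\widetilde\F_T}(P,Q)\leq\Delta_{\F_T}(P,Q).
\]

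There is no real obstacle here. The only step needing care is the Fubini-type exchange and the check that $f\circ L$ stays in $[0,1]$, and both are immediate because all alphabets are finite and $L(\cdot\mid y)$ is a probability vector.
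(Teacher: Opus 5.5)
Your proposal is correct and follows essentially the paper's argument: each lifted test $f\circ L$ is $[0,1]$-valued, so monotonicity in the test class bounds the attention-filtered distance by total variation, and the second inequality is the same monotonicity applied to the assumed inclusion $L^\ast\widetilde\F_T\subseteq\F_T$. Your explicit exchange of sums verifying $\mathbb E_{LP}f=\mathbb E_P(f\circ L)$ just makes precise the identity the paper records in the definition of the attention-filtered distance.
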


\begin{proof}
The first inequality follows because every lifted test \(f\circ L\) is a bounded function from \(Y^T\) to \([0,1]\), and total variation is the supremum over all such tests. The second follows by inclusion of the lifted class in \(\F_T\).
\end{proof}

The proposition says that attention cannot create more statistical separation than a full observer with access to all bounded tests. It can, however, destroy a separation that exists at the full-trace level. This is the missing layer between ``the trace contains a residue'' and ``the observer can use the residue.'' The inequality is the audience-relative form of the data-processing law familiar from information-flow analysis~\cite{smith2009,alvim2020}.

\begin{definition}[Attentional choric masking]
Two laws \(P,Q\in\Delta(Y^T)\) are \((L,\widetilde\F_T,\eps)\)-attention-masked when
\[
    \Delta_{L,\widetilde\F_T}(P,Q)\leq \eps.
\]
They are \((\F_T,\eta;L,\widetilde\F_T,\eps)\)-split when
\[
    \Delta_{\F_T}(P,Q)>\eta
    \qquad\text{and}\qquad
    \Delta_{L,\widetilde\F_T}(P,Q)
    \leq\eps .
\]
In a split pair, exposure exists for the full trace audience but not for the bounded attentive audience.
\end{definition}

\begin{definition}[Budgeted gaze lens]
Let \(\mathcal I\) be a finite family of coordinate subsets of \(\{1,\ldots,T\}\). A budgeted gaze profile is a probability vector \(\beta\in\Delta(\mathcal I)\). The corresponding lens first samples \(I\sim\beta\), reveals \((y_i)_{i\in I}\), and replaces all other coordinates by a blank symbol \(\bot\). The perceived alphabet is \(\widetilde Y=Y\cup\{\bot\}\).
\end{definition}

This lens is intentionally simple. It turns attention into a stochastic aperture. A larger dwell budget places more mass on larger coordinate sets. A fragmented room places mass on many small coordinate sets. The observer's test may be sharp on what is seen and completely powerless on what is blanked.

\begin{definition}[Diagnostic support]
An event \(G\subseteq Y^T\) is supported by coordinates \(J\subseteq\{1,\ldots,T\}\) if membership in \(G\) is determined by \((y_j)_{j\in J}\). For a gaze profile \(\beta\), its hit probability on \(J\) is
\[
    h_\beta(J)=\beta\{I\in\mathcal I:J\subseteq I\}.
\]
\end{definition}

\begin{theorem}[Aperture bound for a cylinder residue]
Let \(G\subseteq Y^T\) be supported by \(J\). Let \(L_\beta\) be a budgeted gaze lens. Let \(\widetilde G\subseteq\widetilde Y^T\) be any perceived event that can agree with \(G\) only when all coordinates in \(J\) are visible. Then for all laws \(P,Q\),
\[
    \left|(L_\beta P)(\widetilde G)-(L_\beta Q)(\widetilde G)\right|
    \leq h_\beta(J)\, |P(G)-Q(G)| + \bigl(1-h_\beta(J)\bigr)\,\kappa(\widetilde G,J),
\]
where \(\kappa(\widetilde G,J)\) is the largest possible perceived imbalance contributed by non-hit apertures. In particular, if non-hit apertures make \(\widetilde G\) independent of the distinction between \(P\) and \(Q\), then
\[
    \left|(L_\beta P)(\widetilde G)-(L_\beta Q)(\widetilde G)\right|
    \leq h_\beta(J)\, |\res_{P,Q}(G)|.
\]
\end{theorem}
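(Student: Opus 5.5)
The plan is to condition on the sampled aperture $I\sim\beta$ and split the perceived probability according to whether $I$ covers the diagnostic support $J$. By definition of the budgeted gaze lens,
\[
    (L_\beta P)(\widetilde G)=\sum_{I\in\mathcal I}\beta(I)\,P\bigl(\{y:\ \text{mask}_I(y)\in\widetilde G\}\bigr),
\]
where $\text{mask}_I(y)$ reveals $(y_i)_{i\in I}$ and blanks the rest. I would write this sum as a hit part over $\mathcal I_{\rm hit}=\{I:J\subseteq I\}$ and a non-hit part over its complement; the hit part carries total weight $h_\beta(J)$ and the non-hit part carries weight $1-h_\beta(J)$. Subtracting the same decomposition for $Q$ and applying the triangle inequality separates the two contributions in the statement.

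For the hit part, I would read the hypothesis that $\widetilde G$ ``can agree with $G$ only when all coordinates in $J$ are visible'' in its operative form. For $I\supseteq J$, the preimage $\{y:\text{mask}_I(y)\in\widetilde G\}$ coincides with $G$. This is possible because $G$ is determined by $(y_j)_{j\in J}$ and all of these coordinates are revealed. Then each hit term contributes $\beta(I)\,(P(G)-Q(G))$. Summing over $\mathcal I_{\rm hit}$ gives exactly $h_\beta(J)\,\res_{P,Q}(G)$, and its absolute value is $h_\beta(J)|P(G)-Q(G)|$. This is the gaze-lens analogue of the Bernoulli aperture identity, with $p_C$ replaced by $h_\beta(J)$.

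For the non-hit part, I would define $\kappa(\widetilde G,J)$ as the maximum over non-hit $I$ of $|P(\text{mask}_I^{-1}\widetilde G)-Q(\text{mask}_I^{-1}\widetilde G)|$, or equivalently as the worst-case conditional imbalance. Then the non-hit contribution is bounded by $(1-h_\beta(J))\kappa$, since it is a $\beta$-weighted sum of terms each bounded by $\kappa$. The ``in particular'' clause follows because independence means every non-hit term has equal $P$- and $Q$-probability. Then $\kappa=0$, and only the hit term survives.

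The main obstacle is definitional rather than computational. The phrase ``can agree with $G$ only when $J$ is visible'' must be pinned down so that the hit preimage is exactly $G$ rather than merely contained in it. I would state this as a standing interpretation: on hit apertures $\widetilde G$ tests $G$, and all other behaviour is absorbed into $\kappa$. If only containment is available, I would instead fold the discrepancy on hit apertures into $\kappa$ as well. That keeps the displayed inequality valid but makes $\kappa$ a bound on all non-$G$ behaviour, not only on non-hit behaviour.
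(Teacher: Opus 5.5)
Your proposal is correct and follows the paper's proof. Both decompose $L_\beta=\sum_{I}\beta(I)L_I$, split into hit apertures ($J\subseteq I$, total weight $h_\beta(J)$) and non-hit apertures, and bound each part; your reading that the hit-aperture preimage of $\widetilde G$ is exactly $G$ is what justifies the paper's claim that hit apertures ``carry at most the full residue'' of $G$. Drop the containment fallback, though. Hit-aperture discrepancies enter with weight $h_\beta(J)$, not $1-h_\beta(J)$, so they cannot in general be absorbed into the $(1-h_\beta(J))\,\kappa(\widetilde G,J)$ term (e.g.\ when $h_\beta(J)=1$). Absorbing them would also contradict the statement's definition of $\kappa$ and break the ``in particular'' clause.
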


\begin{proof}
Decompose the lens by the sampled aperture:
\[
    L_\beta=\sum_{I\in\mathcal I}\beta(I)L_I.
\]
Split the sum into apertures with \(J\subseteq I\) and apertures with \(J\not\subseteq I\). On hit apertures, the perceived event can carry at most the full residue of the diagnostic event, hence the contribution is bounded by \(h_\beta(J)|P(G)-Q(G)|\). On non-hit apertures, at least one determining coordinate is blanked; by definition their total possible imbalance is bounded by \((1-h_\beta(J))\kappa(\widetilde G,J)\). If the non-hit perceived law no longer depends on the distinction, this second term is zero.
\end{proof}

\begin{corollary}[Exposure without perception]
There exist laws \(P,Q\), a full-trace event audience \(\F_T\), and a budgeted gaze lens \(L_\beta\) such that
\[
    \Delta_{\F_T}(P,Q)>0
    \qquad\text{but}\qquad
    \Delta_{L_\beta,\widetilde\F_T}(P,Q)=0.
\]
\end{corollary}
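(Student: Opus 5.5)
The corollary is an existence claim, so the plan is to build one small explicit example and check it against the aperture bound. Take $T=2$, $Y=\{0,1\}$. Let $P$ be the law of two independent fair bits conditioned to agree, so $P$ is uniform on $\{00,11\}$. Let $Q$ be uniform on $\{01,10\}$. Both laws have uniform one-coordinate marginals, and they differ only in the joint pattern of the two coordinates.

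For the full-trace event audience $\F_T$, take the test class generated by the single event $G=\{00,11\}$ together with its complement. This event is supported by $J=\{1,2\}$. Then $\Delta_{\F_T}(P,Q)=|P(G)-Q(G)|=1>0$, by the measurement-reduction proposition.

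For the lens, take the budgeted gaze profile $\beta$ supported on the singletons $\mathcal I=\{\{1\},\{2\}\}$, with mass $1/2$ each. The test class $\widetilde\F_T=[0,1]^{\widetilde Y^T}$ may be left completely unrestricted. The key step is to show $L_\beta P=L_\beta Q$ exactly. Decompose $L_\beta=\sum_I\beta(I)L_I$, as in the proof of the aperture bound. Under $L_{\{1\}}$ the perceived trace is $(y_1,\bot)$, whose law is the first marginal. Under $L_{\{2\}}$ it is $(\bot,y_2)$. Both marginals are uniform under $P$ and under $Q$, so each component pushforward agrees and hence the mixture agrees.

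It follows that $\Delta_{L_\beta,\widetilde\F_T}(P,Q)=\TV(L_\beta P,L_\beta Q)=0$. This is the theorem's case $h_\beta(J)=0$ with the non-hit term vanishing, so the corollary is exactly the extreme instance of the aperture bound.

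The only point needing care is the choice of example: the residue must live entirely in a higher-order interaction among coordinates that no aperture reveals jointly. Any choice where a visible marginal differs would leave $\kappa\neq0$. The parity construction guarantees this, so I expect no genuine obstacle beyond verifying the two marginal computations.
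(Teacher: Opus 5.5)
Your construction is correct. It follows the same skeleton as the paper's proof: place the residue on a coordinate set $J$, pick a gaze profile with $h_\beta(J)=0$, and conclude that the perceived laws coincide.

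Where you differ is in what actually makes the perceived laws coincide. The paper takes $P,Q$ that ``differ only on $J$'' and reads identical perceived distributions off $h_\beta(J)=0$. That step needs the chosen apertures to avoid the differing coordinates altogether, which is automatic when $|J|=1$. The reason is that $h_\beta(J)=0$ still permits an aperture such as $\{1\}\subsetneq J$, which could expose a differing marginal.

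Your parity example is precisely this overlapping case, and you handle it with the exact condition. Since the blank pattern identifies the sampled aperture, $L_\beta P=L_\beta Q$ holds exactly when $P$ and $Q$ share their marginal on $Y^I$ for every $I$ in the support of $\beta$, and you verify this. This gives a sharper instance: every coordinate is inspected with positive probability, the perceived test class is unrestricted, and still nothing is perceived. The paper's version, by contrast, is a schematic template stated for a general coordinate set $J$.

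One correction to your closing remark: the residue need not sit in a higher-order interaction in general. A first-order difference in a coordinate that $\beta$ never visits works just as well, for instance $\beta$ concentrated on $\{2\}$ with $P,Q$ differing only in $y_1$. The parity design is needed only because your apertures cover every coordinate.
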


\begin{proof}
Let \(P\) and \(Q\) differ only on a coordinate set \(J\). Let the full audience contain the indicator of a cylinder supported by \(J\). Choose a gaze profile with \(h_\beta(J)=0\) and choose perceived tests that depend only on the revealed coordinates. Then the full audience separates the laws, while the perceived trace distributions are identical.
\end{proof}

This corollary is the clean mathematical form of an attention shadow. The residue is not absent. It is outside the aperture.

\begin{definition}[Layer attention shadow]
For a stratified carrier \(\Lambda\), layer \(j\), value \(a\in\Lambda_j\), and public law \(P\), define the attention shadow of \(a\) at tolerance \(\eps\) by
\[
    \operatorname{Sh}_{j,\eps}^{L,\widetilde\F_T}(P)
    =\left\{a\in\Lambda_j:
      \inf_{Q\in C_T^{(j)}(a)}\Delta_{L,\widetilde\F_T}(P,Q)\leq\eps
    \right\}.
\]
A value lies in the shadow when it remains compatible after attention filtering, even if it is not compatible under a stronger full-trace audience.
\end{definition}

\begin{theorem}[Attention dilution across simultaneous candidates]
Consider \(n\) candidate bearers and a bounded observer with total gaze budget \(B>0\). Suppose the observer assigns fractions \(\alpha_1,\ldots,\alpha_n\) with \(\sum_i\alpha_i=1\). For bearer \(i\), suppose every diagnostic event that would separate its anomalous-bearing hull from the ordinary union requires a coordinate set \(J_i\), and the gaze profile satisfies
\[
    h_{\beta_i}(J_i)\leq c\alpha_i B
\]
for a constant \(c\) determined by the sampling geometry. If the non-hit apertures carry no diagnostic imbalance, then every perceived event test for bearer \(i\) has separation at most
\[
    c\alpha_i B\,\delta_i,
\]
where \(\delta_i\) is the corresponding full-trace diagnostic residue. Consequently, when attention is split across many compatible candidates, the perceived residue of any one candidate can fall below threshold even when the full-trace residue is positive.
\end{theorem}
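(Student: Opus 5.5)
The argument applies the aperture bound for a cylinder residue to each bearer separately and then substitutes the budget hypothesis $h_{\beta_i}(J_i)\leq c\alpha_iB$. Fix a bearer $i$ and a perceived event test $\widetilde G$ that the observer could apply to bearer $i$'s thinned trace. Let $P$ be a law in the anomalous-bearing hull and $Q$ a law in the ordinary union. By hypothesis, any full-trace event $G$ that separates these two laws is supported by a coordinate set containing $J_i$. We take $G$ to be such a diagnostic event, with full-trace residue $|\res_{P,Q}(G)|\leq\delta_i$. Since $\widetilde G$ can agree with a separating event only when the determining coordinates are visible, the aperture bound gives
\[
    \bigl|(L_{\beta_i}P)(\widetilde G)-(L_{\beta_i}Q)(\widetilde G)\bigr|
    \leq h_{\beta_i}(J_i)\,|P(G)-Q(G)|+\bigl(1-h_{\beta_i}(J_i)\bigr)\kappa(\widetilde G,J_i).
\]

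The hypothesis that non-hit apertures carry no diagnostic imbalance sets $\kappa(\widetilde G,J_i)=0$. This is exactly the ``in particular'' clause of the aperture theorem. The bound then becomes $h_{\beta_i}(J_i)\,\delta_i\leq c\alpha_iB\,\delta_i$. Because $\widetilde G$ was an arbitrary perceived event test, the separation bound holds for every such test.

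One subtlety arises if diagnostic events are supported on sets strictly larger than $J_i$. In that case we use monotonicity of the hit probability, $J_i\subseteq J$ implies $h_{\beta_i}(J)\leq h_{\beta_i}(J_i)$, so the same bound survives.

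For the consequence, observe that the $\alpha_i$ lie in the simplex. Hence for any threshold $\tau>0$, every bearer with $\alpha_i<\tau/(cB\delta_i)$ has perceived separation below $\tau$ even though $\delta_i>0$. With $n$ candidates, at least one index has $\alpha_i\leq 1/n$. Under uniform splitting, every index has $\alpha_i=1/n$. In either case the perceived residue is at most $cB\delta_i/n$, which falls below $\tau$ once $n>cB\delta_i/\tau$.

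The main obstacle is the first step: the aperture theorem must apply to \emph{every} perceived event test, not only to one that is a thinned copy of a fixed $G$. I would handle this by decomposing $L_{\beta_i}=\sum_I\beta_i(I)L_I$ directly. On a non-hit aperture, the hypothesis gives equal perceived laws, so those terms cancel. On a hit aperture, the difference $L_IP-L_IQ$ is the projection onto the coordinates in $I$. That difference is bounded by the diagnostic residue $\delta_i$, where $\delta_i$ is read as the maximal full-trace residue of events determined by visible coordinates. Summing the hit apertures yields the factor $h_{\beta_i}(J_i)$.
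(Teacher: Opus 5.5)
Your proposal is correct and is the paper's argument: apply the aperture bound to bearer $i$, set the non-hit term $\kappa(\widetilde G,J_i)$ to zero by hypothesis, and substitute $h_{\beta_i}(J_i)\leq c\alpha_i B$. Your additions are sound refinements of what the paper leaves implicit: monotonicity of $h_{\beta_i}$ in the support, the pigeonhole index with $\alpha_i\leq 1/n$, and reading $\delta_i$ as the largest residue of events determined by revealed coordinates so that every perceived test is covered. In the pigeonhole step, note that $\delta_i\leq 1$, so $n>cB/\tau$ works uniformly even though the index $i$ depends on $n$.
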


\begin{proof}
Apply the aperture bound to bearer \(i\). The hit probability is at most \(c\alpha_iB\), and the non-hit term is zero by assumption. The perceived separation is therefore at most \(c\alpha_iB\delta_i\). If a decision threshold \(\tau\) satisfies \(c\alpha_iB\delta_i\leq\tau\), the bounded observer cannot certify the diagnostic event for that bearer at threshold \(\tau\), even though \(\delta_i>0\) on the full trace.
\end{proof}

\begin{definition}[Attentional room radius]
The attention-filtered exposure radius of locus \(\theta\) against a comparison set \(B\subseteq\Theta\setminus\{\theta\}\) is
\[
    \mathfrak r_T^{L}(\theta;B,\widetilde\F_T)
    =\inf_{P\in\operatorname{Ch}_T(\theta)}
      \inf_{Q\in\operatorname{Ch}_T(B)}
      \Delta_{L,\widetilde\F_T}(P,Q).
\]
The attention-filtered room radius is
\[
    \mathfrak R_T^{L}(\A,\widetilde\F_T)
    =\sup_{\theta\in\Theta}\mathfrak r_T^{L}(\theta;\Theta\setminus\{\theta\},\widetilde\F_T).
\]
\end{definition}

The inequality \(\mathfrak R_T^L\leq \mathfrak R_T\) is not a promise of safety. It is a warning about measurement. A system may look safe because the observer lacks enough aperture, not because the full public trace lacks exposure.

\begin{figure}[H]
\centering
\resizebox{0.98\linewidth}{!}{%
\begin{tikzpicture}[node distance=1.25cm, >=stealth, thick, every node/.style={font=\small}]
\node[draw, rounded corners, minimum width=2.7cm, minimum height=.75cm] (trace) {full trace $Y_{1:T}$};
\node[draw, rounded corners, right=2.1cm of trace, minimum width=2.9cm, minimum height=.75cm] (lens) {attention lens $L$};
\node[draw, rounded corners, right=2.1cm of lens, minimum width=2.9cm, minimum height=.75cm] (ptrace) {perceived trace $\widetilde Y_{1:T}$};
\node[draw, rounded corners, right=2.1cm of ptrace, minimum width=2.7cm, minimum height=.75cm] (score) {suspicion score};
\node[draw, rounded corners, below=1.1cm of lens, minimum width=2.8cm, minimum height=.75cm] (budget) {gaze budget};
\node[draw, rounded corners, below=1.1cm of ptrace, minimum width=2.8cm, minimum height=.75cm] (shadow) {attention shadow};
\draw[->] (trace) -- (lens);
\draw[->] (lens) -- (ptrace);
\draw[->] (ptrace) -- (score);
\draw[->] (budget) -- (lens);
\draw[->] (ptrace) -- (shadow);
\end{tikzpicture}%
}
\caption{Attention-filtered reading. A residue may be present in the full trace and absent from the perceived trace after gaze thinning, dwell limits, and budget splitting.}
\label{fig:attention-lens}
\end{figure}

\section{Inspection hall chamber}

This section gives a finite chamber in which the residue is real but not uniquely borne. The chamber is stylized and analytic: it measures when a public anomaly localizes and when it remains distributed across ordinary trace sources. The chamber now has two readings: the full public trace and the attention-filtered trace available to a bounded observer.

Let the hidden carrier be
\[
    \Lambda=B\times O\times C\times U,
\]
where \(B\) is a bearer layer, \(O\) is an object-class layer, \(C\) is a companion-or-cohort layer, and \(U\) is an unmodeled-condition layer. The audience sees a trace
\[
    Y_{1:T}=(\text{queue phase},\text{bag image class},\text{odor band},\text{pause band},\text{cohort motion},\text{officer contact})_{1:T}
\]
after the room coarsens all raw measurements into a finite alphabet. The audience event family is
\[
    \G_T=\{G_1,G_2,G_3,G_4\},
\]
where \(G_1\) is an odor-band event, \(G_2\) an organic-image event, \(G_3\) a pause-band event, and \(G_4\) a cohort-motion mismatch event. These names are only public event names. The model never assumes that an event is a definitive cause.

The layer convention used in the chamber is as follows. The bearer coordinate $b\in B$ is the candidate carrier of a trace source. The object coordinate $o\in O$ records the public object class, such as ordinary, food-like, medicine-like, gift-like, empty, or anomalous. The cohort coordinate $c\in C$ represents nearby ordinary sources and companion motion. The condition coordinate $u\in U$ records clock phase, queue density, sensor condition, and other environmental variables. The audience does not see these coordinates directly; it only applies the finite measurement family $\G_T$ to the public trace. Localization therefore means recovering a layer value through its measurement hull, not naming a person by fiat.

For bearer \(b\), define the anomalous-bearing hull
\[
    A_T(b)=\conv\{P_{(b,o,c,u),r}^{\pi,T}:o\in O_{\rm an},\ c\in C,\ u\in U,\ r\in R,\ \pi\in\Pi_T(r)\},
\]
and the ordinary-bearing hull
\[
    N_T(b)=\conv\{P_{(b,o,c,u),r}^{\pi,T}:o\in O_{\rm ord},\ c\in C,\ u\in U,\ r\in R,\ \pi\in\Pi_T(r)\}.
\]
Here \(O_{\rm an}\) and \(O_{\rm ord}\) are disjoint analytic classes. They need not be publicly named in the trace.

\begin{definition}[Inspection ambiguity set]
Given a public law \(P\), define the bearer ambiguity set at tolerance \(\eps\) by
\[
    \mathcal B_\eps(P)
    =\left\{b\in B:
      \inf_{Q\in A_T(b)\cup N_T(b)}
      \Delta_{\F(\G_T)}(P,Q)\leq\eps
    \right\}.
\]
The anomalous-bearing ambiguity set is
\[
    \mathcal B_\eps^{\rm an}(P)
    =\left\{b\in B:
      \inf_{Q\in A_T(b)}
      \Delta_{\F(\G_T)}(P,Q)\leq\eps
    \right\}.
\]
\end{definition}

The first set says which bearers can explain the public measurements by any ordinary or anomalous source. The second says which bearers can explain them as anomalous bearers. The gap between the two sets is the gap between detecting a residue and assigning it.

\begin{theorem}[Group-stop certificate]
Let \(P\) be the public law in the inspection chamber and put \(v=\Phi_{\G_T}(P)\). If \(v\) lies within \(\eps\) of the measurement hulls of \(q\) distinct bearers, then every sound \(\G_T\)-based selection rule that contains the true compatible bearer must select at least \(q\) bearers. Equivalently,
\[
    |\mathcal B_\eps(P)|\geq q.
\]
If \(v\) is within \(\eps\) of an anomalous-bearing hull for one bearer and within \(\eps\) of ordinary-bearing hulls for \(q-1\) other bearers, then the room has detected an anomalous residue without localizing an anomalous bearer by \(\G_T\) alone.
\end{theorem}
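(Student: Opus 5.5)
The plan is to reduce the Group-stop certificate to the Measurement reduction proposition and the Residue-without-localization theorem, using the bearer hulls $A_T(b)\cup N_T(b)$ in place of layer hulls. Fix $P$ and write $v=\Phi_{\G_T}(P)$. The hypothesis says that for each of $q$ distinct bearers $b_1,\ldots,b_q$ there is a point $x_i$ in the measurement image $\Phi_{\G_T}(A_T(b_i)\cup N_T(b_i))$ with $\|v-x_i\|_\infty\leq\eps$. I will read ``the measurement hull of a bearer'' as this image, or as the image of its convex hull. In the latter case, I first note that $\Phi_{\G_T}$ is linear, so the image of $\conv(A_T(b)\cup N_T(b))$ is the convex hull of the two images. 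The first step is to pick preimages $Q_i\in A_T(b_i)\cup N_T(b_i)$ with $\Phi_{\G_T}(Q_i)=x_i$. Compactness of the polytopes (Polytope consequence) ensures that the infimum defining distance to the hull is attained, so such $x_i$ and $Q_i$ exist.

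The second step applies the Measurement reduction proposition, $\Delta_{\F(\G_T)}(P,Q_i)=\|v-x_i\|_\infty\leq\eps$. This gives $\inf_{Q\in A_T(b_i)\cup N_T(b_i)}\Delta_{\F(\G_T)}(P,Q)\leq\eps$, so $b_i\in\mathcal B_\eps(P)$ for each $i$. Since the $b_i$ are distinct, $|\mathcal B_\eps(P)|\geq q$. The claim about selection rules then follows exactly as in the Carrier lower bound corollary. A rule that uses only the measurement $v$ cannot tell apart the $q$ scenarios in which the trace law is $Q_i$. Up to tolerance $\eps$, each $Q_i$ yields the same $\G_T$-measurement as $P$, so for every $i$ the true compatible bearer may be $b_i$. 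A sound rule, one guaranteed to contain the true compatible bearer, must therefore contain every $b_i$, which is at least $q$ bearers. I would phrase soundness as correctness against every compatible explanation, since that is what makes the argument go through.

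For the second sentence, take bearer $b_1$ with $v$ within $\eps$ of $\Phi_{\G_T}(A_T(b_1))$, and bearers $b_2,\ldots,b_q$ with $v$ within $\eps$ of $\Phi_{\G_T}(N_T(b_i))$. The same preimage argument shows $b_1\in\mathcal B^{\rm an}_\eps(P)$, so an anomalous explanation exists and a residue relative to the ordinary baseline can be registered. At the same time, $P$ is $\eps$-compatible with ordinary-bearing laws of $q-1$ other bearers. By the Group residue is not localization proposition, applied with $\ell_0=b_1$ and comparison group $\{b_2,\ldots,b_q\}$ after taking the convex hull, no conclusion that depends only on $\G_T$-expectations singles out $b_1$. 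This is the claimed detection without localization.

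I expect the main obstacle to be interpretive rather than technical. The phrase ``detected an anomalous residue'' needs a precise meaning. I would pin it down as $\mathcal B^{\rm an}_\eps(P)\neq\varnothing$, together with $\mathcal B_\eps(P)\supsetneq\mathcal B^{\rm an}_\eps(P)$ or $|\mathcal B_\eps(P)|\geq q\geq2$, and I would state explicitly that nothing stronger is claimed. The point to be careful about is that the group-residue proposition gives exact intersection, while the hypothesis here is only $\eps$-closeness. I would handle this by rerunning its argument with $\eps$-tolerance via the triangle inequality from the Elementary properties lemma.
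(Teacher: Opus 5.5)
Your proposal is correct and is essentially the paper's argument: the first claim is the carrier lower bound applied to the bearer layer, which you simply unpack via measurement reduction and the $\eps$-compatibility notion of soundness, and the second claim is the observation that all admissible tests factor through $\Phi_{\G_T}$, so one anomalous and $q-1$ ordinary explanations remain unseparated. One small correction: commit to the union reading of the bearer hull, which is the reading that makes the ``equivalently $|\mathcal B_\eps(P)|\geq q$'' true; under the convex-hull reading, a point of $\conv\bigl(\Phi_{\G_T}(A_T(b))\cup\Phi_{\G_T}(N_T(b))\bigr)$ need not have a preimage in $A_T(b)\cup N_T(b)$, so your preimage step would fail.
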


\begin{proof}
The first claim is the carrier lower bound applied to the bearer layer. For the second claim, the public measurement vector is compatible with one anomalous explanation and with \(q-1\) ordinary explanations. Since all tests factor through \(\Phi_{\G_T}\), no test in the audience class separates these explanations by more than \(\eps\). The residue is therefore a property of the measurement region, not of a unique bearer.\end{proof}

\begin{corollary}[Attentional group-stop certificate]
Let \(P\) be the public law in the inspection chamber and let \((L,\widetilde\F_T)\) be an attentive audience. If the perceived law \(LP\) lies within \(\eps\) of the attention-filtered hulls
\[
    L A_T(b)\cup L N_T(b)
\]
for \(q\) distinct bearers, then every sound attentive selection rule that contains all compatible bearers must select at least \(q\) bearers. If, in addition, the full trace audience has a diagnostic residue for one bearer while the attentive audience has radius at most \(\eps\), then the chamber has trace-level exposure but no attention-level bearer certificate.
\end{corollary}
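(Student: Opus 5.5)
The plan is to transport the group-stop certificate through the attention lens, replacing the full-trace audience \(\F(\G_T)\) by the lifted class \(L^\ast\widetilde\F_T\) and the hulls \(A_T(b),N_T(b)\) by their pushforwards. Two facts do all the work. First, the pushforward \(P\mapsto LP\) is affine, so \(LA_T(b)\) and \(LN_T(b)\) are again compact convex polytopes in \(\Delta(\widetilde Y^T)\). Second, by the definition of attention-filtered distance, \(\Delta_{L,\widetilde\F_T}(P,Q)=\Delta_{\widetilde\F_T}(LP,LQ)\). With these, the hypothesis \(\inf_{R\in LA_T(b)\cup LN_T(b)}\Delta_{\widetilde\F_T}(LP,R)\leq\eps\) for \(q\) bearers reads as \(\inf_{Q\in A_T(b)\cup N_T(b)}\Delta_{L,\widetilde\F_T}(P,Q)\leq\eps\). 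By compactness (the polytope consequence), this infimum is attained.

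Step one establishes the lower bound on selection size. I define the attentive bearer ambiguity set \(\mathcal B^{L}_\eps(P)\) exactly as \(\mathcal B_\eps(P)\), but with \(\Delta_{L,\widetilde\F_T}\) in place of \(\Delta_{\F(\G_T)}\). The hypothesis then gives \(|\mathcal B^L_\eps(P)|\geq q\). For each such bearer \(b\) there is a law \(Q_b\) in its hull with \(|\mathbb E_{LQ_b}f-\mathbb E_{LP}f|\leq\eps\) for every \(f\in\widetilde\F_T\). A sound attentive selection rule, one that depends on the trace only through perceived tests and must contain every compatible bearer, therefore cannot exclude any of these \(q\) bearers. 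Excluding one would mean rejecting a bearer whose law is \(\eps\)-indistinguishable from \(P\) to the attentive audience. This is the argument of the carrier lower bound corollary, with the measurement map \(\Phi_{\G_T}\) replaced by the composite reading \(f\mapsto\mathbb E_{L\cdot}f\). If \(\widetilde\F_T\) is a finite event family, I would instead invoke measurement reduction on the perceived alphabet, which makes the step literally the earlier corollary.

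Step two handles the split statement. Suppose the full trace audience separates one bearer, \(\Delta_{\F(\G_T)}(P,Q)>\eta\) for the relevant hulls, while the attentive radius is at most \(\eps\). Then the pair is \((\F_T,\eta;L,\widetilde\F_T,\eps)\)-split in the sense of attentional choric masking. By step one, the attentive audience's compatible set still contains \(q\) bearers, so no attentive certificate singles out the diagnostic bearer. The data-processing proposition confirms that this configuration is consistent: attention can only shrink separation. The kernel-enlargement theorem supplies the interpretation that the residue lies in \(\Ker(\widetilde\Phi L)\setminus\Ker\Phi_{\F_T}\), or within \(\eps\) of it.

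The main obstacle is pinning down what a ``sound attentive selection rule'' means precisely enough for the exclusion argument to be rigorous. I would fix it as any rule whose output is a function of the values \((\mathbb E_{LP}f)_{f\in\widetilde\F_T}\) and which is required to retain every bearer within \(\eps\) in \(\Delta_{L,\widetilde\F_T}\). Once that definition is set, the remainder is bookkeeping.
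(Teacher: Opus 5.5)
Your proposal is correct and follows the paper's argument. The paper likewise applies the carrier lower bound after replacing each trace law \(Q\) by its pushforward \(LQ\), and reads the second claim directly as the split condition: full-trace separation is positive while the attention-filtered distance to the compatible hull is within tolerance. Your extra appeals in the second part to step one, data processing and kernel enlargement are consistent but not needed. The attentive-radius hypothesis alone already rules out an attention-level bearer certificate.
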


\begin{proof}
Apply the carrier lower bound after replacing every trace law \(Q\) by its pushforward \(LQ\). The second statement is exactly the split condition: full-trace separation is positive, while attention-filtered distance to the compatible hull is within tolerance.
\end{proof}

\begin{proposition}[Attention shadow in the chamber]
Suppose a diagnostic event \(G\) for an anomalous-bearing hull is supported by coordinates \(J\), and a bounded observer's gaze profile has hit probability \(h_\beta(J)\leq h\). If non-hit apertures carry no diagnostic imbalance, then the attentive residue of \(G\) is at most
\[
    h\,\res^\star(G),
\]
where \(\res^\star(G)\) is the largest full-trace residue of \(G\) over the relevant anomalous-versus-ordinary comparison. Thus a chamber can have a real diagnostic event and still place its bearer in the attention shadow whenever \(h\,\res^\star(G)\) is below the room's selection threshold.
\end{proposition}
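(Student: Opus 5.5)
The plan is to reduce the statement to the aperture bound for a cylinder residue, applied uniformly over the relevant anomalous-versus-ordinary comparison pairs. First I fix the comparison: the attentive residue of \(G\) is the quantity
\[
    \sup\bigl|(L_\beta P)(\widetilde G)-(L_\beta Q)(\widetilde G)\bigr|,
\]
with \(P\) ranging over the anomalous-bearing hull \(A_T(b)\), \(Q\) over the ordinary-bearing hulls, and \(\widetilde G\) over the perceived events that can agree with \(G\) only when all coordinates of \(J\) are visible. The quantity \(\res^\star(G)\) is the supremum of \(|P(G)-Q(G)|\) over the same pairs. By the polytope consequence both suprema are attained, since each hull is a compact polytope and the maps \(P\mapsto (L_\beta P)(\widetilde G)\) and \(P\mapsto P(G)\) are linear.

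Next, for each fixed pair \((P,Q)\) I invoke the aperture bound. The hypothesis that non-hit apertures carry no diagnostic imbalance means the \(\kappa(\widetilde G,J)\) term vanishes, so
\[
    \bigl|(L_\beta P)(\widetilde G)-(L_\beta Q)(\widetilde G)\bigr|\leq h_\beta(J)\,|\res_{P,Q}(G)|.
\]
Monotonicity then gives \(h_\beta(J)\leq h\) and \(|\res_{P,Q}(G)|\leq\res^\star(G)\), so each pair contributes at most \(h\,\res^\star(G)\). Taking the supremum over pairs and admissible \(\widetilde G\) yields the bound.

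For the shadow conclusion, suppose \(h\,\res^\star(G)\leq\tau\), where \(\tau\) is the selection threshold. Then no perceived test built on \(G\) separates the bearer's anomalous hull from the ordinary explanations by more than \(\tau\). Taking \(\eps=\tau\) in the layer attention shadow definition (with the lens \(L_\beta\)), the bearer's value remains compatible after filtering and so lies in \(\operatorname{Sh}\). This is the same argument as in the attention-dilution theorem, specialised to a single bearer.

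The main obstacle is making precise that the non-hit hypothesis holds uniformly over the hull rather than just for one pair. I would interpret it literally as: for every \(I\) with \(J\not\subseteq I\), the restricted lens \(L_I\) assigns equal mass to \(\widetilde G\) under every anomalous and every ordinary law. Under that reading \(\kappa=0\) for all pairs at once, and the pointwise bound passes through the supremum without further work.
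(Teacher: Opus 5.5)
Your proposal is correct and follows the paper's proof. You apply the aperture bound, with the $\kappa$ term removed, to each anomalous-versus-ordinary pair, then use $h_\beta(J)\le h$ and $|\res_{P,Q}(G)|\le\res^\star(G)$ and maximize over the relevant laws. Your shadow step tacitly takes the perceived test class to consist only of the $G$-based perceived events, which is the same implicit reading the paper relies on when it draws that conclusion.
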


\begin{proof}
This is the aperture bound specialized to the anomalous-bearing and ordinary-bearing hulls of the chamber, followed by maximization over the relevant laws.
\end{proof}

\begin{proposition}[Unique sensor break]
Suppose there is an event \(G_\star\) such that for some bearer \(b_\star\),
\[
    P(G_\star)\geq a\quad\text{for all }P\in A_T(b_\star),
\]
while for every ordinary-bearing law \(Q\in N_T(b)\) and every \(b\in B\),
\[
    Q(G_\star)\leq b_0<a.
\]
If \(1_{G_\star}\) belongs to the audience class, then the best possible masking gap between \(A_T(b_\star)\) and the ordinary union is at least \(a-b_0\). Thus the choric mask breaks exactly when the audience obtains a public event whose attainable interval is separated from all ordinary cover intervals.
\end{proposition}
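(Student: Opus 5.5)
The plan is to reduce the claim to a single linear inequality for the test $f=1_{G_\star}$, and then pass to the masking gap via the definition of ambient distance. First I fix what ``the ordinary union'' means. Following the layer choric hull convention, it is the convex set
\[
    N_T(B)=\conv\bigcup_{b\in B}N_T(b).
\]
The masking gap is then
\[
    \inf_{P\in A_T(b_\star),\,Q\in N_T(B)}\Delta_{\F_T}(P,Q).
\]
Both sets are compact convex polytopes by the polytope consequence, so the infimum is attained. Because $1_{G_\star}$ lies in the audience class, the inclusion property of the elementary-properties lemma gives
\[
    \Delta_{\F_T}(P,Q)\geq |P(G_\star)-Q(G_\star)|.
\]
It therefore suffices to bound $P(G_\star)-Q(G_\star)$ from below uniformly.

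The key step is to extend the hypothesis $Q(G_\star)\leq b_0$ from each individual $N_T(b)$ to the convex hull of their union. The map $Q\mapsto Q(G_\star)$ is affine, since it is the expectation of an indicator. A sublevel set $\{Q: Q(G_\star)\leq b_0\}$ of an affine function is convex. It contains every $N_T(b)$, so it contains $N_T(B)$. Likewise, every $P\in A_T(b_\star)$ satisfies $P(G_\star)\geq a$ by hypothesis; this is already stated on the whole hull, so no extra argument is needed there. Hence for all admissible pairs,
\[
    P(G_\star)-Q(G_\star)\geq a-b_0>0.
\]
Taking the infimum gives a masking gap of at least $a-b_0$.

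To connect with the certificate language, I will note that this is exactly the finite separation certificate with $\lambda=e_{G_\star}$, so $\|\lambda\|_1=1$, and threshold $a$. In the one-dimensional projection $\Phi_{\{G_\star\}}$, the hull of $A_T(b_\star)$ is contained in $[a,1]$ and the ordinary hull is contained in $[0,b_0]$. By measurement-polytope equivalence, their $\ell_\infty$ distance is at least $a-b_0$.

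For the ``exactly when'' reading, I interpret it as: separation of the attainable intervals of a single audience event suffices to break the mask. I will then remark on the converse. If the projections $\Phi_{\{G\}}$ of the two hulls have overlapping intervals for every single audience event $G$, then that event alone yields no positive gap.

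The main obstacle is not technical. It is being careful that the ordinary union is taken as a convex hull, so that mixtures of ordinary bearers are covered; the affine-sublevel argument handles this. A second point of care is not to overclaim the converse. Overlap in every coordinate does not imply that the full projected polytopes intersect, since a Helly-type condition would be needed. I will therefore state the converse only for the one-event audience.
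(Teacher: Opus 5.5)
Your proposal is correct and takes essentially the same route as the paper, whose proof is a one-line application of the unique-gesture lower bound to $G_\star$, with $A_T(b_\star)$ on one side and the convex hull of the ordinary-bearing hulls on the other. Your affine sublevel-set step, which shows that $Q(G_\star)\leq b_0$ persists on $\conv\bigcup_{b\in B}N_T(b)$, supplies a detail the paper leaves implicit, and your caution that only the sufficiency direction of the ``exactly when'' sentence is established matches what the paper actually proves.
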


\begin{proof}
Apply the unique-gesture lower bound to \(G_\star\), with the anomalous-bearing hull on one side and the convex union of ordinary-bearing hulls on the other.\end{proof}

The chamber can certify several distinct facts, and the distinctions matter. If a measurement vector $v$ lies in both $H_T^{(j)}(a)$ and $H_T^{(j)}(b)$, then the observed geometry certifies no unique value at layer $j$. If $v$ lies near several hulls, the room has at most a group-level attention certificate. If an event $G$ has disjoint attainable intervals, $I(a,G)\cap I(b,G)=\varnothing$, then the event distinguishes the layer values. If a unique sensor event $G_\star$ has margin $a-b_0>0$, the mask breaks by that margin. If $\Delta_{\F}(P,Q)>\eta$ while $\Delta_{L,\widetilde\F}(P,Q)\leq\eps$, the trace contains exposure but the perceived trace does not. If a weak event satisfies $p>q$ across repeated windows, the residue has an amplification channel. Thus a residue may justify group selection without justifying a unique carrier claim.

The chamber explains a common public failure mode. The audience may be correct that ``something in this local flow is off'' while still being mathematically unable to attach the residue to a single bearer. The trace source sits in a choric hull shared by ordinary object classes, cohort motion, clock conditions, and sensor noise. A group selection is then not evidence of precise localization; it is the visible shape of a wide localization set.

\section{A map of the stage}

The following diagram is only a mnemonic. The formal object is the tuple $\A$ and the induced trace laws.

\begin{figure}[H]
\centering
\begin{tikzpicture}[node distance=1.4cm, >=stealth, thick]
\node[draw, rounded corners, minimum width=2.7cm, minimum height=.8cm] (theta) {hidden locus $\theta$};
\node[draw, rounded corners, right=of theta, minimum width=2.5cm, minimum height=.8cm] (role) {visible role $r$};
\node[draw, rounded corners, right=of role, minimum width=2.5cm, minimum height=.8cm] (gate) {norm gate $N$};
\node[draw, rounded corners, right=of gate, minimum width=2.5cm, minimum height=.8cm] (trace) {trace $Y_{1:T}$};
\node[draw, rounded corners, below=of trace, minimum width=2.5cm, minimum height=.8cm] (test) {audience tests $\F_T$};
\node[draw, rounded corners, below=of gate, minimum width=2.5cm, minimum height=.8cm] (res) {residue};
\draw[->] (theta) -- (role);
\draw[->] (role) -- (gate);
\draw[->] (gate) -- (trace);
\draw[->] (trace) -- (test);
\draw[->] (test) -- (res);
\draw[->] (res) -- (gate);
\end{tikzpicture}
\caption{The staged public carrier: hidden locus, visible role, norm gate, public trace, audience test, and residual feedback.}
\label{fig:stage-map}
\end{figure}

The loop from residue back to the norm gate expresses an institutional fact: once a pattern becomes readable, future actors may change their movement, and the grammar of the room may change with them. The core theory in this paper fixes a horizon $T$ and treats the room as given. The loop is included because it explains why small residues matter.

\section{Couplings and chorus witnesses}

The first useful way to certify choric masking is through couplings. A coupling is a joint law with prescribed marginals~\cite{villani2009}. It lets us compare two traces inside one probability space.

\begin{definition}[Statistic-sufficient coupling]
Let $\F_T$ be a test class. A map $\sigma:Y^T\to Z$ is $\F_T$-sufficient if every $f\in\F_T$ factors as $f=g_f\circ \sigma$ for some $g_f:Z\to[0,1]$. A coupling $\Gamma$ of laws $P,Q$ on $Y^T$ is $\eta$-synchronizing for $\sigma$ if
\[
    \Gamma\{(y,z):\sigma(y)\neq\sigma(z)\}\leq \eta.
\]
\end{definition}

\begin{theorem}[Coupling criterion]
Let $P,Q$ be laws on $Y^T$. Suppose $\sigma$ is $\F_T$-sufficient and there exists an $\eta$-synchronizing coupling of $P$ and $Q$ for $\sigma$. Then
\[
    \Delta_{\F_T}(P,Q)\leq \eta.
\]
Consequently, if $P\in\operatorname{Ch}_T(\theta)$ and $Q\in\operatorname{Ch}_T(\theta')$, then $\theta$ and $\theta'$ are $(\F_T,\eta)$-chorically masked.
\end{theorem}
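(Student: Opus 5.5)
The plan is to fix an arbitrary test $f\in\F_T$ and control $|\mathbb E_P f-\mathbb E_Q f|$ by rewriting both expectations as integrals against the single coupling $\Gamma$. Since $\Gamma$ has marginals $P$ and $Q$, we have
\[
    \mathbb E_P f-\mathbb E_Q f=\sum_{(y,z)\in Y^T\times Y^T}\Gamma(y,z)\bigl(f(y)-f(z)\bigr).
\]
This is a finite sum, so no measurability issues arise.

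Next I use $\F_T$-sufficiency. Write $f=g_f\circ\sigma$, so the integrand equals $g_f(\sigma(y))-g_f(\sigma(z))$. This vanishes identically on the synchronized set $\{\sigma(y)=\sigma(z)\}$. On the complementary set, the integrand lies in $[-1,1]$ because $g_f$ takes values in $[0,1]$. Hence
\[
    |\mathbb E_P f-\mathbb E_Q f|\leq \Gamma\{(y,z):\sigma(y)\neq\sigma(z)\}\leq\eta.
\]
Taking the supremum over $f\in\F_T$ gives $\Delta_{\F_T}(P,Q)\leq\eta$.

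The consequence is then immediate from the definition of a choric mask. With $P\in\operatorname{Ch}_T(\theta)$ and $Q\in\operatorname{Ch}_T(\theta')$, the pair $(P,Q)$ is itself a witness that $\theta$ and $\theta'$ are $(\F_T,\eta)$-chorically masked. No appeal to the polytope structure is needed.

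There is no real obstacle here. The only point that needs care is the pointwise bound $|g_f(a)-g_f(b)|\leq 1$, which uses that the factor $g_f$ is $[0,1]$-valued, as the definition of sufficiency requires, and not merely that $f$ is. One can also note the sharper bound $\osc(g_f)\cdot\eta$, but it is not needed.
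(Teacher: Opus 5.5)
Your proof is correct and is essentially the paper's argument: write $\mathbb E_P f-\mathbb E_Q f$ as an expectation under the coupling, factor $f=g_f\circ\sigma$, and bound the integrand by the indicator of the disagreement event $\{\sigma(y)\neq\sigma(z)\}$; you then read the masking consequence directly off the definition. Your closing caution is unnecessary, since $g_f$ is only evaluated on the image of $\sigma$, where $g_f(\sigma(y))=f(y)\in[0,1]$, so boundedness of $f$ alone already gives the pointwise bound.
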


\begin{proof}
Let $f\in\F_T$. Since $\sigma$ is sufficient, $f=g\circ\sigma$ for some $g:Z\to[0,1]$. If $(Y,Z)$ has coupling $\Gamma$, then
\[
\begin{aligned}
\left|\mathbb E_P f-\mathbb E_Q f\right|
&=\left|\mathbb E_\Gamma[g(\sigma(Y))-g(\sigma(Z))]\right|\\
&\leq \mathbb E_\Gamma\left|g(\sigma(Y))-g(\sigma(Z))\right|.
\end{aligned}
\]
The integrand is at most $1$ and is zero whenever $\sigma(Y)=\sigma(Z)$. Therefore it is bounded by the indicator of the disagreement event. The expectation is at most $\eta$. Taking the supremum over $f\in\F_T$ proves the claim.\end{proof}

\begin{remark}
The theorem is useful when the audience does not see the full trace but only a public statistic: phase, count, rank, coarse time, or a bounded window. The room need not synchronize every gesture. It must synchronize what the audience can use.
\end{remark}

\begin{definition}[Chorus witness]
A chorus witness between $\theta$ and $\theta'$ consists of finite lists
\[
    (\alpha_i,r_i,\pi_i)_{i=1}^m,
    \qquad
    (\beta_j,r'_j,\pi'_j)_{j=1}^n,
\]
where $\alpha_i,\beta_j\geq0$, $\sum_i\alpha_i=\sum_j\beta_j=1$, $\pi_i\in\Pi_T(r_i)$, $\pi'_j\in\Pi_T(r'_j)$, and
\[
    P=\sum_{i=1}^m \alpha_i P_{\theta,r_i}^{\pi_i,T},
    \qquad
    Q=\sum_{j=1}^n \beta_j P_{\theta',r'_j}^{\pi'_j,T}
\]
satisfy $\Delta_{\F_T}(P,Q)\leq\eps$.
\end{definition}

A chorus witness is finite data. It can be audited by computing the trace laws, the mixture weights, and the audience distance.

\begin{theorem}[Convex chorus theorem]
For hidden loci $\theta,\theta'$ and horizon $T$, the following are equivalent.
\begin{enumerate}[label=(\alph*),leftmargin=2em]
\item $\theta$ and $\theta'$ are perfectly chorically masked for the full test class $[0,1]^{Y^T}$.
\item $\operatorname{Ch}_T(\theta)\cap\operatorname{Ch}_T(\theta')\neq\varnothing$.
\end{enumerate}
More generally, for any test class $\F_T$, they are $(\F_T,\eps)$-chorically masked if and only if
\[
    \inf_{P\in\operatorname{Ch}_T(\theta),\ Q\in\operatorname{Ch}_T(\theta')}
    \Delta_{\F_T}(P,Q)\leq\eps.
\]
\end{theorem}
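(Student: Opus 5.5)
The general statement is essentially the definition of choric masking combined with attainment of the infimum. I would therefore prove it first and then specialize to the full test class.

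By definition, $\theta$ and $\theta'$ are $(\F_T,\eps)$-chorically masked exactly when some $P\in\operatorname{Ch}_T(\theta)$ and $Q\in\operatorname{Ch}_T(\theta')$ satisfy $\Delta_{\F_T}(P,Q)\leq\eps$. One direction is immediate, since any such pair bounds the infimum by $\eps$. For the converse I need the infimum to be attained. The polytope consequence proposition says that $\operatorname{Ch}_T(\theta)=C_T(\theta)$ and $\operatorname{Ch}_T(\theta')$ are nonempty compact convex polytopes. So it suffices to show that $(P,Q)\mapsto\Delta_{\F_T}(P,Q)$ is continuous on $\Delta(\Om_T)\times\Delta(\Om_T)$.

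Continuity is where I expect the only real care, because $\F_T$ may be infinite. I would show the map is in fact $1$-Lipschitz in the $\ell_1$ norm. For each $f\in\F_T$,
\[
\bigl|\mathbb E_P f-\mathbb E_{P'}f\bigr|\leq \tfrac12\|P-P'\|_1\cdot\osc(f)\leq\|P-P'\|_1 .
\]
The supremum of uniformly Lipschitz functions is Lipschitz, and the reverse triangle inequality for the supremum of absolute differences handles the absolute value. A continuous function on a compact product attains its minimum, so the infimum is a minimum. If it is at most $\eps$, the minimizing pair witnesses masking.

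For the equivalence of (a) and (b), I take $\F_T=[0,1]^{Y^T}$, so that $\Delta_{\F_T}=\TV$ by the definition of ambient distance. With $\eps=0$, the general statement says (a) holds if and only if $\min\TV(P,Q)=0$ over the two hulls. Because $\TV$ is a metric on $\Delta(\Om_T)$, this happens exactly when the minimizing $P$ equals $Q$, that is, when $\operatorname{Ch}_T(\theta)\cap\operatorname{Ch}_T(\theta')\neq\varnothing$. The converse is trivial: take $P=Q$ in the intersection, giving $\TV(P,Q)=0$.

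Finally, I would remark that for a finite basis the statement agrees with the measurement-polytope equivalence theorem, but the argument above does not need finiteness of $\F_T$.
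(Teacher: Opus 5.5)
Your proof is correct and follows the same route as the paper: the paper also reduces (a)$\Leftrightarrow$(b) to the fact that $\TV(P,Q)=0$ if and only if $P=Q$, and reads the approximate statement off the definition of $(\F_T,\eps)$-masking. What you add is an explicit proof that the infimum is attained, via compactness of the hulls from the polytope consequence together with your $1$-Lipschitz bound on $\Delta_{\F_T}$, which holds even for infinite $\F_T$; the paper's one-line proof leaves this implicit, but it is genuinely needed for the direction ``infimum $\leq\eps$ implies masking''.
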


\begin{proof}
For the full test class, $\Delta_{[0,1]^{Y^T}}(P,Q)=0$ if and only if $P=Q$. Therefore perfect masking is equivalent to the existence of a common law in the two choric hulls. The approximate statement is exactly the definition of $(\F_T,\eps)$-masking after taking the infimum over admissible choric laws.\end{proof}

\section{Residues are finite}

A public trace space at horizon $T$ is finite. This gives a strong statement: if there is leakage, some finite trace pattern carries it. The next theorem is the residue theorem.

\begin{theorem}[Residue theorem]
Let $P,Q$ be laws on finite $Y^T$. If $\TV(P,Q)>\eps$, then there exists an event $C\subseteq Y^T$ such that
\[
    |P(C)-Q(C)|>\eps.
\]
Moreover, $C$ can be chosen as
\[
    C^+=\{y\in Y^T:P(y)>Q(y)\}
\]
or as its complement. If $C$ is decomposed into singleton cylinders, at least one full cylinder $\{y\}$ satisfies
\[
    |P(y)-Q(y)|\geq \frac{2\,\TV(P,Q)}{|Y|^T}.
\]
\end{theorem}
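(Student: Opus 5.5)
The plan is to work directly with the finite formula for total variation on $Y^T$. First record the identity
\[
    \TV(P,Q)=\sum_{y\in C^+}\bigl(P(y)-Q(y)\bigr)=\sum_{y\notin C^+}\bigl(Q(y)-P(y)\bigr)=\tfrac12\sum_{y\in Y^T}|P(y)-Q(y)|.
\]
It follows because $\sum_y (P(y)-Q(y))=0$, so the positive and negative parts of the signed residue $P-Q$ have equal mass. The same identity was already used in the proof of the dual residue certificate for the closest pair. This gives $P(C^+)-Q(C^+)=\TV(P,Q)>\eps$, which proves the first two claims with $C=C^+$. The complement $Y^T\setminus C^+$ satisfies $Q-P=\TV(P,Q)$, so it also has residue of magnitude $\TV(P,Q)>\eps$. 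This explains the phrase ``or its complement.''

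For the singleton claim, use the $\ell_1$ form. The sum $\sum_{y}|P(y)-Q(y)|=2\TV(P,Q)$ runs over $|Y^T|=|Y|^T$ terms. By averaging, the largest term satisfies
\[
    \max_y|P(y)-Q(y)|\geq \frac{2\,\TV(P,Q)}{|Y|^T}.
\]
The point $y$ attaining this maximum may lie in $C^+$ or in its complement, and that is acceptable because the statement says $C$ is either set. If one insists that the point lie inside $C^+$ itself, averaging only over $C^+$ gives the stronger bound $\TV(P,Q)/|C^+|$. Since $|C^+|\leq|Y|^T$, this yields $\TV/|Y|^T$ but not the factor $2$. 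The factor $2$ therefore requires averaging over all of $Y^T$, with the maximizing singleton taken in whichever of $C^+$ or $Y^T\setminus C^+$ contains it.

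The only delicate point is this bookkeeping of the factor $2$: decide in which set the witnessing singleton lives and phrase the conclusion accordingly. Everything else is the elementary identity above, so I expect the proof to be a few lines. I would also note that if $\eps\geq 0$ then $\TV(P,Q)>0$, so $C^+$ is nonempty and the singleton bound is nontrivial.
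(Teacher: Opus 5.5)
Your proposal is correct and follows essentially the same argument as the paper. Both use the finite identity $\TV(P,Q)=P(C^+)-Q(C^+)=\tfrac12\sum_{y}|P(y)-Q(y)|$ to obtain the event claims, then average the $\ell_1$ sum over all $|Y|^T$ points to get the singleton bound. Your extra remark is also right and is something the paper's one-line averaging step leaves implicit: the factor $2$ holds only when the witnessing singleton is taken in whichever of $C^+$ or its complement contains the maximizer, because averaging inside $C^+$ alone gives only $\TV(P,Q)/|C^+|$.
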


\begin{proof}
For finite spaces,
\[
    \TV(P,Q)=\sup_{C\subseteq Y^T}|P(C)-Q(C)|
    =P(C^+)-Q(C^+).
\]
Thus $C^+$ is a maximizing event. Since
\[
    \TV(P,Q)=\frac12\sum_{y\in Y^T}|P(y)-Q(y)|,
\]
at least one point has absolute mass difference at least $2\TV(P,Q)/|Y|^T$, which is the displayed bound.\end{proof}

\begin{corollary}[Observable residue]
If $\Delta_{\F_T}(P,Q)>\eps$ for a test class containing indicators of a field of events $\C$, then there exists $C\in\C$ with
\[
    |\res_{P,Q}(C)|>\eps.
\]
\end{corollary}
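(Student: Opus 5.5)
The plan is to reduce the corollary to the test class generated by the field, and then to run the residue theorem's maximizing-event argument inside that field. First, the hypothesis only gives $\Delta_{\F_T}(P,Q)>\eps$ for the whole class $\F_T$. A test achieving separation need not be an indicator. So I would read the hypothesis the way the statement intends: the separating power of $\F_T$ is realized by the indicators $\one_C$, $C\in\C$, or by their convex combinations, which by the convex-closure lemma do not strengthen the audience. Concretely, I will show that
\[
    \Delta_{\F_T}(P,Q)=\sup_{C\in\C}|P(C)-Q(C)|
\]
whenever $\F_T$ consists of functions measurable with respect to $\C$ with values in $[0,1]$, and it contains the indicators of $\C$.

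The key step is the layer-cake (or atom) decomposition. Since $Y^T$ is finite, the field $\C$ is generated by finitely many atoms $A_1,\ldots,A_n$. Any $\C$-measurable $f$ with values in $[0,1]$ is constant on atoms, so it can be written as $f=\sum_k c_k\one_{A_k}$. Equivalently, $f=\int_0^1\one_{\{f>s\}}\,ds$, where each level set $\{f>s\}$ belongs to $\C$. By linearity,
\[
    \mathbb E_Pf-\mathbb E_Qf=\int_0^1\bigl(P(f>s)-Q(f>s)\bigr)\,ds .
\]
Hence
\[
    |\mathbb E_Pf-\mathbb E_Qf|\le \sup_{C\in\C}|\res_{P,Q}(C)| .
\]
More sharply, mirroring the proof of the residue theorem, the supremum is attained at the event $C^+_{\C}=\bigcup\{A_k:P(A_k)>Q(A_k)\}\in\C$ or at its complement. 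This gives $\sup_{C\in\C}|\res_{P,Q}(C)|\ge\Delta_{\F_T}(P,Q)>\eps$, and since the supremum over the finite field is a maximum, some $C\in\C$ satisfies the strict inequality.

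The main obstacle is this reduction step: making precise that $\F_T$ does not exceed $\C$-measurable bounded tests. If $\F_T$ contained tests that are not $\C$-measurable, the claim could fail; an example is $\C$ the trivial field while $\F_T$ is everything. I would therefore state the standing assumption that $\F_T$ lies in the closed convex hull of the indicators of $\C$. Under that assumption the layer-cake argument, or directly the convex-closure lemma, closes the proof, since an affine functional's supremum over a convex hull is attained on the generators $\one_C$. The finiteness axiom ensures that all suprema are maxima and that no limiting argument is needed.
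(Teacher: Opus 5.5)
Your proof is correct, but it proves a more general statement than the paper does. The paper repairs the same defect you found in the literal hypothesis by reading it as ``$\F_T$ contains \emph{only} indicators of events in $\C$''. Under that reading the test witnessing $\Delta_{\F_T}(P,Q)>\eps$ is itself some $\one_C$ with $C\in\C$, and nothing remains to prove.

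You instead allow every $[0,1]$-valued $\C$-measurable test. On a finite field this is exactly the convex hull of the indicators $\one_C$, since those are the vertices of the cube of atom-wise constant functions. You then push a separating test back to an event of $\C$ in one of two ways: by the layer-cake identity, whose level sets stay in $\C$, or by the convex-closure lemma applied to the convex map $f\mapsto|\mathbb E_Pf-\mathbb E_Qf|$. This extra step is essentially the layer-cake remark the paper places right after the corollary, made precise by noting that threshold sets of a $\C$-measurable test lie in $\C$.

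In exchange, your version covers the audiences the paper actually uses later: count, phase and ritual audiences, and their gossip closures. Their tests are functions of a finite-valued statistic, hence $\C$-measurable for a finite field, but they are generally not indicators, so the paper's one-line argument does not reach them.

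Your trivial-field counterexample correctly shows that some such restriction is unavoidable. A minor simplification: the final step needs no attainment or finiteness, because $\sup_{C\in\C}|\res_{P,Q}(C)|>\eps$ already gives some $C\in\C$ with $|\res_{P,Q}(C)|>\eps$.
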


\begin{proof}
By definition, some $f\in\F_T$ separates the expectations by more than $\eps$. If $\F_T$ contains only indicators of events in $\C$, this $f$ is the indicator of such an event.\end{proof}

\begin{remark}
For general bounded tests, a layer-cake argument converts the separating test into a threshold event. If $f:Y^T\to[0,1]$, then
\[
    \mathbb E_P f-\mathbb E_Q f
    =\int_0^1 \big(P\{f\geq t\}-Q\{f\geq t\}\big)\,dt.
\]
Therefore some threshold set carries at least the same signed separation.
\end{remark}

The residue theorem is the place where the room becomes concrete. If the audience can separate two loci, there is an event in the public trace space that does the separating. The event may be awkward to name, but it exists. In a finite room, exposure always has a carrier.

\section{Unique gestures and cover mass}

A gesture is unique when one locus can produce it with mass that other loci cannot match. Unique gestures are dangerous because they defeat choric masking unless the room supplies enough cover mass.

\begin{definition}[Gesture event]
A gesture event at horizon $T$ is any event $G\subseteq Y^T$ in a designated event family $\G_T$. A gesture family may contain cylinders, count events, timing bands, or phase patterns.
\end{definition}

\begin{definition}[Cover mass]
For hidden locus $\theta$, event $G$, and horizon $T$, define the attainable mass interval
\[
    I_T(\theta,G)=\left\{P(G):P\in\operatorname{Ch}_T(\theta)\right\}\subseteq [0,1].
\]
The cover gap between loci $\theta,\theta'$ on $G$ is
\[
    \gamma_T(\theta,\theta';G)=\operatorname{dist}\big(I_T(\theta,G),I_T(\theta',G)\big),
\]
where the distance between intervals is zero when they intersect and otherwise is the distance between their closest endpoints.
\end{definition}

\begin{theorem}[Unique-gesture lower bound]
Assume the audience test class $\F_T$ contains the indicator $1_G$ of a gesture event $G$. Then for every $P\in\operatorname{Ch}_T(\theta)$ and $Q\in\operatorname{Ch}_T(\theta')$,
\[
    \Delta_{\F_T}(P,Q)\geq |P(G)-Q(G)|.
\]
Consequently,
\[
    \inf_{P,Q}\Delta_{\F_T}(P,Q)
    \geq \gamma_T(\theta,\theta';G),
\]
where the infimum is over $P\in\operatorname{Ch}_T(\theta)$ and $Q\in\operatorname{Ch}_T(\theta')$.
\end{theorem}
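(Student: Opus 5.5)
The plan has two steps: a pointwise bound obtained by evaluating the supremum at one admissible test, and then a passage to the infimum over the two hulls using the attainable intervals $I_T(\theta,G)$ and $I_T(\theta',G)$.

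\emph{Pointwise bound.} Fix $P\in\operatorname{Ch}_T(\theta)$ and $Q\in\operatorname{Ch}_T(\theta')$. Since $1_G\in\F_T$, the definition of ambient distance as a supremum gives
\[
\Delta_{\F_T}(P,Q)\geq |\mathbb E_P 1_G-\mathbb E_Q 1_G| = |P(G)-Q(G)|.
\]
No symmetry of $\F_T$ is needed here, because the absolute value is already built into $\Delta_{\F_T}$.

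\emph{Passage to the infimum.} By definition of the attainable mass interval, $P(G)\in I_T(\theta,G)$ and $Q(G)\in I_T(\theta',G)$. It remains to check that $|x-y|\geq \operatorname{dist}(I,J)$ for any $x\in I$ and $y\in J$, where $\operatorname{dist}$ is the interval distance of the cover-gap definition. This needs a short case split.
\begin{itemize}
\item If the intervals intersect, the gap is zero and the claim is trivial.
\item Otherwise, say $\sup I<\inf J$, which is one of two orientations. Then $y-x\geq \inf J-\sup I$, and the right-hand side is exactly the endpoint distance.
\end{itemize}
Combining with the pointwise bound gives $\Delta_{\F_T}(P,Q)\geq \gamma_T(\theta,\theta';G)$ for every admissible pair. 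Taking the infimum over $P$ and $Q$ preserves the inequality because the right-hand side does not depend on the pair.

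The only point needing a little care is that the ``closest endpoints'' in the definition of $\gamma_T$ are really attained endpoints. I would cite the Polytope consequence for this: $\operatorname{Ch}_T(\theta)$ is a compact convex polytope, and $P\mapsto P(G)$ is linear. Hence $I_T(\theta,G)$ is a compact interval, so its endpoints are attained and the interval distance is well defined as stated. Even without attainment, the inequality $|x-y|\geq \inf J-\sup I$ holds. So I expect no real obstacle; the result is essentially a one-line consequence of choosing $1_G$ as the witness test.
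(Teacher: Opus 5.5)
Your proposal is correct and follows the paper's proof: evaluate the supremum at the witness $1_G$ to get $\Delta_{\F_T}(P,Q)\geq|P(G)-Q(G)|$, then pass to the infimum over the two hulls to obtain the interval distance $\gamma_T(\theta,\theta';G)$. Your explicit case split and the compactness remark fill in the step the paper states in one line ("taking the infimum over choric laws gives the distance between attainable intervals").
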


\begin{proof}
Since $1_G\in\F_T$,
\[
    \Delta_{\F_T}(P,Q)
    =\sup_{f\in\F_T}|\mathbb E_P f-\mathbb E_Q f|
    \geq |\mathbb E_P1_G-\mathbb E_Q1_G|
    =|P(G)-Q(G)|.
\]
Taking the infimum over choric laws gives the distance between attainable intervals.\end{proof}

\begin{corollary}[No free disappearance]
If some gesture $G$ is mandatory for $\theta$ in the sense that $P(G)\geq a$ for all $P\in\operatorname{Ch}_T(\theta)$, while $P'(G)\leq b$ for all $P'\in\operatorname{Ch}_T(\theta')$, with $a>b$, then $\theta$ and $\theta'$ cannot be $(\F_T,\eps)$-chorically masked for any $\eps<a-b$.
\end{corollary}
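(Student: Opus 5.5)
The corollary follows directly from the unique-gesture lower bound, together with the fact that the audience can use the gesture event. Throughout, I assume, as in the preceding theorem, that $1_G\in\F_T$. Without that hypothesis the audience might be blind to $G$, and the corollary would fail.

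First I locate the attainable intervals. By hypothesis every $P\in\operatorname{Ch}_T(\theta)$ has $P(G)\geq a$, so $I_T(\theta,G)\subseteq[a,1]$. Likewise every $P'\in\operatorname{Ch}_T(\theta')$ has $P'(G)\leq b$, so $I_T(\theta',G)\subseteq[0,b]$. Since $a>b$, the two intervals are disjoint. By the polytope consequence each interval is compact, being a linear image of a compact polytope. The gap between closest endpoints is then at least $a-b$, so $\gamma_T(\theta,\theta';G)\geq a-b$. One can also skip the interval bookkeeping: for any $P,Q$ in the respective hulls we have $P(G)-Q(G)\geq a-b$ directly.

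Next I invoke the unique-gesture lower bound. For every $P\in\operatorname{Ch}_T(\theta)$ and $Q\in\operatorname{Ch}_T(\theta')$,
\[
\Delta_{\F_T}(P,Q)\geq |P(G)-Q(G)|\geq a-b .
\]
Suppose, for contradiction, that the loci were $(\F_T,\eps)$-chorically masked with $\eps<a-b$. By definition there would be a witnessing pair $(P,Q)$ with $\Delta_{\F_T}(P,Q)\leq\eps<a-b$. This contradicts the display, so no such mask exists.

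There is no real obstacle here. The only points needing care are the implicit requirement that $1_G$ lies in the audience class, which I state explicitly, and checking that the bounds $a$ and $b$ hold uniformly over the entire hulls rather than only at their generators. The latter is in fact automatic: the maps $P\mapsto P(G)$ are linear, so a bound that holds at every deterministic-script generator persists under mixing.
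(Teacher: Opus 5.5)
Your proof is correct and follows the paper's argument: the paper also places the attainable intervals in $[a,1]$ and $[0,b]$, bounds their distance below by $a-b$, and applies the unique-gesture lower bound. You state the assumption $1_G\in\F_T$ explicitly, whereas the paper's corollary takes it silently from that theorem, so making it explicit is right.
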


\begin{proof}
The attainable intervals are contained in $[a,1]$ and $[0,b]$. Their distance is at least $a-b$. Apply the theorem.\end{proof}

This is the first impossibility principle. If the room forces one hidden locus to make a gesture that others do not make, the gesture is not neutral. It is a public mark. Randomness cannot remove a mark that the norm gate makes mandatory.

\section{Repeated windows}

A small residue may look harmless in one window. Repetition changes the scale. If a room is visited many times, a weak event can become a stable exposure channel.

\begin{definition}[Window product]
For a law $P$ on $Y^T$, let $P^{\otimes n}$ be the law of $n$ independent windows, each distributed as $P$. The product trace space is $(Y^T)^n$.
\end{definition}

\begin{theorem}[Product upper bound]
For laws $P,Q$ on $Y^T$,
\[
    \TV(P^{\otimes n},Q^{\otimes n})\leq n\,\TV(P,Q).
\]
\end{theorem}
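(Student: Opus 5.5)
The plan is a hybrid (telescoping) argument combined with the coupling criterion. For $0\le k\le n$ define the hybrid law
\[
    H_k=P^{\otimes k}\otimes Q^{\otimes (n-k)}
\]
on $(Y^T)^n$, so that $H_n=P^{\otimes n}$ and $H_0=Q^{\otimes n}$. Total variation is a metric on $\Delta((Y^T)^n)$; this is the triangle inequality of the elementary-properties lemma applied with the full test class. It therefore gives
\[
    \TV(P^{\otimes n},Q^{\otimes n})\le\sum_{k=1}^n \TV(H_k,H_{k-1}).
\]
It remains to show that each summand is at most $\TV(P,Q)$.

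The laws $H_k$ and $H_{k-1}$ agree except in window $k$, where one carries $P$ and the other $Q$. I would bound their distance by a coupling, using the coupling criterion with $\sigma$ the identity, which is sufficient for the full test class.
\begin{enumerate}[label=(\roman*),leftmargin=2.2em]
\item Take a maximal coupling $\gamma$ of $(P,Q)$ on $Y^T\times Y^T$. On a finite space this is constructed explicitly: place mass $\min(P(y),Q(y))$ on the diagonal, and spread the leftover masses $(P-Q)^+$ and $(Q-P)^+$ as a product. Its disagreement probability is exactly $\TV(P,Q)$.
\item Couple $H_k$ and $H_{k-1}$ by using one shared sample for every window $i\neq k$, drawn from $P$ if $i<k$ and from $Q$ if $i>k$, and by drawing window $k$ from $\gamma$.
\item The two coupled $n$-tuples differ only when window $k$ differs, which has probability $\TV(P,Q)$.
\end{enumerate}
The coupling criterion then yields $\TV(H_k,H_{k-1})\le\TV(P,Q)$. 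Summing over $k$ gives the factor $n$.

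An alternative that avoids couplings is the $\ell_1$ formula from the residue theorem. Write $H_k-H_{k-1}=P^{\otimes(k-1)}\otimes(P-Q)\otimes Q^{\otimes(n-k)}$. The $\ell_1$ norm of a tensor product of signed measures factorizes, and each probability factor has norm one. This gives $\tfrac12\|H_k-H_{k-1}\|_1=\TV(P,Q)$ directly.

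I expect no serious obstacle. The only step needing care is the existence and exact disagreement probability of the maximal coupling. On a finite space this comes from the explicit construction in (i) together with the identity $\sum_y\min(P(y),Q(y))=1-\TV(P,Q)$. Alternatively, the $\ell_1$ factorization bypasses the coupling entirely, so I would present that route if the coupling construction seems too heavy.
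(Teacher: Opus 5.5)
Your proof is correct, but it is organized differently from the paper's. The paper builds a single coupling of $P^{\otimes n}$ and $Q^{\otimes n}$ by taking $n$ independent copies of a maximal coupling of $(P,Q)$. It then bounds the probability that some window disagrees by the union bound, and concludes because total variation is at most the disagreement probability of any coupling.

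You run a hybrid argument instead. The triangle inequality along the chain $H_n,\dots,H_0$ does the work that the union bound does in the paper. Each step then compares two laws that differ in a single window. You handle that comparison either by a coupling that is maximal only in window $k$, or coupling-free by the exact identity $\tfrac12\|H_k-H_{k-1}\|_1=\TV(P,Q)$, which follows from $\|\mu\otimes\nu\|_1=\|\mu\|_1\|\nu\|_1$. One small precision: in step (i), the off-diagonal product of $(P-Q)^+$ and $(Q-P)^+$ must be divided by $\TV(P,Q)$ to have the right marginals.

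The paper's route is shorter. If one computes the disagreement probability of the independent product coupling exactly instead of union-bounding, it also gives the sharper bound $\TV(P^{\otimes n},Q^{\otimes n})\leq 1-(1-\TV(P,Q))^n$. As written, though, it relies on the windows being independent copies.

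Your telescoping can only ever produce the linear bound $n\,\TV(P,Q)$. In exchange, its $\ell_1$ version needs no coupling machinery at all. It also transfers almost verbatim to windows generated adaptively from earlier windows, as in the paper's repeated rooms. There each hybrid step is controlled by the data-contraction lemma together with the convexity step of the linked-room accounting theorem.
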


\begin{proof}
Use a maximal coupling~\cite{levin2009} $(X_1,Z_1)$ of $P,Q$ with $\mathbb P\{X_1\neq Z_1\}=\TV(P,Q)$. Take $n$ independent copies. Then the product coupling disagrees in at least one coordinate with probability at most $n\TV(P,Q)$ by the union bound. Total variation is the smallest possible disagreement probability over couplings, so the result follows.\end{proof}

\begin{theorem}[Gesture amplification]
Let $G\subseteq Y^T$ and suppose $P(G)=p$, $Q(G)=q$. In $n$ independent windows, define
\[
    H_n=\{\text{the gesture }G\text{ occurs at least once}\}.
\]
Then
\[
    P^{\otimes n}(H_n)-Q^{\otimes n}(H_n)
    =(1-(1-p)^n)-(1-(1-q)^n).
\]
If $p>q$, this equals
\[
    (1-q)^n-(1-p)^n.
\]
For small $p,q$ and moderate $n$, the separation is approximately $n(p-q)$ as long as $np$ and $nq$ remain small.
\end{theorem}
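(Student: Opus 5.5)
The plan is to compute the probability of $H_n$ under each product law through its complement. Under $P^{\otimes n}$ the windows are independent and identically distributed, and the event $H_n^c$ says that $G$ fails in every window. So $H_n^c$ is the product event $(G^c)^n$ in $(Y^T)^n$, and independence gives
\[
P^{\otimes n}(H_n^c)=(1-p)^n .
\]
The same argument gives $Q^{\otimes n}(H_n^c)=(1-q)^n$. Subtracting the two complements from one yields the first display.

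The second display is the cancellation of the constant $1$ terms, which leaves $(1-q)^n-(1-p)^n$. When $p>q$ we have $1-q>1-p\ge 0$, and monotonicity of $x\mapsto x^n$ on $[0,1]$ makes this difference nonnegative, as a separation should be.

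For the small-parameter approximation I would not stop at a bare Taylor heuristic, since the statement says "approximately". Instead I would use the elementary sandwich
\[
1-nx\le (1-x)^n\le 1-nx+\binom n2 x^2, \qquad x\in[0,1].
\]
The lower bound is Bernoulli's inequality, and the upper bound is the second Bonferroni inequality applied to the union $H_n$. Applying it with $x=p$ and with $x=q$ gives
\[
(1-q)^n-(1-p)^n = n(p-q)+R, \qquad |R|\le \tbinom n2\max(p,q)^2 \le \tfrac12 (n\max(p,q))^2 .
\]
This pins down the intended regime. When $np$ and $nq$ are small, the remainder is second order, and the separation is $n(p-q)$ up to that error.

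A cleaner route would be to factor $a^n-b^n=(a-b)\sum_k a^k b^{n-1-k}$ with $a=1-q$ and $b=1-p$. This gives exactly
\[
(p-q)\sum_{k=0}^{n-1}(1-q)^k(1-p)^{n-1-k}.
\]
Each term of the sum lies in $[(1-p)^{n-1},1]$, so the separation lies between $n(p-q)(1-p)^{n-1}$ and $n(p-q)$. Since $(1-p)^{n-1}\ge 1-np$, the ratio to $n(p-q)$ lies in $[1-np,1]$, which quantifies the approximation. This factorization is the step worth stating carefully, and I would use it as the main form of the argument. Everything else is routine.
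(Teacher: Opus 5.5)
Your computation of the exact formula is the paper's own argument. $H_n^c$ is the product event $(G^c)^n$, so its mass is $(1-p)^n$ under $P^{\otimes n}$ and $(1-q)^n$ under $Q^{\otimes n}$, and subtraction gives both displays. As you note, the second identity holds without $p>q$; that hypothesis only fixes the sign.

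You differ from the paper on the approximation. The paper simply invokes $(1-x)^n=1-nx+O(n^2x^2)$ for each term separately. Your Bernoulli--Bonferroni sandwich is the rigorous version of exactly that step, and it has the same limitation. It controls the separation only up to an \emph{additive} error of order $(n\max(p,q))^2$. That error is small next to $n(p-q)$ only when $p-q$ is large compared with $np^2$.

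Your factorization $(1-q)^n-(1-p)^n=(p-q)\sum_{k=0}^{n-1}(1-q)^k(1-p)^{n-1-k}$ removes that defect, because $p-q$ is pulled out exactly. The term bounds $(1-p)^{n-1}\le(1-q)^k(1-p)^{n-1-k}\le 1$, which use $p\ge q$, then give a \emph{relative} statement: $\bigl((1-q)^n-(1-p)^n\bigr)/\bigl(n(p-q)\bigr)\in[1-np,1]$. This holds uniformly in how close $q$ is to $p$, and it needs only $np$ small, since $nq\le np$.

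The paper's route is brief and matches its informal wording. Yours gives a checkable multiplicative error bound that makes ``approximately $n(p-q)$'' precise in exactly the stated regime. Leading with the factorization, as you propose, is the right choice.
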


\begin{proof}
The event $H_n^c$ is the event that $G$ fails in every window. Under $P^{\otimes n}$ this has probability $(1-p)^n$, and under $Q^{\otimes n}$ it has probability $(1-q)^n$. Subtracting gives the formula. The approximation follows from the first-order expansion $(1-x)^n=1-nx+O(n^2x^2)$.\end{proof}

\begin{remark}
The theorem explains why rare public gestures are unstable under repetition. A one-percent difference can remain invisible in one encounter and become obvious over a season of repeated encounters.
\end{remark}

\begin{proposition}[Cumulative risk is not the next-window hazard]
Let capture events $E_1,\ldots,E_n$ be conditionally independent with a fixed one-window probability $p$. Then
\[
    \Pr\{E_1\cup\cdots\cup E_n\}=1-(1-p)^n,
\]
but for every $k$ one still has $\Pr\{E_{k+1}\mid E_1^c,\ldots,E_k^c\}=p$. Thus repetition increases cumulative event probability without, by itself, increasing the next-window hazard. A model that claims increasing one-window hazard must therefore contain an adaptive state variable, such as debt, pressure, attention, or policy update.
\end{proposition}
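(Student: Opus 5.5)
The first identity comes from complementation together with independence across windows. The second comes from the chain rule for conditional probability, where the conditioning is on failures only. The final claim is a contrapositive reading of the second identity.

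For the union, I pass to the complement: $\Pr\{E_1\cup\cdots\cup E_n\}=1-\Pr\{E_1^c\cap\cdots\cap E_n^c\}$. Reading ``conditionally independent with fixed one-window probability $p$'' as independence given the (non-adaptive) room configuration, the complements $E_i^c$ are independent with probability $1-p$ each. Hence the intersection has probability $(1-p)^n$, and if needed I integrate the conditional statement over the fixed configuration. This is exactly the computation already used for $H_n$ in the gesture-amplification theorem, with $G$ replaced by the capture event.

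For the hazard, I write
\[
\Pr\{E_{k+1}\mid E_1^c,\ldots,E_k^c\}=\frac{\Pr\{E_{k+1}\cap E_1^c\cap\cdots\cap E_k^c\}}{\Pr\{E_1^c\cap\cdots\cap E_k^c\}}.
\]
By independence the numerator factors as $p(1-p)^k$ and the denominator is $(1-p)^k$, so the ratio is $p$. This requires $p<1$ so that the denominator is positive; if $p=1$, the conditioning event is null and the statement is vacuous. For comparison, the cumulative probability $1-(1-p)^n$ is strictly increasing in $n$ whenever $0<p<1$. That shows cumulative risk grows while the next-window hazard stays constant.

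The last sentence is interpretive, and I would state it as a contrapositive. Suppose the one-window hazard $\Pr\{E_{k+1}\mid E_{1:k}^c\}$ increases in $k$. Then the computation above must fail, so either the marginal probabilities $\Pr\{E_i\}$ vary with $i$ or the events are dependent through their history. In the finite model, both of these require some quantity carried from window to window that the response policy conditions on, which is what the paper calls an adaptive state variable. The main obstacle is making ``conditionally independent'' precise enough for the factorization. I would fix it as: given a static room and policy, the windows are i.i.d. Under that reading the hazard computation is immediate, and the contrapositive names exactly the assumption that must be dropped.
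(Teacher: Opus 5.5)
Your proof is correct and takes the paper's route. The union is handled by complementation, and your explicit ratio $p(1-p)^k/(1-p)^k$ is the unpacked form of the paper's one-line step $\Pr\{E_{k+1}\mid E_1^c,\ldots,E_k^c\}=\Pr\{E_{k+1}\}=p$; both treat the final sentence as interpretive, and your caveat that $p<1$ is needed for the conditioning event to have positive probability is a detail the paper leaves implicit.
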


\begin{proof}
The union formula is the complement of $n$ failures, each of probability $1-p$. Conditional independence gives
\[
\Pr\{E_{k+1}\mid E_1^c,\ldots,E_k^c\}=\Pr\{E_{k+1}\}=p.
\]
The final sentence follows because any change in the next-window hazard cannot be inferred from independence alone; it requires a changed law for the next window.
\end{proof}

\begin{proposition}[Adaptive hazard decomposition]
For an adaptive repeated room, write the carrier hazard as
\[
    h_{\sigma,k}=1-\exp(-\alpha_{\sigma,k}r_{\sigma,k}).
\]
Then an increase of the room pressure state can raise, lower, or leave unchanged the carrier hazard according to the sign of
\[
    \Delta(\alpha_{\sigma,k}r_{\sigma,k}).
\]
In particular, pressure growth alone does not imply carrier-risk growth; pressure must be coupled to carrier-local attention or carrier-local residue.
\end{proposition}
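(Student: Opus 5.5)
The plan is to treat this proposition as a discrete analogue of the Debt-hazard decoupling theorem. There the sign of $h'(D)$ was read off from $(\alpha r)'$; here I will do the same with finite differences. Fix the carrier $\sigma$ and compare two consecutive pressure states, indexed by $k$ and $k+1$. Write $x_k=\alpha_{\sigma,k}r_{\sigma,k}\geq0$ and $\Delta x_k=x_{k+1}-x_k$. The whole argument rests on one structural fact: the hazard is the image of $x_k$ under the map $\phi(x)=1-e^{-x}$, which is strictly increasing on $[0,\infty)$.

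First, I compute the increment directly:
\[
    h_{\sigma,k+1}-h_{\sigma,k}=e^{-x_k}-e^{-x_{k+1}}=e^{-x_k}\bigl(1-e^{-\Delta x_k}\bigr).
\]
The prefactor $e^{-x_k}$ is strictly positive. The factor $1-e^{-\Delta x_k}$ has exactly the sign of $\Delta x_k$: it is positive, zero or negative according as $\Delta x_k$ is. So the hazard increment has the sign of $\Delta(\alpha_{\sigma,k}r_{\sigma,k})$, which is the first claim. Equivalently, one may cite strict monotonicity of $\phi$ without computing anything.

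Second, I show that all three outcomes can occur even while pressure rises. The key observation is that the room pressure state does not appear in $h_{\sigma,k}$ except through $\alpha$ and $r$. So I exhibit three concrete two-step trajectories in which debt increases but the product $\alpha r$ respectively grows, stays fixed, or shrinks:
\begin{itemize}
\item Take $r$ fixed, and let $\alpha$ rise (localized attention) in the first case, stay fixed in the second, and fall in the third.
\item The falling case is motivated by the Attention dilution across simultaneous candidates theorem: a broader selection spreads the gaze budget, so per-candidate $\alpha_i$ falls.
\end{itemize}
This gives the ``in particular'' sentence: pressure growth alone fixes no sign.

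The main obstacle is interpretive rather than technical. The statement does not specify how the pressure state enters $\alpha$ and $r$. I would therefore state explicitly the modelling assumption that pressure acts on the hazard only through the pair $(\alpha_{\sigma,k},r_{\sigma,k})$. The conclusion then follows purely from the monotonicity of $\phi$, and the examples only need to show that this dependence is unconstrained in sign.
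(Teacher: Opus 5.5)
Your proof is correct and is essentially the paper's argument. The paper also reads the sign of the hazard change off the strict monotonicity of $x\mapsto 1-e^{-x}$ and notes that the pressure state enters $h_{\sigma,k}$ only through the product $\alpha_{\sigma,k}r_{\sigma,k}$; your factorization $e^{-x_k}(1-e^{-\Delta x_k})$ is a computational form of that monotonicity, and your three debt-increasing trajectories spell out the ``raise, lower, or leave unchanged'' clause that the paper leaves implicit. Take the fixed $r$ strictly positive there, since with $r=0$ all three cases give the same unchanged zero hazard.
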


\begin{proof}
The map $x\mapsto1-e^{-x}$ is strictly increasing on $[0,\infty)$. Therefore the sign of the hazard change is the sign of the change in the product $\alpha_{\sigma,k}r_{\sigma,k}$. A pressure state affects the hazard only through variables that enter this product.
\end{proof}

\section{Repeated unresolved pressure}

The previous theorem treats repetition as a product experiment. A second phenomenon appears when the room itself reacts to unresolved residue. Repetition then does not merely amplify a statistic. It changes the pressure state of the room. The danger is not that a hidden locus has become invisible in an absolute sense. The danger is that its residue is read as a diffuse disturbance and the cost of that disturbance is charged to a cover population.

This section formalizes that failure mode as a security defect: response can broaden without localizing the source.

\begin{definition}[Candidate field and resolution hazard]
At repeated window $k$, let $\mathcal I_k$ be the finite set of visible candidates in the room. Each $i\in\mathcal I_k$ has raw residue $r_{i,k}\geq0$ and receives attention $\alpha_{i,k}\geq0$ with
\[
    \sum_{i\in\mathcal I_k}\alpha_{i,k}\leq B_k^{\rm att}.
\]
The one-window resolution hazard of $i$ is
\[
    h_{i,k}=1-\exp(-\alpha_{i,k}r_{i,k}).
\]
This is a reduced-form detection channel: residue without attention has zero hazard, and attention without residue has zero hazard.
\end{definition}

\begin{definition}[Unresolved residue and choric risk debt]
Let $u_k\geq0$ be the amount of residue injected into window $k$ that is not localized to its carrier by the end of that window. Let $c_k\geq0$ be localized clearance. The choric risk debt is the nonnegative recursion
\[
    D_{k+1}=\bigl[\lambda D_k+u_k-c_k\bigr]_+,
    \qquad 0\leq\lambda\leq1,
\]
where $[x]_+=\max\{x,0\}$. The parameter $\lambda$ is institutional memory. If $\lambda$ is close to one, unresolved pressure decays slowly.
\end{definition}

\begin{lemma}[Debt accounting identity]
If the positive part is inactive on windows $0,\ldots,n-1$, so that $D_{k+1}=\lambda D_k+u_k-c_k$, then
\[
    D_n=\lambda^nD_0+\sum_{k=0}^{n-1}\lambda^{n-1-k}(u_k-c_k).
\]
If the positive part is active, the same right-hand side is a lower affine ledger before reflection at zero and the realized debt is its reflected nonnegative version. Hence debt can decrease only through decay or localized clearance; unresolved residue cannot be made harmless by being merely unassigned.
\end{lemma}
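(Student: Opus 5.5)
The identity is proved by induction on $n$; the remaining claims then follow by comparison with the unreflected ledger. Write $L_n:=\lambda^nD_0+\sum_{k=0}^{n-1}\lambda^{n-1-k}(u_k-c_k)$. Then $L_0=D_0$ and $L_{n+1}=\lambda L_n+u_n-c_n$, so $L$ obeys the linear recursion with the positive part removed.

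\textbf{Inactive case.} Suppose the positive part is inactive on windows $0,\ldots,n-1$. Assume $D_m=L_m$ for some $m<n$. Inactivity at window $m$ gives
\[
D_{m+1}=\lambda D_m+u_m-c_m=\lambda L_m+u_m-c_m=L_{m+1}.
\]
Multiplying the sum by $\lambda$ and appending the new term $u_m-c_m$ is exactly the reindexing $\lambda^{(m+1)-1-k}$, so this is the whole computation.

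\textbf{Active case.} Here I would prove $D_m\geq L_m$ for all $m$, again by induction. The map $x\mapsto\lambda x+u-c$ is nondecreasing because $\lambda\geq0$, and $[y]_+\geq y$. Hence from $D_m\geq L_m$ we get
\[
D_{m+1}=[\lambda D_m+u_m-c_m]_+\geq\lambda D_m+u_m-c_m\geq\lambda L_m+u_m-c_m=L_{m+1}.
\]
So $L_n$ is a lower affine ledger, and $D_n$ is its reflected nonnegative version. For the precise reflection statement one can use the Skorokhod-type formula
\[
D_n=\max\Bigl(L_n,\ \max_{j\leq n}\sum_{k=j}^{n-1}\lambda^{n-1-k}(u_k-c_k)\Bigr)_+,
\]
but the inequality $D_n\geq L_n$ is all the statement needs.

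\textbf{Interpretive claim.} Take one step, $D_{k+1}-D_k$. If $D_{k+1}>0$, then
\[
D_{k+1}-D_k=-(1-\lambda)D_k+u_k-c_k.
\]
Since $u_k\geq0$, any decrease must come from the decay term $-(1-\lambda)D_k$ or from the clearance $-c_k$. If $D_{k+1}=0$ by reflection, then $\lambda D_k+u_k-c_k\leq0$. With $u_k\geq0$ this forces $\lambda D_k\leq c_k$, so again decay or clearance accounts for the drop. In either case unresolved residue $u_k$ only enters with a positive sign.

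The main subtlety is making "can decrease only through" precise in the reflected case; the two-case step analysis above handles it.
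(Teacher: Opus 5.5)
Your proposal is correct and follows essentially the paper's approach: unroll the affine recursion (you do it by induction) to get the closed-form ledger, observe that reflection at zero can only raise the realized debt so the ledger is a lower bound, and read the interpretive claim off the signs of $u_k\geq0$, $c_k$ and the decay term. Your version is more explicit than the paper's sketch. The monotonicity induction for $D_n\geq L_n$, the Lindley-type reflection formula, and the two-case one-step analysis (which amounts to $D_{k+1}\geq\lambda D_k-c_k$) spell out what the paper leaves to ``reflection at the nonnegative boundary'' and ``the signs of the terms in the ledger.''
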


\begin{proof}
Iterating the affine recursion gives the displayed identity. With the positive part, each step replaces a negative affine value by zero, which is reflection at the nonnegative boundary. The interpretation follows directly from the signs of the terms in the ledger.
\end{proof}

\begin{definition}[Displaced-risk choric masking]
Fix a carrier $\sigma_k\in\mathcal I_k$. Displaced-risk choric masking occurs on a sequence of windows if the following three inequalities hold over a nontrivial interval of $k$:
\[
    h_{\sigma_k,k}\leq \bar h,
    \qquad
    D_{k+1}-D_k\geq d>0,
    \qquad
    \sum_{j\neq \sigma_k}\Pr\{j\text{ is selected at }k\}
      \text{ increases with }D_k.
\]
Thus the carrier hazard remains bounded while unresolved pressure and non-carrier selection load rise. The residue has not disappeared. It has been displaced.
\end{definition}

\begin{lemma}[Debt lower envelope]
Suppose $D_{k+1}\geq\lambda D_k+\delta$ for $k=0,\ldots,n-1$, with $\delta>0$ and $0\leq\lambda\leq1$. Then
\[
D_n\geq
\begin{cases}
    \lambda^nD_0+\delta\frac{1-\lambda^n}{1-\lambda}, & 0\leq\lambda<1,\\[0.6em]
    D_0+n\delta, & \lambda=1.
\end{cases}
\]
\end{lemma}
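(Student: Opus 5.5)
The plan is induction on $n$, comparing the debt with the affine sequence that satisfies the hypothesis with equality. Define the comparison sequence $E_0=D_0$ and $E_{k+1}=\lambda E_k+\delta$. The claim reduces to two facts: $D_k\geq E_k$ for every $k\leq n$, and $E_n$ has the displayed closed form.

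For the comparison, the induction step uses $\lambda\geq0$, so the map $x\mapsto\lambda x+\delta$ is nondecreasing. If $D_k\geq E_k$, then
\[
D_{k+1}\geq\lambda D_k+\delta\geq\lambda E_k+\delta=E_{k+1}.
\]
The base case $D_0=E_0$ is trivial. No positivity of $D_k$ is needed here; the reflection at zero in the debt recursion does not matter, because the hypothesis is already stated on the realized $D_{k+1}$.

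For the closed form, iterate the affine recursion exactly as in the debt accounting identity, with $u_k-c_k$ replaced by $\delta$. This gives
\[
E_n=\lambda^nD_0+\delta\sum_{k=0}^{n-1}\lambda^{n-1-k}=\lambda^nD_0+\delta\sum_{j=0}^{n-1}\lambda^j.
\]
The sum is then split into two cases:
\begin{itemize}
\item If $0\leq\lambda<1$, the geometric sum equals $(1-\lambda^n)/(1-\lambda)$.
\item If $\lambda=1$, every term equals one, so the sum is $n$ and $E_n=D_0+n\delta$.
\end{itemize}
The case $\lambda=0$ needs the convention $0^0=1$, which yields $E_n=\delta$ for $n\geq1$, as the formula predicts.

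There is no substantive obstacle; the only point needing care is the monotonicity step, which is where $\lambda\geq0$ is used.
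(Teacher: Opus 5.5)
Your proof is correct and takes the same route as the paper, which simply iterates the affine lower bound and sums the resulting geometric series (or adds the increments linearly when $\lambda=1$). Your comparison sequence $E_k$ makes explicit the monotonicity step the paper leaves implicit: $\lambda\geq0$ is what lets the lower bound propagate through each iteration.
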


\begin{proof}
Iterate the affine lower bound. For $0\leq\lambda<1$ the accumulated increments form a finite geometric sum. For $\lambda=1$ the increments add linearly.
\end{proof}

\begin{proposition}[Survivorship miscalibration]
Let a participant observe only a personal capture count $C_n$ over $n$ windows and form the smoothed empirical estimate
\[
    \widehat p_n=\frac{a+C_n}{a+b+n},
    \qquad a,b>0.
\]
If $C_n=0$ for all observed windows, then $\widehat p_n\to0$. At the same time, if unresolved residue satisfies the hypothesis of the debt lower envelope, then $D_n$ converges to a positive floor when $\lambda<1$ and diverges linearly when $\lambda=1$. Hence personal non-capture and system safety are not equivalent observables.
\end{proposition}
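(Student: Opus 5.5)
The proposition has two independent parts, and I would treat them separately before contrasting them.

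\emph{Estimator side.} On the event $C_n=0$ the estimate is deterministic:
\[
    \widehat p_n=\frac{a}{a+b+n}.
\]
Since $a,b>0$ are fixed and the denominator grows linearly in $n$, $\widehat p_n\to0$, monotonically, at rate $a/n$. The only care needed is to read ``$C_n=0$ for all observed windows'' as the statement that the count stays at zero along the whole sequence, so the formula holds for every $n$.

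\emph{Debt side.} I would invoke the debt lower envelope directly.
\begin{itemize}[leftmargin=2em]
\item For $\lambda<1$, the envelope gives
\[
    D_n\geq\lambda^nD_0+\delta\,\frac{1-\lambda^n}{1-\lambda}.
\]
  The first term is nonnegative, and the second increases to $\delta/(1-\lambda)$. Hence $\liminf_n D_n\geq\delta/(1-\lambda)>0$.
\item For $\lambda=1$, the envelope gives $D_n\geq D_0+n\delta\to\infty$, which is linear growth at rate at least $\delta$.
\end{itemize}

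The main obstacle is the wording ``converges to a positive floor.'' The envelope only supplies a lower bound, not an actual limit. I would interpret the claim as a positive asymptotic floor, $\liminf_n D_n\geq\delta/(1-\lambda)$. If genuine convergence is wanted, one can add a matching upper envelope: assuming bounded net injection $u_k-c_k\leq\bar u$, the inactive-reflection form of the debt accounting identity gives $\limsup_n D_n\leq\bar u/(1-\lambda)$. Even then, only boundedness between two positive constants follows, not a single limit. I would state the floor version and remark on this.

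\emph{Conclusion.} Exhibit a joint scenario in which both hypotheses hold, for example a carrier whose hazard is zero because its attention $\alpha_{\sigma,k}=0$, while the unresolved residue satisfies $u_k-c_k\geq\delta$. In this scenario the personal observable $\widehat p_n$ tends to $0$, while the system observable $D_n$ stays bounded away from $0$ or diverges. Any functional of $\widehat p_n$ therefore cannot determine whether $D_n$ is small, so personal non-capture and system safety are not equivalent observables.

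The debt-hazard decoupling theorem explains why such a scenario is consistent: debt does not enter the carrier hazard $1-\exp(-\alpha r)$ unless it raises $\alpha r$.
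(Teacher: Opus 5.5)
Your proposal is correct and follows the paper's route: the numerator of $\widehat p_n$ stays fixed at $a$ while the denominator grows, the debt claim is read off directly from the debt lower envelope, and the two quantities are contrasted as functions of different observations. Two points are sound refinements of the paper's terser argument: reading ``converges to a positive floor'' as $\liminf_n D_n\geq\delta/(1-\lambda)$, and your explicit joint scenario. The scenario gives non-equivalence when paired with the trivial no-residue room ($u_k=c_k=0$, $D_0=0$), where non-capture also holds.
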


\begin{proof}
The estimate tends to zero because its numerator stays fixed while its denominator grows. The debt claim is exactly the preceding lemma. The two quantities are functions of different observations: one is a personal stopping history, the other is a room-level unresolved pressure state.
\end{proof}

\begin{definition}[Black-box breadth response]
A repeated room has a black-box breadth response if its selection breadth is a nondecreasing function
\[
    m_k=m(D_k),
\]
of the debt state, but the policy does not have a certificate that identifies the carrier inside the selected cover cell. The response is localization-blind on a cover cell $C_k\subseteq\mathcal I_k$ if all information available to the response policy is invariant under permutations of $C_k$.
\end{definition}

\begin{theorem}[Breadth-dilution lower bound]
Assume window $k$ contains a localization-blind cover cell $C_k$ of size $m_k$ and total attention budget $B_k^{\rm att}$. For any allocation of attention inside $C_k$, there exists a carrier placement $\sigma_k\in C_k$ with
\[
    \alpha_{\sigma_k,k}\leq \frac{B_k^{\rm att}}{m_k}.
\]
Consequently, if $r_{\sigma_k,k}\leq r_{\max}$, then
\[
    h_{\sigma_k,k}\leq
    1-\exp\left(-\frac{B_k^{\rm att}r_{\max}}{m_k}\right).
\]
If $m(D)$ increases while $B_k^{\rm att}$ and $r_{\max}$ are fixed, this upper bound decreases.
\end{theorem}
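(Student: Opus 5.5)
The argument is a pigeonhole bound on the attention allocation, followed by monotonicity of the hazard map; localization-blindness is what licenses the choice of carrier placement. Fix window $k$ and any allocation $(\alpha_{i,k})_{i\in C_k}$ with $\alpha_{i,k}\geq0$. Since $C_k\subseteq\mathcal I_k$ and all attention weights are nonnegative,
\[
    \sum_{i\in C_k}\alpha_{i,k}\leq\sum_{i\in\mathcal I_k}\alpha_{i,k}\leq B_k^{\rm att}.
\]
Averaging over the $m_k$ members of $C_k$ gives $\min_{i\in C_k}\alpha_{i,k}\leq B_k^{\rm att}/m_k$. I then place the carrier at a minimizer $\sigma_k\in\arg\min_{i\in C_k}\alpha_{i,k}$, which yields $\alpha_{\sigma_k,k}\leq B_k^{\rm att}/m_k$.

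The only conceptual point is that this placement is legitimate, i.e.\ the allocation cannot be chosen after seeing where the carrier is. This is where localization-blindness from the definition of black-box breadth response enters. All information available to the response policy is invariant under permutations of $C_k$. Hence the law of the allocation, and so the allocation the policy actually outputs, does not change when the carrier is moved to another position inside $C_k$. I would state this explicitly: for a fixed realized allocation, relocating $\sigma_k$ within $C_k$ produces an observationally identical window, so the adversarial placement is consistent with the policy's information. If the policy randomizes, I would apply the same averaging to the expected allocation $\mathbb E\alpha_{i,k}$. This gives a placement with small expected attention, and Jensen's inequality for the concave map $x\mapsto1-e^{-cx}$ carries the bound over to expected hazard. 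Making this randomized case precise is the main obstacle. In the deterministic reading the step is immediate.

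For the hazard bound, the map $x\mapsto 1-\exp(-x)$ is increasing on $[0,\infty)$. Combining this with $\alpha_{\sigma_k,k}\leq B_k^{\rm att}/m_k$ and $0\leq r_{\sigma_k,k}\leq r_{\max}$ gives $\alpha_{\sigma_k,k}r_{\sigma_k,k}\leq B_k^{\rm att}r_{\max}/m_k$, and therefore
\[
    h_{\sigma_k,k}\leq 1-\exp\left(-\frac{B_k^{\rm att}r_{\max}}{m_k}\right).
\]
Finally, $m\mapsto B r_{\max}/m$ is decreasing for fixed $B,r_{\max}\geq0$, and the composition with the increasing map above preserves this. So if $m_k=m(D_k)$ grows with debt while $B_k^{\rm att}$ and $r_{\max}$ are fixed, the upper bound on carrier hazard decreases (weakly, and strictly when $B_k^{\rm att}r_{\max}>0$). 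This is the debt–hazard decoupling predicted by the earlier debt–hazard decoupling theorem.
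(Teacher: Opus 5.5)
Your proposal is correct and follows the paper's own proof: pigeonhole on the budget to find a least-attended member of $C_k$, localization-blindness to justify placing the carrier there, then monotonicity of $x\mapsto 1-e^{-x}$ together with the decrease of $B_k^{\rm att}r_{\max}/m_k$ in $m_k$. Your Jensen extension to randomized allocations (averaging the expected allocation, then using concavity of $x\mapsto 1-e^{-r_{\max}x}$) is a valid addition that the paper does not need, since the statement concerns a fixed allocation.
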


\begin{proof}
By the pigeonhole principle, among the $m_k$ members of $C_k$ at least one receives attention at most $B_k^{\rm att}/m_k$. In a localization-blind cell, the available evidence does not distinguish which member is the carrier, so a worst-case carrier placement may be at such a least-attended member. Substituting this attention bound into $h=1-\exp(-\alpha r)$ and using $r\leq r_{\max}$ gives the inequality. Monotonicity in $m_k$ is immediate.
\end{proof}

\begin{corollary}[More screening need not mean better localization]
A black-box response that increases $m(D_k)$ in reaction to unresolved debt can increase the number of selected candidates while decreasing the guaranteed per-carrier resolution hazard. Thus response breadth and localization quality are distinct variables.
\end{corollary}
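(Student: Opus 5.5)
The corollary is read off the breadth-dilution lower bound, plus an explicit example showing the two variables can move in opposite directions. The proof has three steps.

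First, fix a window $k$ and compare two debt levels $D<D'$. Because $m$ is nondecreasing, $m(D)\leq m(D')$, and we choose the configuration so that the inequality is strict. Assume the selected set contains the localization-blind cover cell $C_k$. Then the number of selected candidates is at least $m(D_k)$, so the selection count rises with debt.

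Second, hold $B_k^{\rm att}$ and $r_{\max}$ fixed and apply the breadth-dilution lower bound at both debt levels. The guaranteed per-carrier hazard is the worst case over carrier placements in the cell. It is therefore bounded by $1-\exp(-B_k^{\rm att}r_{\max}/m(D_k))$, and this bound is strictly decreasing in $m$.

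The main obstacle is that a decreasing upper bound does not by itself show that the guaranteed hazard actually decreases. To close this, I would exhibit an attainment case. Take uniform attention $\alpha_{i,k}=B_k^{\rm att}/m_k$ on $C_k$ and residue $r_{\sigma_k,k}=r_{\max}$. Uniform allocation is forced for a permutation-invariant policy up to randomization, and it maximizes the minimum attention any member receives. With these choices the worst-case carrier hazard equals the bound, so it strictly decreases as $m_k$ grows.

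Third, conclude as follows. Along this configuration the number of selected candidates increases while the guaranteed per-carrier resolution hazard decreases. Hence no monotone relation holds between breadth and localization quality, and they are distinct variables. The debt-hazard decoupling theorem supports this reading: debt raises carrier hazard only by raising $\alpha r$, and breadth growth at fixed budget lowers $\alpha$.
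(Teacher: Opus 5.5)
Your proof is correct and follows the paper's argument. Breadth rises because $m_k=m(D_k)$ is by definition the selection breadth, and the breadth-dilution bound $1-\exp(-B_k^{\rm att}r_{\max}/m_k)$ is strictly decreasing in $m_k$ at fixed budget and residue cap. Your attainment step improves on the paper's one-line proof: with uniform allocation and every placement carrying residue $r_{\max}$, the minimax hazard equals the bound exactly, whereas the paper only exhibits a decreasing upper bound and leaves implicit that the guaranteed hazard itself decreases.
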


\begin{proof}
Increasing $m(D_k)$ increases breadth by definition. The preceding theorem gives a decreasing upper bound on the worst-case carrier hazard when attention budget is fixed.
\end{proof}

\begin{proposition}[Softmax spillover derivative]
Let candidates have scores
\[
    S_i(D)=a_i+\gamma D q_i,
    \qquad \gamma>0,
\]
and selection probabilities
\[
    \pi_i(D)=\frac{\exp(\beta S_i(D))}{\sum_{\ell}\exp(\beta S_{\ell}(D))},
    \qquad \beta>0.
\]
Then
\[
    \frac{d}{dD}\pi_i(D)
    =\beta\gamma\pi_i(D)
      \left(q_i-\sum_{\ell}\pi_{\ell}(D)q_{\ell}\right).
\]
Hence every non-carrier whose pressure-coupling $q_i$ is above the current selected average receives increasing selection probability as debt rises.
\end{proposition}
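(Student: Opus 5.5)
The derivative is a direct computation on the softmax, and I will carry it out by logarithmic differentiation. Write $Z(D)=\sum_\ell \exp(\beta S_\ell(D))$, so that $\log\pi_i(D)=\beta S_i(D)-\log Z(D)$. Since $S_i$ is affine in $D$ with $S_i'(D)=\gamma q_i$, the first term differentiates to $\beta\gamma q_i$. For the second term,
\[
    \frac{Z'(D)}{Z(D)}=\sum_\ell\frac{\exp(\beta S_\ell(D))}{Z(D)}\,\beta\gamma q_\ell=\beta\gamma\sum_\ell\pi_\ell(D)q_\ell .
\]
Subtracting gives $\frac{d}{dD}\log\pi_i(D)=\beta\gamma\bigl(q_i-\sum_\ell\pi_\ell(D)q_\ell\bigr)$. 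Multiplying by $\pi_i(D)$ gives the displayed formula. The candidate set is finite, so $Z$ is a finite sum of smooth positive functions and the differentiation is legitimate.

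For the interpretive consequence, the sign of $\frac{d}{dD}\pi_i(D)$ is the sign of $q_i-\bar q(D)$, where $\bar q(D)=\sum_\ell\pi_\ell(D)q_\ell$ is the selection-weighted average coupling. This holds because $\beta>0$, $\gamma>0$, and $\pi_i(D)>0$ (softmax probabilities are strictly positive). So any candidate, carrier or not, whose coupling strictly exceeds the current selected average has strictly increasing selection probability at that debt level. Applied to $i\neq\sigma_k$, this is the spillover statement.

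The only step needing care is the word ``increasing'' as $D$ varies over an interval, since $\bar q(D)$ itself moves with $D$. I will state the conclusion pointwise, as an instantaneous derivative at the current debt, which is what the proposition asserts. If a monotone statement over an interval is wanted, I would add one observation: $\bar q'(D)=\beta\gamma\operatorname{Var}_{\pi(D)}(q)\geq0$, by differentiating $\sum_\ell\pi_\ell q_\ell$ with the formula just proved. This means $\bar q$ rises toward $\max_\ell q_\ell$, so the set of candidates above average shrinks as $D$ grows. Hence the claim holds on any interval where $q_i>\bar q(D)$ persists, and globally for the candidates of maximal coupling.
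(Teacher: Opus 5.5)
Your proof is correct and matches the paper's: both differentiate the exponential numerator and the log-normalizer and subtract, and the sign conclusion follows from $\beta,\gamma,\pi_i(D)>0$. Your extra identity $\bar q'(D)=\beta\gamma\operatorname{Var}_{\pi(D)}(q)\ge 0$ is also correct, and it explains why the paper's later false-positive-load proposition has to assume that the above-average condition persists over an interval.
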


\begin{proof}
Differentiate the normalized exponential expression. The derivative of the numerator is $\beta\gamma q_i$ times the numerator; the derivative of the log normalizer is $\beta\gamma\sum_\ell \pi_\ell(D)q_\ell$. Subtracting gives the formula.
\end{proof}

\begin{proposition}[False-positive load monotonicity]
In the softmax spillover model, suppose a non-carrier set $J$ satisfies
\[
    q_j\geq \sum_{\ell}\pi_{\ell}(D)q_{\ell}+a
    \qquad (j\in J)
\]
throughout an interval $D\in[D_0,D_1]$, for some $a>0$. Then
\[
    \frac{d}{dD}\sum_{j\in J}\pi_j(D)
    \geq \beta\gamma a\sum_{j\in J}\pi_j(D),
\]
and therefore
\[
    \sum_{j\in J}\pi_j(D_1)
    \geq
    \exp(\beta\gamma a(D_1-D_0))
    \sum_{j\in J}\pi_j(D_0)
\]
as long as the displayed separation condition persists.
\end{proposition}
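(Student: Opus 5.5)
The plan is to sum the softmax spillover derivative over $J$ and then integrate the resulting differential inequality with a Grönwall-type argument. Write $\Pi_J(D)=\sum_{j\in J}\pi_j(D)$ and $\bar q(D)=\sum_\ell \pi_\ell(D)q_\ell$.

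\emph{Step 1.} By the softmax spillover derivative, each $j\in J$ satisfies
\[
    \pi_j'(D)=\beta\gamma\,\pi_j(D)\bigl(q_j-\bar q(D)\bigr).
\]
The separation hypothesis gives $q_j-\bar q(D)\geq a$ on $[D_0,D_1]$. Since $\beta\gamma>0$ and $\pi_j(D)>0$ (softmax weights are strictly positive), this yields $\pi_j'(D)\geq\beta\gamma a\,\pi_j(D)$. Summing over the finite set $J$ gives the displayed derivative bound
\[
    \Pi_J'(D)\geq\beta\gamma a\,\Pi_J(D).
\]

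\emph{Step 2.} Since each $\pi_i$ is a smooth function of $D$, so is $\Pi_J$. Consider
\[
    \psi(D)=e^{-\beta\gamma a(D-D_0)}\,\Pi_J(D).
\]
Its derivative is
\[
    \psi'(D)=e^{-\beta\gamma a(D-D_0)}\bigl(\Pi_J'(D)-\beta\gamma a\,\Pi_J(D)\bigr)\geq0
\]
throughout the interval. So $\psi$ is nondecreasing, and $\psi(D_1)\geq\psi(D_0)=\Pi_J(D_0)$. Rearranging gives the exponential lower bound. If $J$ is empty, both sides are zero and there is nothing to prove.

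\emph{Main obstacle.} The only delicate point is making sure the hypothesis holds pointwise on the whole closed interval, because $\bar q(D)$ moves with $D$. The statement assumes this ("throughout an interval", "as long as the displayed separation condition persists"), so no continuation argument is needed. If one instead assumed the condition only at $D_0$, one would have to stop at the first exit time from the separation region. The bound would then hold up to that time, which is the reading of "as long as". I would note this caveat in one sentence. Everything else is routine, and one could equivalently divide by $\Pi_J>0$ and integrate $(\log\Pi_J)'\geq\beta\gamma a$.
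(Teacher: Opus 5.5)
Your proposal is correct and follows the paper's proof. The paper sums the softmax spillover derivative over $J$, bounds each parenthesized term below by $a$, and applies Gronwall's inequality. Your integrating-factor argument with $\psi(D)=e^{-\beta\gamma a(D-D_0)}\sum_{j\in J}\pi_j(D)$ is simply the explicit form of that Gronwall step.
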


\begin{proof}
Sum the spillover derivative over $j\in J$ and use the assumed lower bound on each parenthesized term. The exponential bound is Gronwall's inequality applied to the resulting differential inequality.
\end{proof}

\begin{theorem}[Pressure without localization]
Consider a repeated room satisfying the following conditions.
\begin{enumerate}[label=(\roman*),leftmargin=2.2em]
\item unresolved residue is persistently positive: $u_k-c_k\geq\delta>0$;
\item breadth is debt-responsive: $m_k=m(D_k)$ is nondecreasing;
\item the carrier remains inside a localization-blind cover cell of size at least $m_k$;
\item total attention is bounded by $B^{\rm att}$ and carrier residue is bounded by $r_{\max}$.
\end{enumerate}
Then $D_k$ has the lower envelope of the debt lemma, while the guaranteed carrier hazard satisfies
\[
    h_{\sigma_k,k}\leq
    1-\exp\left(-\frac{B^{\rm att}r_{\max}}{m(D_k)}\right).
\]
If $m(D)\to\infty$ along the debt trajectory, the guaranteed carrier hazard tends to zero even though room pressure increases.
\end{theorem}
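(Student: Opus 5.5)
The plan is to assemble the theorem from the two ingredients already in place, the debt lower envelope and the breadth-dilution lower bound, and then take a limit.

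\emph{Debt side.} Condition (i) gives $u_k-c_k\geq\delta$. Since $D_k\geq0$, we have $\lambda D_k+u_k-c_k\geq\lambda D_k+\delta>0$, so the positive part in the debt recursion is inactive. Hence $D_{k+1}=\lambda D_k+u_k-c_k\geq\lambda D_k+\delta$ for every $k$. This is exactly the hypothesis of the debt lower envelope lemma. That lemma yields $D_n\geq\lambda^nD_0+\delta(1-\lambda^n)/(1-\lambda)$ when $\lambda<1$, and $D_n\geq D_0+n\delta$ when $\lambda=1$. I would also note that the ledger identity shows debt never drops below this envelope. This step is routine.

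\emph{Hazard side.} Fix a window $k$. By (iii), the carrier lies in a localization-blind cover cell $C_k$ with $|C_k|\geq m_k=m(D_k)$. Attention inside $C_k$ is at most the total budget, which is at most $B^{\rm att}$ by (iv). The pigeonhole argument of the breadth-dilution theorem then gives some member of $C_k$ with attention at most $B^{\rm att}/|C_k|\leq B^{\rm att}/m(D_k)$. Localization-blindness means the policy's information is permutation invariant on $C_k$, so the worst-case carrier placement can be taken at that member. Together with $r_{\sigma_k,k}\leq r_{\max}$ and the monotonicity of $x\mapsto1-e^{-x}$, this gives the displayed bound on $h_{\sigma_k,k}$. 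Here ``guaranteed'' is read, as in the breadth-dilution theorem, as a worst case over carrier placements consistent with the blind cell. I would state this reading explicitly, because otherwise a fixed carrier could be the heavily attended one.

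\emph{Limit and main obstacle.} The last claim needs care. If $m(D_k)\to\infty$ along the trajectory, then $B^{\rm att}r_{\max}/m(D_k)\to0$. Continuity of $x\mapsto1-e^{-x}$ at $0$ then makes the upper bound tend to $0$, and so the guaranteed hazard tends to $0$. At the same time, ``room pressure increases'' follows from the envelope. Under $\lambda=1$ debt diverges. Under $\lambda<1$ debt is eventually at least $\delta/(1-\lambda)-\epsilon$, though it is not necessarily monotone. To claim monotone growth I would add that $D_{k+1}-D_k\geq\delta-(1-\lambda)D_k$, which is positive while $D_k<\delta/(1-\lambda)$. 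I expect the main obstacle to be consistency between the two regimes. When $\lambda<1$ the debt may stay bounded, so $m(D_k)\to\infty$ is a genuine hypothesis rather than a consequence. I would therefore state this case split explicitly rather than derive divergence of $m$ from the envelope.
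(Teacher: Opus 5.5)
Your proposal is correct and follows the paper's proof: condition (i) gives $D_{k+1}\geq\lambda D_k+\delta$ and hence the debt lower envelope, conditions (ii)--(iv) feed the pigeonhole argument of the breadth-dilution theorem at each window with $m_k=m(D_k)$, and the limit follows from $B^{\rm att}r_{\max}/m(D_k)\to0$. Your additions are consistent with the paper's shorter argument and make it more explicit: the bound $B^{\rm att}/|C_k|\leq B^{\rm att}/m(D_k)$ for a cell of size at least $m_k$, reading ``guaranteed'' as a worst case over carrier placements, and the remark that for $\lambda<1$ divergence of $m(D_k)$ is a genuine hypothesis rather than a consequence of the envelope.
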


\begin{proof}
Condition (i) gives $D_{k+1}\geq\lambda D_k+\delta$, hence the debt lower envelope. Conditions (ii)--(iv) allow the breadth-dilution theorem to be applied at every window with $m_k=m(D_k)$. If $m(D_k)$ diverges, the exponent $B^{\rm att}r_{\max}/m(D_k)$ tends to zero, and therefore the hazard upper bound tends to zero.
\end{proof}

\begin{theorem}[Displaced-risk debt certificate]
Assume the hypotheses of pressure without localization. Suppose also that a non-carrier subpopulation $J_k\subseteq\mathcal I_k\setminus\{\sigma_k\}$ has pressure-couplings $q_j$ above the selected average in the softmax spillover model. Then there is an interval of windows on which
\[
    D_k \text{ increases},
    \qquad
    h_{\sigma_k,k} \text{ is not guaranteed to increase},
    \qquad
    \sum_{j\in J_k}\pi_j(D_k) \text{ increases}.
\]
This is a certificate of displaced-risk choric masking.
\end{theorem}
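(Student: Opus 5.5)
The certificate will be assembled from three results already in hand: the debt lower envelope, the pressure-without-localization bound, and the softmax spillover derivative with its false-positive monotonicity corollary. Each of the three displayed claims gets its own argument. The only delicate point is checking that all three hold together on one interval of windows.

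\emph{Debt increases.} Condition (i) of pressure without localization gives $D_{k+1}\geq\lambda D_k+\delta$. With $\lambda=1$ this means $D_{k+1}-D_k\geq\delta>0$ at every window. With $\lambda<1$ we have $D_{k+1}-D_k\geq\delta-(1-\lambda)D_k$, which is at least $\delta/2$ as long as $D_k\leq\delta/(2(1-\lambda))$. Starting from any $D_0$ below that level, the lower envelope shows the trajectory stays below the floor $\delta/(1-\lambda)$ only for finitely many steps. So there is a nontrivial initial stretch $[k_0,k_1]$ on which $D_k$ increases strictly, with increments at least $d=\min\{\delta,\delta/2\}$. On the case $\lambda<1$, $D_0$ large, I will simply take the interval to be empty unless $D_0$ is below the floor. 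The statement only asserts existence of an interval, so I will restrict to initial debt below $\delta/(2(1-\lambda))$, or to $\lambda=1$. I would state this restriction explicitly.

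\emph{Carrier hazard not guaranteed to increase.} On that interval, pressure without localization gives $h_{\sigma_k,k}\leq 1-\exp(-B^{\rm att}r_{\max}/m(D_k))$. Because $m$ is nondecreasing and $D_k$ increases, this bound is nonincreasing in $k$. Debt-hazard decoupling then says the hazard rises only if $\alpha r$ rises. To certify ``not guaranteed'', I exhibit an admissible placement. The carrier sits at a least-attended member of the localization-blind cell, with attention at most $B^{\rm att}/m(D_k)$ by the breadth-dilution pigeonhole step, and with $r_{\sigma_k,k}$ held constant. For this placement $\alpha r$ is nonincreasing, so its hazard does not increase. This counterexample-style witness is enough, since the claim is about absence of a guarantee.

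\emph{Non-carrier load increases.} The hypothesis says $q_j$ exceeds the selected average for $j\in J_k$. I will take it in the uniform form $q_j\geq\sum_\ell\pi_\ell(D)q_\ell+a$ with $a>0$, over the range of $D$ swept by the trajectory on $[k_0,k_1]$. The main obstacle is exactly here. The softmax average $\sum_\ell\pi_\ell q_\ell$ drifts upward with $D$ toward $\max q$, so a strict gap at a single $D$ need not persist. To handle this, I will use continuity of the average in $D$ to shrink $[k_0,k_1]$ to a subinterval of debt values on which the gap stays at least $a/2$. False-positive load monotonicity then gives $\frac{d}{dD}\sum_{j\in J}\pi_j\geq\beta\gamma(a/2)\sum_{j\in J}\pi_j>0$. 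Hence $\sum_{j\in J_k}\pi_j(D_k)$ is strictly increasing along the increasing debt sequence. I take $J_k=J$ fixed on the interval, which the hypothesis permits.

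Intersecting the three windows gives a nontrivial interval where all three displays hold. These are precisely the defining inequalities of displaced-risk choric masking, with $\bar h$ taken as the hazard bound at the left endpoint.
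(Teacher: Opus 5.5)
Your route is the paper's: the debt lower envelope for the first display, the nonincreasing breadth-dilution bound for the second, and the spillover derivative (false-positive monotonicity) for the third. The paper's proof simply conditions on the gap inequality persisting.

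Your $\lambda<1$ caveat is a genuine correction that the paper's one-line ``debt increase follows from the debt lower envelope'' skips. Take $\lambda=\tfrac12$, $u_k-c_k=\delta=1$ and $D_0=10$: then $D_k=10,6,4,3,\ldots\downarrow 2$, so hypothesis (i) holds and debt never increases. Some condition like $D_0<\delta/(1-\lambda)$ is therefore needed, and your $D_0\le\delta/(2(1-\lambda))$ gives the uniform increment $\delta/2$. Your aside that the envelope forces $D_k$ above $\delta/(1-\lambda)$ after finitely many steps is false, since the envelope approaches that floor from below. You only use that $D_k$ passes $\delta/(2(1-\lambda))$ in finitely many steps, and that is true.

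Two steps need repair.

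\emph{Hazard witness.} ``Least-attended, attention at most $B^{\rm att}/m(D_k)$'' does not make $\alpha_{\sigma_k,k}r_{\sigma_k,k}$ nonincreasing in $k$. The allocation can move from concentrated (least-attended attention $0$) to uniform, and the carrier's hazard then rises. Fix the allocation to be uniform on the cell and set $r_{\sigma_k,k}=r_{\max}$. The hazard then equals the dilution bound, which is nonincreasing.

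\emph{Spillover step.} If you keep the uniform gap over the whole swept debt range, as you first wrote, false-positive monotonicity applies directly and no shrinking is needed. If you only have a pointwise gap, your continuity shrink fails. Consecutive debt values on your stretch differ by at least $\delta/2$, so the shrunken neighborhood may contain a single window, and then nothing is shown to increase. The final ``intersecting the three windows'' is unjustified for the same reason.

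The missing observation is about $\bar q(D)=\sum_\ell\pi_\ell(D)q_\ell$: it satisfies $\frac{d}{dD}\bar q(D)=\beta\gamma\operatorname{Var}_{\pi(D)}(q)\ge0$. Hence each gap $q_j-\bar q(D)$ is nonincreasing in $D$, and a gap of size $a$ at the top value $D_{k_1}$ holds on all of $[D_{k_0},D_{k_1}]$. The set of debt values carrying the gap is therefore a down-set, which is compatible with your initial debt-increase stretch. All three displays then hold on the initial windows as soon as the gap holds at $D_1$.
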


\begin{proof}
Debt increase follows from the debt lower envelope. The absence of guaranteed carrier-hazard increase follows from the breadth-dilution bound, which can decrease as $m(D_k)$ grows. The spillover derivative is positive for every $j$ whose $q_j$ is above the current selected average, so the total selection probability of such a subpopulation increases on intervals where this strict inequality persists.
\end{proof}

\begin{theorem}[Budget required to defeat cover dilution]
Let a cover cell have size $m$, and suppose every possible carrier in the cell has residue at least $r_{\min}>0$. To guarantee one-window hazard at least $\zeta\in(0,1)$ for every carrier placement using only attention allocation, the total attention budget assigned to the cell must satisfy
\[
    B^{\rm att}\geq \frac{m}{r_{\min}}\log\frac{1}{1-\zeta}.
\]
\end{theorem}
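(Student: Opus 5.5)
The plan reduces the guarantee to a condition on the least-attended member of the cell and then applies the pigeonhole step used in the breadth-dilution lower bound. Let the cover cell be $C$ with $|C|=m$, and let the attention allocation be $(\alpha_i)_{i\in C}$ with $\sum_{i\in C}\alpha_i\leq B^{\rm att}$. The requirement is that the hazard be at least $\zeta$ for every carrier placement $\sigma\in C$:
\[
    1-\exp(-\alpha_\sigma r_\sigma)\geq\zeta .
\]
Because the guarantee must hold under every placement and every admissible residue profile, I will test it against the adversarial profile in which the carrier's residue equals exactly $r_{\min}$.

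The key steps, in order, are as follows.

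First, invert the hazard formula. Since $x\mapsto 1-e^{-x}$ is strictly increasing on $[0,\infty)$ (as in the adaptive hazard decomposition), the requirement $1-\exp(-\alpha_\sigma r_{\min})\geq\zeta$ is equivalent to
\[
    \alpha_\sigma\geq \frac{1}{r_{\min}}\log\frac{1}{1-\zeta}.
\]
This is well defined because $\zeta\in(0,1)$.

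Second, apply the pigeonhole step from the breadth-dilution lower bound. Some member $i^\star\in C$ receives attention $\alpha_{i^\star}\leq B^{\rm att}/m$. Placing the carrier at $i^\star$ with residue $r_{\min}$ is an admissible configuration. The guarantee must hold there, so combining with the first step gives
\[
    \frac{B^{\rm att}}{m}\geq\alpha_{i^\star}\geq \frac{1}{r_{\min}}\log\frac{1}{1-\zeta}.
\]
Rearranging yields the displayed bound on $B^{\rm att}$.

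Third, if room allows, remark that the bound is tight. The uniform allocation $\alpha_i=B^{\rm att}/m$, with equality in the budget, attains hazard at least $\zeta$ for every placement with $r_\sigma\geq r_{\min}$, again by monotonicity in the product $\alpha r$.

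The only delicate point is the quantifier structure, since the statement is an existence-of-counterexample argument. The phrase ``every possible carrier has residue at least $r_{\min}$'' must be read as permitting residue exactly $r_{\min}$; otherwise a carrier with larger residue could need less attention, and the bound would fail. I would state this reading explicitly, which matches the worst-case placement convention of the breadth-dilution theorem. I would also note that the allocation may depend only on information that is localization-blind on the cell. The pigeonhole argument does not even need this: a minimum-attention member exists for any allocation.
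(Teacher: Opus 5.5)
Your proof is correct and is essentially the paper's argument: the paper notes that every member must receive attention at least $r_{\min}^{-1}\log(1/(1-\zeta))$, since any member could be the carrier with residue $r_{\min}$, and then sums over the $m$ members; you run the same averaging step through the least-attended member and the bound $\alpha_{i^\star}\leq B^{\rm att}/m$. Your explicit handling of the worst-case residue, of the quantifier order (a single allocation must serve every placement), and your tightness remark usefully spell out what the paper's two-line proof leaves implicit.
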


\begin{proof}
Guaranteeing hazard $\zeta$ for every carrier placement requires every member of the cell to receive attention $\alpha_i$ satisfying
\[
    1-\exp(-\alpha_i r_{\min})\geq\zeta.
\]
Equivalently, $\alpha_i\geq r_{\min}^{-1}\log(1/(1-\zeta))$. Summing this lower bound over $m$ members yields the budget requirement.
\end{proof}

\begin{definition}[Localization kernel]
A response policy has localization kernel $\kappa_{i,k}\in[0,1]$ if the debt increment assigned to candidate $i$ at window $k$ is $\kappa_{i,k}D_k$, with $\sum_i\kappa_{i,k}\leq 1$. The kernel is sound at margin $\eta>0$ for a carrier $\sigma_k$ if
\[
    \kappa_{\sigma_k,k}\geq \eta
    \quad\text{whenever}\quad D_k>0.
\]
It is blind on a cover cell $C_k$ if $\kappa_{i,k}$ is invariant under permutations of $C_k$.
\end{definition}

\begin{theorem}[Black-box pressure cannot create carrier information]
Let $\Sigma$ be uniform on a cover cell $C_k$ of size $m_k$, and let the response observe a transcript $Z_k$ together with a debt state $D_k$ that is blind on the cell: $D_k$ is a function of statistics invariant under permutations of $C_k$, so that $\Sigma$ is independent of $D_k$ and remains uniform given $D_k$. Suppose that conditioned on $D_k$ the transcript has mutual information
\[
    \MI(\Sigma;Z_k\mid D_k)\leq I_k.
\]
Then every carrier-selection rule based on $(Z_k,D_k)$ has error probability at least
\[
    1-\frac{I_k+\log2}{\log m_k}.
\]
In particular, if the debt state is a black-box aggregate and does not increase $I_k$, then increasing $D_k$ cannot by itself produce a reliable carrier-local response.  It can only change breadth, thresholds, or selection load.
\end{theorem}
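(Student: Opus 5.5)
The plan is to reduce the statement to the earlier information lower bound for localization (Fano's inequality in finite form), applied conditionally on the debt state. The observation available to the response is the pair $(Z_k,D_k)$. I would use the chain rule for mutual information to write
\[
    \MI(\Sigma;Z_k,D_k)=\MI(\Sigma;D_k)+\MI(\Sigma;Z_k\mid D_k).
\]
The first term vanishes by the blindness hypothesis: $D_k$ is a function of permutation-invariant statistics, so $\Sigma$ is independent of $D_k$. The second term is at most $I_k$ by assumption. Hence $\MI(\Sigma;Z_k,D_k)\leq I_k$.

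Next I would invoke the information lower bound theorem with $Z=(Z_k,D_k)$ and cover size $m=m_k$. The carrier $\Sigma$ is uniform on $C_k$, so $\Ent(\Sigma)=\log m_k$. Fano then gives, for every estimator $\widehat\Sigma(Z_k,D_k)$,
\[
    \Pr\{\widehat\Sigma\neq\Sigma\}\geq 1-\frac{\MI(\Sigma;Z_k,D_k)+\log2}{\log m_k}\geq 1-\frac{I_k+\log 2}{\log m_k},
\]
which is the displayed bound. Randomized selection rules can be handled in two ways. One is to condition on the independent auxiliary randomness, which does not increase mutual information with $\Sigma$. The other is to note that Fano holds for any Markov estimator $\Sigma\to(Z_k,D_k)\to\widehat\Sigma$ by data processing.

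For the interpretive consequence, I would observe that the bound depends on $D_k$ only through $I_k$. If the aggregate debt does not raise $I_k$, then the error floor is unchanged as $D_k$ grows. This holds for any value of $D_k$, including larger ones, so debt growth alone cannot push the success probability above $(I_k+\log 2)/\log m_k$. Anything the policy does as a function of $D_k$ therefore acts identically across the cell and so affects only breadth or thresholds. This matches the localization-blind kernel notion defined just above.

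The main obstacle is justifying $\MI(\Sigma;D_k)=0$ rigorously. Permutation invariance of the statistics feeding $D_k$ gives independence only if the joint law of those statistics and $\Sigma$ is exchangeable under relabelings of $C_k$. I would first try to state this as the precise reading of the hypothesis, which the statement already asserts ("so that $\Sigma$ is independent of $D_k$"), and then argue as follows. For any permutation $\tau$ of $C_k$, the pair $(\tau\Sigma,D_k)$ has the same law as $(\Sigma,D_k)$. Since $\Sigma$ is uniform, it follows that the conditional law of $\Sigma$ given $D_k$ is uniform.
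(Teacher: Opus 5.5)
Your proposal is correct and is essentially the paper's argument: both use blindness to remove any carrier information in $D_k$ and then reduce to the finite Fano localization bound. The difference is cosmetic. The paper applies Fano conditionally on each value $D_k=d$, where $\Sigma$ is still uniform, and then averages. You package the same computation through the chain rule $\MI(\Sigma;Z_k,D_k)=\MI(\Sigma;D_k)+\MI(\Sigma;Z_k\mid D_k)\leq I_k$ and apply Fano once to the pair $(Z_k,D_k)$.
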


\begin{proof}
Blindness gives $\MI(\Sigma;D_k)=0$ and keeps $\Sigma$ uniform on the cell given $D_k=d$. Condition on a value of $D_k=d$. Since $d$ is known to the response, the relevant observation is $Z_k$ under the conditional channel from $\Sigma$ to $Z_k$.  Applying the finite localization lower bound conditionally gives
\[
    \Pr\{\widehat\Sigma\neq\Sigma\mid D_k=d\}
    \geq
    1-\frac{\MI(\Sigma;Z_k\mid D_k=d)+\log2}{\log m_k}.
\]
Averaging over $D_k$ and using the assumed conditional mutual-information bound gives the result.  A scalar aggregate can affect a policy, but unless it changes the information channel about which member of the cover cell is the carrier, it cannot overcome the Fano lower bound.
\end{proof}

\begin{lemma}[Symmetry forces diluted localization]
If a localization kernel is blind on a cover cell $C_k$ of size $m_k$ and assigns total mass at most one inside that cell, then some candidate in $C_k$ has $\kappa_{i,k}\leq 1/m_k$. If the carrier cannot be distinguished inside $C_k$, no policy using only this blind kernel can guarantee $\kappa_{\sigma_k,k}>1/m_k$ for all carrier placements.
\end{lemma}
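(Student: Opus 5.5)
The first claim is an averaging argument, and I would carry it out as pigeonhole exactly as in the breadth-dilution bound. Blindness means $\kappa_{i,k}$ is invariant under permutations of $C_k$, so all members of the cell carry a common value $\kappa$. The total mass inside the cell is at most one, so $m_k\kappa\leq\sum_{i\in C_k}\kappa_{i,k}\leq 1$, which gives $\kappa\leq 1/m_k$. In fact I would note the plain pigeonhole version without invoking symmetry: the minimum of $m_k$ nonnegative numbers summing to at most one is at most their average $1/m_k$. This already yields a member $i\in C_k$ with $\kappa_{i,k}\leq 1/m_k$. With symmetry the bound holds for every member of the cell, which is the stronger form I would actually use.

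For the second claim I would argue by worst-case carrier placement. A policy that uses only the blind kernel assigns the carrier the value $\kappa_{\sigma_k,k}$. Because the carrier is indistinguishable inside $C_k$, its assignment is the one the kernel gives to whichever position $\sigma_k$ occupies, and the kernel cannot depend on that position. Take the member $i^\ast$ produced in the first step and place the carrier at $\sigma_k=i^\ast$. Then $\kappa_{\sigma_k,k}\leq 1/m_k$, so the guarantee $\kappa_{\sigma_k,k}>1/m_k$ fails for at least one placement. Under the symmetric reading it fails for every placement. This mirrors the localization-blind placement step in the breadth-dilution theorem.

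The only delicate point is making precise what ``cannot be distinguished'' means, so that the kernel value at the carrier really is the kernel value at an arbitrary chosen member. I would handle it by tying it to the definition of blindness: the policy's inputs are permutation-invariant on $C_k$, so relabeling the carrier leaves $\kappa$ unchanged. If randomized kernels are allowed, the same argument applies to the expected kernel $\mathbb E\,\kappa_{i,k}$, and one could optionally add a remark linking this to the Fano bound of the preceding theorem.
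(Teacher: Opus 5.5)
Your proposal is correct and follows the paper's proof: pigeonhole on $\sum_{i\in C_k}\kappa_{i,k}\leq 1$ for the first claim, then a worst-case carrier placement at a least-weighted member, justified by permutation invariance of the available information, for the second. Your remark that the literal reading of blindness makes $\kappa_{i,k}$ constant on $C_k$ is a harmless strengthening. Keeping the plain pigeonhole argument as the primary route makes the proof robust to either reading of ``invariant under permutations.''
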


\begin{proof}
The first claim is the pigeonhole principle applied to $\sum_{i\in C_k}\kappa_{i,k}\leq1$. For the second claim, permutation invariance means that the available response information is identical after relabeling members of $C_k$. A worst-case carrier placement can therefore choose a member whose kernel mass is at most the average.
\end{proof}

\begin{theorem}[Localization completeness is necessary for pressure to become hazard]
Assume the one-window hazard of the carrier has the form
\[
    h_{\sigma,k}=1-\exp(-\alpha_{\sigma,k}r_{\sigma,k})
\]
where the policy converts kernel mass into attention proportionally, so that $\alpha_{i,k}=B^{\rm att}\kappa_{i,k}$, and $r_{\sigma,k}\leq r_{\max}$. If the only response to debt is blind on a cover cell of size $m_k$, then the worst-case guarantee of any blind policy satisfies
\[
    \sup_{\text{blind policies}}\inf_{\sigma_k\in C_k} h_{\sigma,k}
    \leq
    1-\exp\left(-\frac{B^{\rm att}r_{\max}}{m_k}\right).
\]
In contrast, if the policy has a sound localization kernel with margin $\eta$ and the carrier residue is at least $r_{\min}>0$, then
\[
    h_{\sigma,k}\geq 1-\exp(-B^{\rm att}\eta r_{\min}).
\]
Thus pressure improves source-specific hazard only after it is converted into localized attention or localized residue.
\end{theorem}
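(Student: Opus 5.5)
The theorem has two halves, and each reduces to the monotonicity of $x\mapsto 1-e^{-x}$ on $[0,\infty)$ combined with a bound on $\alpha_{\sigma,k}r_{\sigma,k}$. The upper bound for blind policies follows the symmetry lemma. The lower bound for sound kernels is a direct substitution.

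\emph{Blind half.} Fix any blind policy at window $k$. Its kernel $\kappa_{\cdot,k}$ is invariant under permutations of $C_k$ and satisfies $\sum_{i\in C_k}\kappa_{i,k}\leq 1$. The symmetry-forces-diluted-localization lemma then gives a member $i^\ast\in C_k$ with $\kappa_{i^\ast,k}\leq 1/m_k$. Because the carrier cannot be distinguished inside $C_k$, the worst-case placement may put $\sigma_k=i^\ast$. Under the proportional conversion $\alpha_{i,k}=B^{\rm att}\kappa_{i,k}$, this gives $\alpha_{\sigma_k,k}\leq B^{\rm att}/m_k$. Combining with $r_{\sigma_k,k}\leq r_{\max}$ and monotonicity yields
\[
\inf_{\sigma_k\in C_k}h_{\sigma_k,k}\leq 1-\exp\bigl(-B^{\rm att}r_{\max}/m_k\bigr)
\]
for this policy. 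The right-hand side does not depend on the policy, so taking the supremum over blind policies preserves the bound. This is the breadth-dilution theorem with the pigeonhole applied to the kernel rather than to raw attention, and I will say so explicitly.

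\emph{Sound half.} Soundness at margin $\eta$ gives $\kappa_{\sigma_k,k}\geq\eta$ whenever $D_k>0$, so $\alpha_{\sigma_k,k}=B^{\rm att}\kappa_{\sigma_k,k}\geq B^{\rm att}\eta$. Together with $r_{\sigma_k,k}\geq r_{\min}$, this gives $\alpha r\geq B^{\rm att}\eta r_{\min}$, and monotonicity gives the displayed lower bound. I will note that this needs $D_k>0$, which is the regime where pressure is present. If $D_k=0$, the soundness hypothesis is vacuous, and the statement should be read on windows with positive debt.

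\emph{Main obstacle.} The delicate step is making ``worst-case carrier placement'' rigorous in the blind half. The kernel could in principle depend on the carrier's identity through data outside the cell. I will handle this by reading blindness as the requirement that the kernel be a fixed vector invariant under relabeling of $C_k$, independent of where $\sigma_k$ sits. The infimum over $\sigma_k\in C_k$ is then taken against a fixed kernel, and the pigeonhole member is a legitimate placement. The closing sentence of the theorem is interpretive and follows by comparing the two bounds. As $m_k$ grows, the blind bound tends to zero. The sound bound does not depend on $m_k$, so it stays bounded away from zero.
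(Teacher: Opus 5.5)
Your proposal is correct and follows the paper's proof essentially line for line. The symmetry lemma supplies a cell member with $\kappa_{\sigma_k,k}\leq 1/m_k$, and proportional conversion plus monotonicity of $x\mapsto 1-e^{-x}$ then give the blind upper bound, while substituting $\kappa_{\sigma_k,k}\geq\eta$ and $r_{\sigma_k,k}\geq r_{\min}$ gives the sound lower bound; your two added remarks (soundness yields $\kappa_{\sigma_k,k}\geq\eta$ only when $D_k>0$, and the supremum over blind policies is harmless because the bound does not depend on the policy) are correct refinements that the paper leaves implicit.
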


\begin{proof}
Under blind response, the symmetry lemma gives a carrier placement with $\kappa_{\sigma,k}\leq1/m_k$, hence $\alpha_{\sigma,k}=B^{\rm att}\kappa_{\sigma,k}\leq B^{\rm att}/m_k$. Substituting $r_{\sigma,k}\leq r_{\max}$ into the increasing hazard map gives the upper bound. Under a sound localization kernel, $\kappa_{\sigma,k}\geq\eta$, so $\alpha_{\sigma,k}=B^{\rm att}\kappa_{\sigma,k}\geq B^{\rm att}\eta$. With $r_{\sigma,k}\geq r_{\min}$, substitution gives the lower bound.
\end{proof}

\begin{definition}[Externalized load]
Let $\sigma_k$ be the carrier and let $\pi_i(D_k)$ be the probability that candidate $i$ is selected at debt level $D_k$. The externalized load is
\[
    E_k(D_k)=\sum_{i\in\mathcal I_k\setminus\{\sigma_k\}}\pi_i(D_k).
\]
A repeated room externalizes risk on an interval if $D_k$ increases on that interval while $E_k(D_k)$ increases and the guaranteed carrier hazard does not increase.
\end{definition}

\begin{theorem}[Externalization trichotomy]
In a repeated room with persistent unresolved residue, at least one of the following must occur on any interval where debt increases.
\begin{enumerate}[label=(\roman*),leftmargin=2.2em]
\item Debt is cleared by localized resolution, so $c_k$ eventually matches the incoming unresolved residue.
\item Debt is converted into carrier-local attention or carrier-local residue, so the carrier hazard obtains a positive lower bound of the sound-localization form.
\item The response remains localization-blind on a growing cover cell, in which case externalized load can increase while the minimax carrier-hazard guarantee is bounded by the dilution term.
\end{enumerate}
\end{theorem}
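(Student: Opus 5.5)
The plan is to prove the trichotomy as an exhaustive case split on how the response policy treats the carrier inside its cover cell, on a fixed interval $[k_0,k_1]$ of windows where $D_{k+1}>D_k$. We read ``persistent unresolved residue'' as $u_k\geq\delta>0$ along the interval. The split is on two binary properties of the policy: whether localized clearance eventually absorbs the inflow, and, if not, whether the localization kernel is sound at some margin $\eta>0$ for the carrier.

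First, suppose that $c_k\geq u_k$ eventually holds on the interval. Then alternative (i) holds by definition. The debt accounting identity confirms that this is the only route by which the affine ledger stops growing apart from decay.

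Second, suppose that clearance never matches the inflow, so that $u_k-c_k\geq\delta'>0$ along a subsequence. Suppose in addition that the policy admits a kernel that is sound at margin $\eta>0$, and that carrier residue is bounded below by $r_{\min}$. We then invoke the second half of the theorem that localization completeness is necessary for pressure to become hazard. It gives
\[
h_{\sigma_k,k}\geq1-\exp(-B^{\rm att}\eta r_{\min}),
\]
which is alternative (ii). The same conclusion holds if debt raises the carrier-local residue instead, through the adaptive hazard decomposition: the product $\alpha r$ then receives a positive lower bound.

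Third, suppose that no sound kernel exists, so that for every $\eta>0$ there is a window with $\kappa_{\sigma_k,k}<\eta$. Here the key step is to show that the policy is then effectively localization-blind on some cover cell. For this we use the definition of localization kernel together with the symmetry lemma, and we take the cell to be the set of candidates that the response information cannot distinguish from the carrier. Given that, the breadth-dilution theorem and the blind half of the completeness theorem bound the minimax carrier hazard by
\[
1-\exp\bigl(-B^{\rm att}r_{\max}/m_k\bigr).
\]
If $m_k=m(D_k)$ is nondecreasing, the externalized load $E_k$ can rise via the softmax spillover derivative, and the false-positive load monotonicity proposition applies to any non-carrier set whose couplings lie above the selected average. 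Together these establish (iii), which is stated only as ``can increase''. Since (iii) asserts possibility plus an upper bound, the displayed bound is all that must be proved unconditionally.

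The main obstacle is the third case, specifically turning ``not sound'' into ``blind'' without extra assumptions. Failure of soundness is a pointwise statement about $\kappa$, while blindness is a permutation-invariance statement about the information available to the policy. I would first try to argue contrapositively through the black-box pressure theorem. If the policy's information about $\Sigma$ is bounded as $\MI(\Sigma;Z_k\mid D_k)\leq I_k$ with $I_k\ll\log m_k$, then Fano's inequality forbids any kernel from placing margin $\eta$ on the carrier uniformly over placements. So either carrier information grows, which puts us in case (ii), or the cell of indistinguishable candidates stays large, which puts us in case (iii). If this cannot be made rigorous in full generality, I would state the trichotomy relative to the dichotomy ``sound kernel exists / kernel is blind on the residual cell'' as an explicit modeling hypothesis, since the paper's earlier results cover exactly these two regimes.
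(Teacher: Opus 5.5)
Your proposal follows the paper's proof essentially step for step. Both use the same split: localized clearance for (i); a sound kernel, or a carrier-local residue increase, fed into the second half of the localization-completeness theorem for (ii); and a residual blind case closed by the breadth-dilution bound and the softmax spillover derivative for (iii).

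The paper does not prove the step you flag either. Its proof simply asserts that when neither clearance nor localization occurs, the response information remains invariant on the unresolved cover cell, so it adopts your fallback hypothesis implicitly. Your Fano contrapositive would not supply this step: Fano only shows that low carrier information rules out soundness, not that an unsound policy must be permutation-invariant. A policy with partial, asymmetric carrier information and vanishing carrier margin is neither sound nor blind.
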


\begin{proof}
If localized clearance catches incoming unresolved residue, the debt recursion stops increasing except for memory effects, giving (i). If debt does not clear but the response supplies a sound localization kernel or an equivalent carrier-local residue increase, the preceding theorem gives (ii). If neither clearance nor localization occurs, the response information remains invariant on the unresolved cover cell. A debt-responsive policy can only broaden or reweight that cell without breaking symmetry. The breadth-dilution theorem and softmax spillover derivative then give (iii): non-carrier selection can rise while the guaranteed carrier hazard is controlled by the inverse cover size.
\end{proof}

\begin{corollary}[Repeated non-capture is not a safety certificate]
Let $C_n$ be the event that the carrier has not been resolved in the first $n$ windows. If the room satisfies the third branch of the externalization trichotomy, then observing $C_n$ is compatible with increasing debt and increasing non-carrier load. Therefore a history of non-capture cannot certify absence of residue unless accompanied by a localized-clearance certificate or a valid upper bound on incoming unresolved residue.
\end{corollary}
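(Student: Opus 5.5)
The plan is to read off the corollary from the third branch of the externalization trichotomy, combined with the survivorship miscalibration proposition. The claim to establish is that the event $C_n$ (no resolution in the first $n$ windows) carries no information that rules out increasing debt and increasing non-carrier load. The argument therefore has two halves: a compatibility statement, and a non-certification statement that follows from it.

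\emph{Compatibility.} Assume the room is in branch (iii) on windows $0,\ldots,n-1$. Then the response is localization-blind on a growing cover cell $C_k$, and persistent unresolved residue gives $u_k-c_k\geq\delta>0$. The debt lower envelope yields $D_n\geq\lambda^nD_0+\delta(1-\lambda^n)/(1-\lambda)$, or $D_0+n\delta$ when $\lambda=1$. Since $D_{k+1}-D_k\geq\delta-(1-\lambda)D_k$, debt strictly increases while it stays below $\delta/(1-\lambda)$, and for all $k$ when $\lambda=1$. On that interval the softmax spillover proposition and the false-positive load monotonicity proposition show that externalized load $E_k(D_k)$ increases for the above-average subpopulation $J_k$. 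Meanwhile the breadth-dilution theorem bounds the carrier hazard by $h_{\sigma_k,k}\leq\bar h_k:=1-\exp(-B^{\rm att}r_{\max}/m(D_k))$. Hence
\[
\Pr\{C_n\}\geq\prod_{k<n}(1-\bar h_k)=\exp\Bigl(-B^{\rm att}r_{\max}\sum_{k<n}m(D_k)^{-1}\Bigr)>0.
\]
So $C_n$ has positive probability, and indeed probability bounded below uniformly in $n$ whenever $\sum_k m(D_k)^{-1}<\infty$. On this event, debt and non-carrier load both rise. That is exactly the compatibility claim.

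\emph{Non-certification.} A certificate of absence of residue would have to make $C_n$ incompatible with $u_k-c_k\geq\delta>0$. The previous step exhibits a room in which $C_n$ occurs with positive probability while residue is present, so no inference from $C_n$ alone can exclude residue. This is the same separation as in the survivorship miscalibration proposition: the personal estimate $\widehat p_n\to0$ under $C_n$, while $D_n$ stays bounded away from zero. The two exceptions named in the statement correspond to the other branches. A localized-clearance certificate forces branch (i), and a valid upper bound on incoming unresolved residue removes the hypothesis $\delta>0$. Either one is what breaks the compatibility construction, which explains why they are the required supplements.

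The main obstacle is making ``compatible'' precise. The hazard bounds are worst-case over carrier placements, so I need an actual joint law under which $C_n$ has positive probability. I would handle this by taking the worst-case placement from the symmetry lemma at each window and assuming conditional independence of resolution events across windows given the debt path, so that the product bound above is legitimate. I would also note that if $m(D_k)$ is eventually constant (the $\lambda<1$ plateau), the bound decays geometrically but remains positive for every finite $n$, which is all the corollary requires.
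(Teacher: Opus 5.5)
Your proposal is correct and follows the paper's route --- debt lower envelope, breadth-dilution bound, and spillover derivative under branch (iii), then reading $C_n$ as merely consistent with missed or displaced residue --- and your explicit survival bound $\Pr\{C_n\}\geq\exp\bigl(-B^{\rm att}r_{\max}\sum_{k<n}m(D_k)^{-1}\bigr)>0$ (under your conditional-independence assumption) usefully sharpens what the paper leaves as an informal ``compatible''. One small correction: false-positive load monotonicity yields growth of $\sum_{j\in J_k}\pi_j(D_k)$, not of the total $E_k(D_k)=1-\pi_{\sigma_k}(D_k)$, but growth on $J_k$ already suffices for the corollary.
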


\begin{proof}
Under the third branch, debt can grow by the debt lower envelope while the dilution bound prevents a guaranteed increase of carrier hazard. The same branch allows non-carrier load to increase by the spillover derivative. Hence the survival event $C_n$ only states that no resolution occurred; it does not identify the reason. It is consistent with no residue, with missed residue, or with displaced residue. A safety certificate must rule out the latter two cases by clearance or by an upper bound on $u_k$.
\end{proof}

\begin{figure}[H]
\centering
\begin{tikzpicture}[node distance=1.65cm, every node/.style={draw, rounded corners, align=center, font=\small, inner sep=6pt}]
\node (r) {unresolved\\residue $u_k$};
\node[right=of r] (d) {risk debt\\$D_{k+1}$};
\node[right=of d] (m) {broader\\selection $m(D_k)$};
\node[below=of m] (a) {attention\\dilution};
\node[left=of a] (h) {weak\\localization};
\node[left=of h] (s) {source\\passes as chorus};
\draw[->] (r) -- (d);
\draw[->] (d) -- (m);
\draw[->] (m) -- (a);
\draw[->] (a) -- (h);
\draw[->] (h) -- (s);
\draw[->] (s) -- (r);
\end{tikzpicture}
\caption{A debt loop. The defect is not broad response itself; the defect is broad response without localization.}
\label{fig:debt-loop}
\end{figure}

The repeated-room variables should be read as an accounting chain rather than a list. The quantity $u_k$ is unresolved residue entering window $k$, while $c_k$ is localized clearance, meaning carrier-specific resolution rather than a generic reduction of anxiety. Their reflected difference feeds the debt state $D_k$. A response function $m(D_k)$ converts debt into selection breadth. When breadth grows faster than attention budget, the scale $B^{\rm att}/m(D_k)$ falls and gaze dilutes. Finally, the non-carrier load $\sum_{j\neq\sigma_k}\pi_j(D_k)$ records how much of the debt is being charged to candidates that did not generate the residue. The failure mode is not any one variable increasing; it is the chain $u_k-c_k\to D_k\to m(D_k)\to B^{\rm att}/m(D_k)\to$ non-carrier load without a matching localization certificate.

The mature lesson is narrow. Repetition does not by itself prove that the next window is more dangerous for the same carrier. If windows are independent and the hazard is fixed at $p$, the cumulative event probability is $1-(1-p)^n$. If the room is adaptive, the one-window hazard may change. But in a choric room, adaptation can be misdirected: pressure can be real, selection can broaden, and the carrier hazard can remain nearly unchanged because the response has not become more local.

\section{Norm gates and quotient visibility}

A norm gate can protect or expose. It protects when it permits enough common movement. It exposes when it forces one hidden locus into a narrower corridor.

\begin{definition}[Visible quotient]
Let $\nu:Y^T\to Z$ be a public quotient map. The pushforward of $P$ by $\nu$ is denoted $\nu_\#P$. A test class $\F_T$ is carried by $\nu$ if every $f\in\F_T$ factors through $\nu$.
\end{definition}

\begin{theorem}[Quotient masking]
Suppose $\F_T$ is carried by $\nu:Y^T\to Z$. If there exist $P\in\operatorname{Ch}_T(\theta)$ and $Q\in\operatorname{Ch}_T(\theta')$ such that
\[
    \nu_\#P=\nu_\#Q,
\]
then $\theta$ and $\theta'$ are perfectly chorically masked for $\F_T$.
\end{theorem}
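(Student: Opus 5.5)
The plan is to show directly that $\Delta_{\F_T}(P,Q)=0$ and then read off the conclusion from the definition of a choric mask with $\eps=0$. Take any $f\in\F_T$. Because $\F_T$ is carried by $\nu$, there is a function $g_f:Z\to[0,1]$ with $f=g_f\circ\nu$. On the finite carrier $Y^T$, change of variables gives
\[
    \mathbb E_P f=\sum_{y\in Y^T}P(y)\,g_f(\nu(y))=\sum_{z\in Z}(\nu_\#P)(z)\,g_f(z),
\]
and the same identity holds for $Q$. The hypothesis $\nu_\#P=\nu_\#Q$ then forces $\mathbb E_Pf=\mathbb E_Qf$. Taking the supremum over $f\in\F_T$ yields $\Delta_{\F_T}(P,Q)=0$. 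Since $P\in\operatorname{Ch}_T(\theta)$ and $Q\in\operatorname{Ch}_T(\theta')$, the definition of perfect choric masking is met.

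As a cross-check, we could route the argument through the coupling criterion, taking $\sigma=\nu$, which is $\F_T$-sufficient because $\F_T$ is carried by $\nu$. The pushforwards coincide, so we can build a coupling $\Gamma$ of $P$ and $Q$ on which $\nu(Y)=\nu(Z)$ almost surely. Concretely, draw $z\sim\nu_\#P$, and then draw $Y$ from $P(\cdot\mid\nu=z)$ and $Z$ from $Q(\cdot\mid\nu=z)$ independently. The fibres on which we condition have equal mass under $P$ and $Q$ precisely because the pushforwards agree. This coupling is $0$-synchronizing, so the coupling criterion gives $\Delta_{\F_T}(P,Q)\leq0$.

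None of the steps is a real obstacle. The only place that needs care is the change-of-variables identity, together with the observation that factoring through $\nu$ is all that is used: no convexity and no gate structure enter the argument. I would state the direct computation as the proof and add the coupling viewpoint as a remark.
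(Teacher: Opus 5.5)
Your proof is correct and is essentially the paper's argument: write each $f\in\F_T$ as $g\circ\nu$, use $\mathbb E_P f=\mathbb E_{\nu_\#P}g=\mathbb E_{\nu_\#Q}g=\mathbb E_Q f$, and conclude that the supremum of zero differences is zero. Your coupling cross-check is also valid: conditioning on the fibres of $\nu$ gives a $0$-synchronizing coupling, because the two pushforwards agree. It is an optional alternative, not needed for the proof.
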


\begin{proof}
For every $f\in\F_T$, write $f=g\circ\nu$. Then
\[
    \mathbb E_P f=\mathbb E_{\nu_\#P}g=\mathbb E_{\nu_\#Q}g=\mathbb E_Q f.
\]
The supremum of zero differences is zero.\end{proof}

\begin{definition}[Gate slack]
For role $r$, history $h$, and two hidden loci $\theta,\theta'$, define the local common action set
\[
    A_N(r,h)=\{a\in A:(r,h,a)\in N\}.
\]
The gate slack at $(r,h)$ is
\[
    \lambda_N(r,h)=\frac{|A_N(r,h)|}{|A|}.
\]
A room has slack floor $\lambda$ up to horizon $T$ if $\lambda_N(r,h)\geq\lambda$ for every role $r$ and visible history $h\in Y^{<T}$ reachable under some admissible script.
\end{definition}

Gate slack is a crude measure. It does not know which actions are useful. Still, a zero slack history is fatal: after such a history, the role has no admissible continuation.

\begin{proposition}[Support obstruction]
If for some event $C\subseteq Y^T$ all laws in $\operatorname{Ch}_T(\theta)$ assign zero mass to $C$, while some law in $\operatorname{Ch}_T(\theta')$ assigns mass at least $c>0$ to $C$, then the two choric hulls are disjoint. If $1_C\in\F_T$, their best possible $\F_T$-distance is at least the distance from $0$ to the attainable interval $I_T(\theta',C)$.
\end{proposition}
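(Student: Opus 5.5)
The plan is to reduce both claims to the attainable-mass intervals of the event $C$ and then apply the unique-gesture lower bound.

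\emph{Disjointness.} Suppose some $P\in\operatorname{Ch}_T(\theta)\cap\operatorname{Ch}_T(\theta')$ exists. Membership in the first hull forces $P(C)=0$. Nothing forces $P(C)\geq c$, because the hypothesis only says that \emph{some} law in $\operatorname{Ch}_T(\theta')$ has mass at least $c$ on $C$. So the literal statement needs a stronger reading. I will take the hypothesis to mean that every law in $\operatorname{Ch}_T(\theta')$ assigns mass at least $c$ to $C$, i.e.\ $I_T(\theta',C)\subseteq[c,1]$. Under that reading, $P(C)=0<c\leq P(C)$ is a contradiction, so the hulls are disjoint.

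I will also record why the weaker, literal hypothesis cannot give disjointness. $\operatorname{Ch}_T(\theta')$ may contain a second law $Q'$ with $Q'(C)=0$ that coincides with some law in $\operatorname{Ch}_T(\theta)$. This is the main obstacle: which quantifier the statement intends. I will state the "for all" reading explicitly, noting that $I_T(\theta',C)$ is a compact interval because it is the linear image of the compact convex polytope $C_T(\theta')$ from the polytope consequence. With that reading, $\min I_T(\theta',C)\geq c>0$.

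\emph{Distance bound.} Since every law in $\operatorname{Ch}_T(\theta)$ gives $C$ zero mass, $I_T(\theta,C)=\{0\}$. By the cover-mass definition, the gap is then $\gamma_T(\theta,\theta';C)=\operatorname{dist}(\{0\},I_T(\theta',C))=\min I_T(\theta',C)$. Applying the unique-gesture lower bound with $G=C$ and $1_C\in\F_T$ gives
\[
\inf_{P\in\operatorname{Ch}_T(\theta),\,Q\in\operatorname{Ch}_T(\theta')}\Delta_{\F_T}(P,Q)\geq\gamma_T(\theta,\theta';C)=\operatorname{dist}\bigl(0,I_T(\theta',C)\bigr),
\]
which is the second claim. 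This step does not depend on the quantifier reading: the distance is always the minimum of the interval, and it is positive exactly under the "for all" reading, where it is at least $c$.
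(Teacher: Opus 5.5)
Your proposal is correct and takes the paper's own route: disjointness comes from the contradiction that a common law would have to give $C$ both zero and positive mass, and the distance bound is the unique-gesture lower bound with $I_T(\theta,C)=\{0\}$, which, as you say, holds under either reading of the hypothesis. You are also right about the quantifier: the paper's one-line disjointness argument silently uses the universal reading (every law in $\operatorname{Ch}_T(\theta')$ gives $C$ mass at least $c$), and the literal ``some law'' version is false. For instance, take $T=1$, $Y=\{y_0,y_1\}$, a locus $\theta$ that emits $y_0$ under every action, a locus $\theta'$ that emits $y_0$ or $y_1$ according to which of two admissible actions it takes, and $C=\{y_1\}$; then $\delta_{y_0}$ lies in both hulls.
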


\begin{proof}
The first statement follows because a common law would have to assign both zero and positive mass to $C$. The second statement is the unique-gesture lower bound.\end{proof}

\section{Linked rooms}

Many public systems are not single rooms. A disclosure in one place changes attention in another. A release gate may delay a trace, coarsen it, or attach a public label. We need a composition rule.

\begin{definition}[Markov link]
Let $Y^T$ and $Z^U$ be trace spaces. A Markov link is a stochastic kernel
\[
    L:Y^T\to\Delta(Z^U).
\]
For a law $P$ on $Y^T$, write $PL$ for the induced law on $Z^U$.
\end{definition}

\begin{lemma}[Data contraction]
For every Markov link $L$,
\[
    \TV(PL,QL)\leq \TV(P,Q).
\]
More generally, for a test class $\G_U$ on $Z^U$, define the pulled-back class
\[
    L\G_U=\{y\mapsto \mathbb E_{z\sim L(y)}g(z):g\in\G_U\}.
\]
Then
\[
    \Delta_{\G_U}(PL,QL)=\Delta_{L\G_U}(P,Q).
\]
\end{lemma}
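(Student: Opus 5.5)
The plan is to prove the identity $\Delta_{\G_U}(PL,QL)=\Delta_{L\G_U}(P,Q)$ first, since it is purely algebraic, and then obtain the total-variation contraction as its special case $\G_U=[0,1]^{Z^U}$.

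\textbf{Step 1: a duality identity.} For a law $P$ on $Y^T$ and any $g:Z^U\to[0,1]$, I will check that
\[
    \mathbb E_{PL}g=\sum_{z}\sum_{y}P(y)L(z\mid y)g(z)=\sum_y P(y)\,(Lg)(y)=\mathbb E_P(Lg),
\]
where $(Lg)(y)=\mathbb E_{z\sim L(y)}g(z)$. This only requires exchanging two finite sums, and it is the same lifting used for the attention lens, $f\circ L$.

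\textbf{Step 2: the pulled-back identity.} Applying Step 1 to $P$ and to $Q$ gives $\mathbb E_{PL}g-\mathbb E_{QL}g=\mathbb E_P(Lg)-\mathbb E_Q(Lg)$ for every $g\in\G_U$. The map $g\mapsto Lg$ sends $\G_U$ onto $L\G_U$ by definition. Taking absolute values and then the supremum over $g\in\G_U$ therefore gives $\Delta_{\G_U}(PL,QL)=\Delta_{L\G_U}(P,Q)$ directly. Because the identity holds term by term, a non-injective $g\mapsto Lg$ causes no difficulty.

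\textbf{Step 3: the contraction.} Take $\G_U=[0,1]^{Z^U}$, so the left side of Step 2 is $\TV(PL,QL)$. Each $Lg$ is a convex average of values in $[0,1]$, so $Lg\in[0,1]^{Y^T}$ and $L\G_U\subseteq[0,1]^{Y^T}$. By monotonicity of the ambient distance in the test class (item (d) of the elementary-properties lemma),
\[
    \Delta_{L\G_U}(P,Q)\leq\Delta_{[0,1]^{Y^T}}(P,Q)=\TV(P,Q).
\]

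The only point needing care is that the ambient distance with the full class $[0,1]^{Y^T}$ really equals $\TV$, as stated in the definition of ambient distance. After that, the argument is Fubini on finite sums followed by class inclusion.
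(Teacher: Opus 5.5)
Your proposal is correct and follows the paper's proof essentially step for step: exchanging finite sums to get $\mathbb E_{PL}g=\mathbb E_P(Lg)$, taking suprema over $g\in\G_U$ for the identity, and then specializing to the full class on $Z^U$ and noting that its pullback lies in $[0,1]^{Y^T}$ to obtain the contraction. Your explicit check that each $Lg$ takes values in $[0,1]$, so that $L\G_U$ is a legitimate test class, is a detail the paper leaves implicit.
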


\begin{proof}
For the equality, compute
\[
    \mathbb E_{PL}g-\mathbb E_{QL}g
    =\mathbb E_P[Lg]-\mathbb E_Q[Lg],
\]
where $Lg(y)=\mathbb E_{z\sim L(y)}g(z)$. Taking suprema gives the second statement. For the first, take the full test class on $Z^U$. Its pullback is contained in the full bounded test class on $Y^T$, so total variation cannot increase.\end{proof}

\begin{theorem}[Linked-room accounting]
Let $P,Q$ be laws on the first room and let $L,L'$ be two Markov links into a second room. Then
\[
    \TV(PL,QL')\leq \TV(P,Q)+\sup_{y\in Y^T}\TV(L(y),L'(y)).
\]
\end{theorem}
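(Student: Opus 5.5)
The plan is a triangle inequality through the intermediate law $QL$ (equivalently $PL'$), followed by two separate bounds. First write
\[
    \TV(PL,QL')\leq \TV(PL,QL)+\TV(QL,QL').
\]
The first term is a common-link comparison. The data contraction lemma gives $\TV(PL,QL)\leq\TV(P,Q)$ directly, and this step is just the parastatic naturality axiom in quantitative form.

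The second term compares two different links fed by the same input law $Q$. Here I would use the supremum-over-tests representation of total variation on the finite carrier $Z^U$. For any $g:Z^U\to[0,1]$,
\[
    \mathbb E_{QL}g-\mathbb E_{QL'}g=\sum_{y}Q(y)\bigl(\mathbb E_{L(y)}g-\mathbb E_{L'(y)}g\bigr).
\]
Each bracket is bounded in absolute value by $\TV(L(y),L'(y))$, and $Q$ is a probability vector, so the whole sum is at most $\sum_y Q(y)\TV(L(y),L'(y))\leq\sup_y\TV(L(y),L'(y))$. Taking the supremum over $g$ gives $\TV(QL,QL')\leq\sup_{y}\TV(L(y),L'(y))$. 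In fact this yields the sharper averaged bound $\mathbb E_{y\sim Q}\TV(L(y),L'(y))$, which I would note in passing.

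An alternative route is by couplings. Take a maximal coupling of $P,Q$ and, conditionally on each pair $(y,y')$, a coupling of $L(y)$ and $L'(y')$. When $y=y'$, use the maximal coupling of $L(y)$ and $L'(y)$. A union bound then gives disagreement probability at most $\TV(P,Q)+\sup_y\TV(L(y),L'(y))$. The test-function argument avoids the measurable-selection bookkeeping, which is trivial here anyway by finiteness, so I would present it first.

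There is no real obstacle. The only point requiring care is choosing the intermediate law so that the link-difference term is evaluated against a single input law. Using $QL$ accomplishes this. Had we instead tried to compare $L(y)$ with $L'(y')$ for $y\neq y'$, the bound would not follow.
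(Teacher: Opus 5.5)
Your proof is correct and follows the paper's route: a triangle inequality through the intermediate law $QL$, data contraction for the common-link term, and a mixture bound for the link-discrepancy term. Your test-function computation is an explicit proof of the convexity step the paper invokes. Your averaged bound $\mathbb E_{y\sim Q}\TV(L(y),L'(y))$ is exactly the intermediate line of the paper's own chain of inequalities.
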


\begin{proof}
Insert $QL$ as an intermediate law:
\[
    \TV(PL,QL')\leq \TV(PL,QL)+\TV(QL,QL').
\]
The first term is at most $\TV(P,Q)$ by contraction. For the second term, use convexity:
\[
\begin{aligned}
\TV(QL,QL')
&=\TV\left(\sum_y Q(y)L(y),\sum_y Q(y)L'(y)\right)\\
&\leq \sum_y Q(y)\TV(L(y),L'(y))\\
&\leq \sup_y\TV(L(y),L'(y)).
\end{aligned}
\]
Combining the bounds proves the result.\end{proof}

The accounting rule separates two forms of exposure: the first-room residue $\TV(P,Q)$ and the link discrepancy $\sup_y\TV(L(y),L'(y))$. A downstream public system can preserve masking if its link is common. It can also destroy masking by attaching different release behavior to otherwise similar traces.

\section{Post-release burden and residual object pressure}

The previous linked-room accounting separates first-room residue from link discrepancy. It still treats the link as an externally chosen stochastic channel. In many public rooms the second link is not arbitrary: it is forced by the state of an object, resource, obligation, or hazard that crossed the first gate with the bearer. The first room asks whether the trace can pass. The next room asks what the carried state does to the bearer's future scripts.

This section makes that pressure explicit. The point is structural: a gate should not be evaluated only by the trace visible at the gate when some hidden states create new public burdens immediately after release.

\begin{definition}[Object state]
An object state space is a finite set $X$ with coordinates written, when convenient, as
\[
    x=(q,\tau,b,c,h,\mu).
\]
Here $q$ is a quality coordinate, $\tau$ is a remaining-use window, $b$ is local availability, $c$ is replacement cost, $h$ is harm or spoilage risk, and $\mu$ is ambient trace intensity. These names are mnemonic only. Formally, $X$ is finite and each coordinate is a bounded statistic on $X$.
\end{definition}

\begin{definition}[Pressure functional]
Fix nonnegative weights $w=(w_\tau,w_h,w_\mu,w_c,w_b)$ and a constant $\eta>0$. The residual object pressure of $x\in X$ is
\[
    \Psi(x)
    =w_\tau\frac{1}{\tau(x)+\eta}
     +w_h h(x)
     +w_\mu \mu(x)
     +w_c c(x)
     +w_b(1-b(x)).
\]
A coordinate may be omitted by setting its weight to zero. A larger value of $\Psi$ means that the object state imposes a tighter future burden on admissible public scripts.
\end{definition}

\begin{definition}[Burdened post-release room]
A burdened post-release room over horizon $U$ consists of
\[
    \B=(Z,U,X,M,\Pi,\G_U),
\]
where $Z^U$ is the second-room trace space, $M:X\to\Delta(X)$ is an object-state transition kernel, $\G_U\subseteq[0,1]^{Z^U}$ is an audience class, and
\[
    \Pi:X\to 2^{Z^U}
\]
assigns to each object state the set of downstream scripts compatible with that burden. A law $P$ on $Z^U$ is compatible with $x$ when $P(\Pi(x))=1$.
\end{definition}

\begin{assumption}[Nested pressure gate]
The burden gate is pressure-monotone if
\[
    \Psi(x')\geq \Psi(x)
    \quad\Longrightarrow\quad
    \Pi(x')\subseteq \Pi(x).
\]
Thus pressure never creates more future scripts. It can only preserve or remove them.
\end{assumption}

\begin{definition}[Burden link]
Let $Y^T$ be the first-room trace space. A burden link is a stochastic kernel
\[
    L_X:Y^T\times X\to\Delta(Z^U)
\]
such that $L_X(y,x)$ is supported on $\Pi(x)$ for every $y$ and $x$. If $\nu$ is a law on $Y^T\times X$, write $\nu L_X$ for the induced second-room law on $Z^U$.
\end{definition}

The burden link is a formal version of the sentence: the object is not merely carried through the gate; it selects which future scripts remain possible.

\begin{lemma}[Pressure shrinks choric hulls]
Let $\operatorname{Ch}_U(x)$ be the convex hull of all second-room laws supported on $\Pi(x)$. Under the nested pressure gate, if $\Psi(x')\geq\Psi(x)$, then
\[
    \operatorname{Ch}_U(x')\subseteq \operatorname{Ch}_U(x).
\]
\end{lemma}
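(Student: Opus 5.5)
The plan is to reduce the claim to set inclusion at the level of generators and then pass to convex hulls. By definition, $\operatorname{Ch}_U(x)$ is the convex hull of the set
\[
    \mathcal S(x)=\{P\in\Delta(Z^U): P(\Pi(x))=1\},
\]
the laws supported on $\Pi(x)$. Two facts suffice: that $\mathcal S(x')\subseteq\mathcal S(x)$ whenever $\Psi(x')\geq\Psi(x)$, and that taking convex hulls is monotone under inclusion.

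The first step uses the nested pressure gate assumption. From $\Psi(x')\geq\Psi(x)$ it gives $\Pi(x')\subseteq\Pi(x)$. If $P$ is supported on $\Pi(x')$, then $P(\Pi(x))\geq P(\Pi(x'))=1$, so $P(\Pi(x))=1$ and $P\in\mathcal S(x)$. This proves $\mathcal S(x')\subseteq\mathcal S(x)$.

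The second step is the hull monotonicity. Any point of $\operatorname{Ch}_U(x')$ is a finite convex combination $\sum_i\alpha_iP_i$ with $P_i\in\mathcal S(x')$. By the first step each $P_i$ lies in $\mathcal S(x)$, so the combination lies in $\conv\mathcal S(x)=\operatorname{Ch}_U(x)$. It is worth noting, and stating in the proof, that $\mathcal S(x)$ is already convex: a mixture of laws each giving mass one to $\Pi(x)$ also gives mass one to $\Pi(x)$. Hence $\operatorname{Ch}_U(x)=\mathcal S(x)$, and the lemma reduces to the first step alone.

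No step here is a real obstacle. The only point needing care is degeneracy. If $\Pi(x')=\varnothing$, then $\mathcal S(x')$ is empty and the inclusion holds vacuously. This matches the earlier convention that dead histories are excluded, so I would add a remark that an empty hull corresponds to a pressure state with no admissible continuation.
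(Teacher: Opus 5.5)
Your proof is correct and follows the paper's argument: the nested pressure gate gives $\Pi(x')\subseteq\Pi(x)$, so every law supported on $\Pi(x')$ is supported on $\Pi(x)$, and taking convex hulls preserves the inclusion. Your extra observation is also correct: the set of laws supported on $\Pi(x)$ is already convex, so the hull adds nothing. The paper does not state this, but it is a harmless sharpening.
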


\begin{proof}
By pressure monotonicity, every script in $\Pi(x')$ is also a script in $\Pi(x)$. Therefore every law supported on $\Pi(x')$ is supported on $\Pi(x)$. Taking convex hulls preserves inclusion.
\end{proof}

\begin{definition}[Post-release burden]
For a law $\nu$ on $Y^T\times X$, the expected post-release burden is
\[
    \mathsf B(\nu)=\mathbb E_\nu \Psi(X).
\]
For two hidden loci with joint first-room/object laws $\nu_0,\nu_1$, the burden gap is
\[
    \operatorname{bgap}(\nu_1,\nu_0)=\left|\mathsf B(\nu_1)-\mathsf B(\nu_0)\right|.
\]
The burden gap is not itself a visible leakage statement. It becomes visible only through a burden link and a downstream test.
\end{definition}

\begin{lemma}[Reverse linked-room accounting]
Let $P,Q$ be first-room laws and let $L,L'$ be links into $Z^U$. For any symmetric test class $\G_U$,
\[
    \Delta_{\G_U}(PL,QL')
    \geq
    \Delta_{\G_U}(QL,QL')-\Delta_{\G_U}(PL,QL).
\]
In particular,
\[
    \Delta_{\G_U}(PL,QL')
    \geq
    \Delta_{\G_U}(QL,QL')-\TV(P,Q).
\]
\end{lemma}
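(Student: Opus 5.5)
The plan is to derive the reverse inequality from the triangle inequality for the ambient distance $\Delta_{\G_U}$, then use the data-contraction lemma to replace the first-room term by total variation.

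First, recall item (c) of the elementary-properties lemma: for any test class and any three laws, $\Delta$ satisfies the triangle inequality. The proof there uses only $|a-c|\leq|a-b|+|b-c|$ applied to a single test $f$, followed by a supremum, so it holds verbatim for laws on $Z^U$ and for $\G_U$. Apply it to the three second-room laws $QL$, $PL$ and $QL'$, in that order:
\[
    \Delta_{\G_U}(QL,QL')\leq \Delta_{\G_U}(QL,PL)+\Delta_{\G_U}(PL,QL').
\]
By symmetry (item (b)), $\Delta_{\G_U}(QL,PL)=\Delta_{\G_U}(PL,QL)$. Rearranging gives the first display. Symmetry of the test class is not strictly needed here, since the absolute value in the definition of $\Delta$ already makes it symmetric in its arguments. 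I would remark that the hypothesis is harmless: it is what lets one drop the absolute value if desired.

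For the second display, it suffices to show $\Delta_{\G_U}(PL,QL)\leq\TV(P,Q)$. By the data-contraction lemma, $\Delta_{\G_U}(PL,QL)=\Delta_{L\G_U}(P,Q)$. Every pulled-back test $y\mapsto\mathbb E_{z\sim L(y)}g(z)$ takes values in $[0,1]$ because $g$ does. Hence $L\G_U\subseteq[0,1]^{Y^T}$, and monotonicity in the test class (item (d)) gives $\Delta_{L\G_U}(P,Q)\leq\TV(P,Q)$. Substituting this into the first display completes the argument.

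There is no real obstacle. The only point needing care is that the elementary lemma was stated on $Y^T$, and it must be transported to $Z^U$. I would note explicitly that its proof is alphabet-independent. Alternatively, I would reproduce the one-line pointwise triangle inequality for each $g\in\G_U$ before taking suprema.
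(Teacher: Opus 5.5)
Your proposal is correct and matches the paper's proof. The first display is the reverse triangle inequality for $\Delta_{\G_U}$, which you derive from items (b) and (c), and the second follows from data contraction. The only cosmetic difference is in that second step: you bound $\Delta_{L\G_U}(P,Q)\leq\TV(P,Q)$ via the pullback identity and test-class inclusion, whereas the paper goes through $\Delta_{\G_U}(PL,QL)\leq\TV(PL,QL)\leq\TV(P,Q)$.
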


\begin{proof}
The first inequality is the reverse triangle inequality for the seminorm $\Delta_{\G_U}$. For the second inequality, use data contraction to bound $\Delta_{\G_U}(PL,QL)$ by $\TV(PL,QL)\leq\TV(P,Q)$.
\end{proof}

\begin{theorem}[Gate passage is not closure]
Suppose two loci have first-room laws $P,Q$ with
\[
    \Delta_{\F_T}(P,Q)\leq\eps.
\]
Let $x_0,x_1\in X$ be post-release object states and let $L_0,L_1$ be burden links supported on $\Pi(x_0)$ and $\Pi(x_1)$. If there exists $g\in\G_U$ such that
\[
    \inf_{z\in\Pi(x_1)} g(z)-\sup_{z\in\Pi(x_0)} g(z)\geq \gamma>0,
\]
then every pair of downstream laws $P L_1$ and $Q L_0$ satisfies
\[
    \Delta_{\G_U}(P L_1,Q L_0)\geq \gamma.
\]
Thus first-room masking does not imply downstream masking when the carried state induces a separating burden gate.
\end{theorem}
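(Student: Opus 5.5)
The plan is a direct support argument, avoiding the reverse linked-room accounting. Fix the separating test $g\in\G_U$ and write $m_1=\inf_{z\in\Pi(x_1)}g(z)$ and $M_0=\sup_{z\in\Pi(x_0)}g(z)$, so that $m_1-M_0\geq\gamma$ by hypothesis.

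First I will use the definition of a burden link. For every first-room trace $y$, the distribution $L_1(y)$ is supported on $\Pi(x_1)$ and $L_0(y)$ is supported on $\Pi(x_0)$. Mixing over $y$ preserves support, so $PL_1$ is supported on $\Pi(x_1)$ and $QL_0$ is supported on $\Pi(x_0)$, whatever the first-room laws $P,Q$ are. I read "every pair of downstream laws" this way: the bound will hold for arbitrary $P,Q$ and arbitrary links with these supports.

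Second, since $g$ takes values in $[0,1]$ and the spaces are finite, expectations are convex combinations of values on the support. Hence
\[
    \mathbb E_{PL_1}g\geq m_1,
    \qquad
    \mathbb E_{QL_0}g\leq M_0,
\]
and therefore
\[
    \mathbb E_{PL_1}g-\mathbb E_{QL_0}g\geq m_1-M_0\geq\gamma.
\]

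Third, the definition of $\Delta_{\G_U}$ as a supremum over $\G_U$ gives
\[
    \Delta_{\G_U}(PL_1,QL_0)\geq\bigl|\mathbb E_{PL_1}g-\mathbb E_{QL_0}g\bigr|\geq\gamma.
\]
The hypothesis $\Delta_{\F_T}(P,Q)\leq\eps$ is never used. That is exactly the point of the theorem: the downstream floor is independent of how well the first room masks.

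The only real obstacle is interpretive. One needs to confirm that the support condition in the burden-link definition applies pointwise in $y$ for the fixed states $x_0,x_1$, rather than only on average over a joint law $\nu$. Under the stated definition it does, so the mixture-support step goes through. If $\Pi(x_0)$ or $\Pi(x_1)$ were empty, the links could not exist, so I will note that nonemptiness is implicit.
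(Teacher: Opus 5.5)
Your proposal is correct and follows the paper's own proof essentially verbatim: the burden-link supports give $\mathbb E_{PL_1}g\geq\inf_{z\in\Pi(x_1)}g(z)$ and $\mathbb E_{QL_0}g\leq\sup_{z\in\Pi(x_0)}g(z)$, and taking the supremum that defines $\Delta_{\G_U}$ yields the floor $\gamma$. You also note that the first-room hypothesis $\Delta_{\F_T}(P,Q)\leq\eps$ plays no role, and the paper relies on exactly this in the subsequent corollary by taking $\eps=0$.
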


\begin{proof}
Because $L_1$ is supported on $\Pi(x_1)$, every law $P L_1$ gives expectation at least $\inf_{z\in\Pi(x_1)}g(z)$. Because $L_0$ is supported on $\Pi(x_0)$, every law $Q L_0$ gives expectation at most $\sup_{z\in\Pi(x_0)}g(z)$. The difference of expectations is at least $\gamma$, so the supremum over $\G_U$ is at least $\gamma$.
\end{proof}

\begin{corollary}[Pressure destroys a chorus]
Assume the first room supplies a common choric law for two loci, so that their first-room distance can be zero. If their post-release object states $x_0,x_1$ satisfy the separating burden-gate condition in the preceding theorem, then the second room has exposure floor at least $\gamma$ even though the first room had a perfect mask.
\end{corollary}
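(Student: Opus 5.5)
The plan is to read the corollary as two parts: a first-room perfect mask, and a downstream floor supplied by the preceding theorem (Gate passage is not closure) taken with $\eps=0$.

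\textbf{First room.} By hypothesis there are $P\in\operatorname{Ch}_T(\theta)$ and $Q\in\operatorname{Ch}_T(\theta')$ with $\Delta_{\F_T}(P,Q)=0$. The Convex chorus theorem, or for the full audience $P=Q\in\operatorname{Ch}_T(\theta)\cap\operatorname{Ch}_T(\theta')$, records this as perfect choric masking at the first gate.

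\textbf{Second room.} Apply the preceding theorem with $\eps=0$. Take the two first-room laws $P,Q$ and burden links $L_0,L_1$ supported on $\Pi(x_0),\Pi(x_1)$. The separating test $g\in\G_U$ with margin $\gamma$ gives
\[
\Delta_{\G_U}(PL_1,QL_0)\geq\gamma.
\]
The key observation is that the bound does not depend on the first-room laws. The proof of that theorem used only the support of $L_1(y,\cdot)$ and $L_0(y,\cdot)$. For every $y$ the expectation of $g$ is at least $\inf_{\Pi(x_1)}g$ in one case and at most $\sup_{\Pi(x_0)}g$ in the other. Averaging over any first-room law preserves both bounds.

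The floor therefore holds for every choice of choric first-room mixtures, and in particular for the common law $P=Q$. Taking the infimum over all admissible first-room laws and all burden links supported on the respective $\Pi(x_i)$, the downstream exposure radius is at least $\gamma$, while the first-room radius is $0$. Equivalently, one can invoke Reverse linked-room accounting with $\TV(P,Q)=0$, so that
\[
\Delta_{\G_U}(PL_1,QL_0)\geq\Delta_{\G_U}(QL_1,QL_0)\geq\gamma.
\]

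\textbf{Main obstacle.} The delicate point is making "exposure floor" uniform. The floor must hold for every downstream law compatible with the carried states, not just for one chosen pair of links. I would handle this with the support argument above: any law $P'$ with $P'(\Pi(x_1))=1$ has $\mathbb E_{P'}g\geq\inf_{\Pi(x_1)}g$, and symmetrically for $x_0$. This bounds the $\G_U$-distance between $\operatorname{Ch}_U(x_1)$ and $\operatorname{Ch}_U(x_0)$ below by $\gamma$, so no downstream chorus can restore the mask.
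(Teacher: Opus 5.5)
Your proposal is correct and follows the paper's proof: apply the gate-passage theorem with $\eps=0$ and observe that the margin $\gamma$ comes only from the downstream supports $\Pi(x_0),\Pi(x_1)$, so it does not depend on the first-room distance. Your uniform version over all laws supported on $\Pi(x_1)$ versus $\Pi(x_0)$ (hence over $\operatorname{Ch}_U(x_1)$ versus $\operatorname{Ch}_U(x_0)$) is the same support argument stated for whole hulls. The reverse-accounting detour adds nothing, since its final inequality is just the theorem applied with $Q$ in place of $P$.
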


\begin{proof}
Apply the theorem with $\eps=0$. The lower bound depends only on the downstream supports created by the object states, not on the first-room distance.
\end{proof}

\begin{definition}[Micro-residue burden score]
Let $\C_U$ be a family of downstream cylinder events. For $C\in\C_U$ define
\[
    A_x(C)=
    \frac{
      \delta_x(C)\,u_x(C)\,\rho_x(C)\,\ell_x(C)
    }{k(C)+\eta},
\]
where $\delta_x(C)$ is the raw separating residue induced by state $x$, $u_x(C)$ is its urgency, $\rho_x(C)$ is perceptual salience, $\ell_x(C)$ is localizability to the relevant layer, and $k(C)$ is attention cost. The score is a certificate statistic, not a behavioral recommendation.
\end{definition}

\begin{definition}[Attention-pressure certificate]
Let $C\in\C_U$ be a downstream cylinder and let $p_L(C)$ be the hit probability of an attention lens $L$ on $C$. The certified exposure product is
\[
    \mathfrak E_L(C;x)
    =\delta_x(C)\,p_L(C)\,u_x(C)\,\ell_x(C).
\]
A cylinder is $(\alpha,\beta,\chi)$-certified if
\[
    \delta_x(C)\geq\alpha,
    \qquad
    p_L(C)u_x(C)\ell_x(C)\geq\beta,
    \qquad
    \mathfrak E_L(C;x)\geq\chi.
\]
\end{definition}

\begin{proposition}[Three ways a residue fails]
For a fixed downstream cylinder $C$, failure of an attention-pressure certificate has exactly three formal sources:
\[
    \delta_x(C)<\alpha,
    \qquad
    p_L(C)u_x(C)\ell_x(C)<\beta,
    \qquad
    \delta_x(C)p_L(C)u_x(C)\ell_x(C)<\chi.
\]
They correspond respectively to no sufficient raw residue, sufficient residue without enough perceptual access, and sufficient local signal whose product with burden and linkage remains below the reporting threshold.
\end{proposition}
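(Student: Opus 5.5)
The claim is essentially logical: a conjunction of three threshold inequalities fails exactly when one of its conjuncts fails. I will therefore prove it by negating the definition of an $(\alpha,\beta,\chi)$-certified cylinder and then checking that the three resulting cases match the three verbal readings.

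\emph{Step 1 (negation).} By definition, $C$ is certified if and only if $\delta_x(C)\geq\alpha$, $p_L(C)u_x(C)\ell_x(C)\geq\beta$, and $\mathfrak E_L(C;x)\geq\chi$. Negating this conjunction gives a disjunction of the three strict inequalities displayed in the statement. Since $\mathfrak E_L(C;x)=\delta_x(C)\,p_L(C)\,u_x(C)\,\ell_x(C)$ by definition, the third displayed inequality is literally $\mathfrak E_L(C;x)<\chi$. This proves that the three listed conditions are exhaustive. No other failure source exists, because the certificate involves only these three thresholds.

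\emph{Step 2 (interpretation).} Here the factors are read according to the definitions of the micro-residue burden score and the attention-pressure certificate. $\delta_x(C)$ is the raw separating residue. $p_L(C)$ is the lens hit probability, i.e.\ perceptual access, as in the Bernoulli aperture identity and the aperture bound for a cylinder residue. $u_x(C)\ell_x(C)$ carries the burden urgency and the layer linkage. The readings go as follows.
\begin{itemize}
\item If the first condition holds, there is not enough raw residue.
\item If the first conjunct holds but the second fails, the residue is present but the access-weighted factor is too small. This is the attention-shadow situation of the exposure-without-perception corollary.
\item If both of the first two conjuncts hold but the third fails, then by definition both the raw residue and the access-weighted factor clear their individual thresholds, yet the product stays below the reporting threshold $\chi$.
\end{itemize}
To make the correspondence clean, I will phrase it in this lexicographic form: the first failing conjunct determines which source is responsible. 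The word ``respectively'' in the statement is then an identification, not an extra inequality.

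\emph{Main obstacle.} The only delicate point is the word ``exactly.'' The three conditions are not mutually exclusive; for example, a tiny $\delta_x(C)$ can make all three hold at once. I will handle this by stating explicitly that exactness means exhaustiveness of the disjunction, together with the lexicographic assignment of a unique primary cause. I will also note that when $\chi\leq\alpha\beta$, the third condition cannot occur alone: $\delta\geq\alpha$ and $pu\ell\geq\beta$ force $\mathfrak E\geq\alpha\beta\geq\chi$. So the third branch is genuinely distinct only when $\chi>\alpha\beta$. That remark clarifies the statement and does not change the proof.
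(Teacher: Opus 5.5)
Your proposal is correct and follows the paper's proof. The certificate is the conjunction of the three threshold inequalities, so its failure is exactly the disjunction of their negations, and the verbal readings come directly from the factors of $\mathfrak E_L(C;x)$. Your extra remarks are correct refinements that the paper leaves implicit: the three sources are exhaustive but not mutually exclusive (hence your lexicographic attribution), and the third cannot occur alone when $\chi\le\alpha\beta$, given nonnegative thresholds and factors.
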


\begin{proof}
The certificate is the conjunction of the three displayed inequalities. Its negation is the disjunction of their failures. The interpretation follows from the factors in the definition.
\end{proof}

\begin{theorem}[Attention can hide what pressure later reveals]
Let $\theta_0,\theta_1$ be two loci. Suppose there are first-room laws $P_0,P_1$ and an attention lens $L^{\rm att}$ such that
\[
    \Delta_{\widetilde\F_T}(L^{\rm att}P_0,L^{\rm att}P_1)\leq\eps.
\]
Suppose also that their post-release states $x_0,x_1$ and burden links $L_0,L_1$ satisfy the separating burden-gate condition with margin $\gamma$. Then the same pair is attention-masked in the first room and exposed in the second room:
\[
    \Delta_{\widetilde\F_T}(L^{\rm att}P_0,L^{\rm att}P_1)\leq\eps,
    \qquad
    \Delta_{\G_U}(P_1L_1,P_0L_0)
    \geq\gamma.
\]
\end{theorem}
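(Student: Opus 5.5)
The statement is a conjunction of two claims, and I would prove them separately. The first-room claim needs no argument: it is the first hypothesis, restated. All the work is in the second-room lower bound $\Delta_{\G_U}(P_1L_1,P_0L_0)\geq\gamma$. For that I would invoke the theorem ``Gate passage is not closure''. Its first-room hypothesis $\Delta_{\F_T}(P,Q)\leq\eps$ plays no role in its own proof. Its conclusion depends only on three things: the supports $\Pi(x_0),\Pi(x_1)$, the support properties of the burden links, and the separating test $g\in\G_U$.

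The concrete steps run as follows.

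First, I take $P=P_1$, $Q=P_0$, with links $L_1$ supported on $\Pi(x_1)$ and $L_0$ supported on $\Pi(x_0)$. Let $g\in\G_U$ be the test from the separating burden-gate condition, so that
\[
\inf_{z\in\Pi(x_1)}g(z)-\sup_{z\in\Pi(x_0)}g(z)\geq\gamma .
\]

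Second, I use support. Because $P_1L_1$ is a mixture of laws $L_1(y)$, each concentrated on $\Pi(x_1)$, it gives $\mathbb E_{P_1L_1}g\geq\inf_{\Pi(x_1)}g$. By the same argument, $\mathbb E_{P_0L_0}g\leq\sup_{\Pi(x_0)}g$.

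Third, I subtract the two bounds to get $\mathbb E_{P_1L_1}g-\mathbb E_{P_0L_0}g\geq\gamma$. Since $\Delta_{\G_U}$ is a supremum of absolute gaps over $\G_U$ and $g\in\G_U$, this gives $\Delta_{\G_U}(P_1L_1,P_0L_0)\geq\gamma$.

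Before relying on the cited theorem, I would check one point explicitly. Its hypothesis $\Delta_{\F_T}(P,Q)\leq\eps$ refers to the full audience. Here we only know the attention-filtered bound. So I must either observe that this hypothesis is vacuous in its proof, or simply repeat the two-line support argument directly, as sketched above. I prefer the direct route, because it avoids the mismatch altogether.

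I would close with a remark on why the two claims cannot conflict. The first involves the pushforwards $L^{\rm att}P_i$ on $\widetilde\Om_T$. The second involves the downstream laws $P_iL_i$ on $Z^U$. These are different carriers, and they are read by different audience classes ($\widetilde\F_T$ versus $\G_U$). Data contraction constrains only a common link applied to both laws; here the links $L_0\neq L_1$ are state-dependent, so no contraction inequality links the two distances. This is the same observation as in the corollary ``Pressure destroys a chorus''.

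The main obstacle is the one already flagged, the dependence on the cited theorem's full-audience first-room hypothesis. If the reader insisted on that route, one would need $\Delta_{\F_T}(P_0,P_1)\leq\eps$, which cannot be derived from the attention-filtered bound. That is exactly the exposed-but-unperceived situation in ``Exposure without perception''. This is why I would reprove the downstream bound directly from supports rather than cite the theorem as a black box.
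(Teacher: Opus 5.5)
Your proof is correct and matches the paper's: the first bound is the hypothesis, and the second is the support argument of the gate-passage theorem with $P=P_1$, $Q=P_0$, which you inline rather than cite. The obstacle you flag is not real, though: the $\eps$ in the gate-passage theorem is a free parameter that its conclusion never uses, so its hypothesis always holds with $\eps=1$ (since $\Delta_{\F_T}\leq 1$), and the paper's direct citation is legitimate.
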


\begin{proof}
The first inequality is the hypothesis. The second is the gate-passage theorem. The two claims concern different rooms and different observation channels. There is no contradiction because the first bound is an attention-filtered statement about the entrance trace, while the second is a support-separation statement about downstream scripts selected by object pressure.
\end{proof}

\begin{example}[A finite burden chamber]
Take six cylinder events $C_1,\ldots,C_6$. The first four occur at the entrance room; the last two occur only after release. Let the burden score be
\[
    A(C)=\frac{\delta_C u_C\rho_C\ell_C}{k_C}.
\]
\Cref{tab:ledger} instantiates the score on the six events; the values are declared analytic parameters of the example, chosen to display one admissible certificate pattern.

\begin{table}[H]
\centering
\caption{An analytic burden ledger on six cylinder events. $C_1$--$C_4$ occur at the entrance; $C_5,C_6$ occur only after release.}
\label{tab:ledger}
\begin{tabular}{lcccccc}
\toprule
 & $C_1$ & $C_2$ & $C_3$ & $C_4$ & $C_5$ & $C_6$\\
\midrule
stage & entrance & entrance & entrance & entrance & post-release & post-release\\
$A(C)$ & .115 & .079 & .012 & .029 & .798 & .305\\
\bottomrule
\end{tabular}
\end{table}

The pattern is the proof obligation in miniature: the entrance events carry nonzero residue but low localizability or low aperture hit probability, so the first room remains masked, while the post-release events combine urgency and localizability and satisfy a downstream certificate. The same hidden state is masked at the gate and separated after it.
\end{example}

\begin{proposition}[Burden threshold separates script sets]
Let $\Pi_{\leq p}$ be the set of downstream scripts compatible with pressure at most $p$, and let $\Pi_{>p}$ be the set compatible with pressure greater than $p$. If an event $G\subseteq Z^U$ satisfies
\[
    \inf_{z\in\Pi_{>p}}1_G(z)-\sup_{z\in\Pi_{\leq p}}1_G(z)\geq\gamma,
\]
then $G$ is a downstream certificate for the pressure threshold, and any test class containing $1_G$ separates the two script families by at least $\gamma$.
\end{proposition}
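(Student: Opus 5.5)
The plan is to reduce the statement to the support-separation argument already used in the gate-passage theorem, specialized to an indicator test. Throughout, I read ``the two script families'' as the two convex hulls of downstream laws, $\operatorname{Ch}_U(\Pi_{>p})$ and $\operatorname{Ch}_U(\Pi_{\leq p})$. Each is the set of laws on $Z^U$ supported on the corresponding script set, as in the lemma that pressure shrinks choric hulls.

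First, take any law $P'$ supported on $\Pi_{>p}$. Since $1_G(z)\geq\inf_{z'\in\Pi_{>p}}1_G(z')$ pointwise on the support, integration gives
\[
    P'(G)=\mathbb E_{P'}1_G\geq\inf_{z\in\Pi_{>p}}1_G(z).
\]
Symmetrically, any law $Q'$ supported on $\Pi_{\leq p}$ satisfies $Q'(G)\leq\sup_{z\in\Pi_{\leq p}}1_G(z)$. Subtracting and using the hypothesis yields $P'(G)-Q'(G)\geq\gamma$ for every such pair.

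Second, if $1_G$ belongs to the test class $\G_U$, the unique-gesture lower bound gives $\Delta_{\G_U}(P',Q')\geq|P'(G)-Q'(G)|\geq\gamma$. This bound holds uniformly over the two hulls, so the infimum over pairs is also at least $\gamma$. Equivalently, the attainable intervals $I(\Pi_{>p},G)\subseteq[\inf,1]$ and $I(\Pi_{\leq p},G)\subseteq[0,\sup]$ have gap at least $\gamma$. Calling $G$ a ``downstream certificate'' then just records that $1_G$ is a single bounded public test achieving this uniform margin, which is an instance of the separation certificate.

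The only real subtlety is a degenerate case rather than a hard estimate. Because $1_G$ is $\{0,1\}$-valued, the hypothesis with $\gamma>0$ forces $\Pi_{>p}\subseteq G$ and $\Pi_{\leq p}\cap G=\varnothing$, and hence $\gamma\leq 1$. I would note this explicitly, along with the convention that an empty script family makes the infimum or supremum vacuous. With that noted, the argument is the gate-passage proof with $g=1_G$, $\Pi(x_1)=\Pi_{>p}$, and $\Pi(x_0)=\Pi_{\leq p}$.
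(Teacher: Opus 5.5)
Your proposal is correct and is essentially the paper's proof: the paper also obtains the result by applying the gate-passage (burden-gate) support-separation theorem with $g=1_G$ to the support sets $\Pi_{>p}$ and $\Pi_{\leq p}$, and it likewise remarks that because $1_G$ is binary the margin lies in $[0,1]$. Your explicit integration step, and your observation that a positive margin forces $\Pi_{>p}\subseteq G$ and $\Pi_{\leq p}\cap G=\varnothing$, unpack what the paper leaves implicit.
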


\begin{proof}
This is the burden-gate theorem applied to the two support sets $\Pi_{>p}$ and $\Pi_{\leq p}$. Since $1_G$ is binary, the displayed inequality can only have margins in $[0,1]$, and a positive margin means all high-pressure scripts put more mass on $G$ than all low-pressure scripts.
\end{proof}

The resulting ledger is
\[
\boxed{
\text{visible exposure}
\quad\text{requires}\quad
\text{residue}\times\text{attention}\times\text{pressure}\times\text{room linkage}.
}
\]
This is not a numerical identity. It is a certificate grammar. A room can fail because the raw residue is absent, because the observer misses it, because the object has not yet exerted pressure, or because the downstream link does not carry the pressure into an observable script.

\section{A four-role chamber}

We now build a small finite room. The purpose is not realism. The purpose is to make the residue visible.

There are two hidden loci:
\[
    \Theta=\{\alpha,\beta\}.
\]
There are four visible roles:
\[
    R=\{\mathrm{guest},\mathrm{regular},\mathrm{custodian},\mathrm{broker}\}.
\]
The observation alphabet is
\[
Y=\{e,w,a,c,s,h\},
\]
where the symbols mean entry, wait, acknowledge, correction, silence, and handoff. The audience sees only these symbols. The horizon is $T=3$.

The action alphabet is
\[
A=\{\mathrm{enter},\mathrm{wait},\mathrm{ack},\mathrm{correct},\mathrm{silent},\mathrm{handoff}\}.
\]
For simplicity, take $S=\{0,1,2,3\}$ and let the state be the clock, so $S_t=t$. Thus $K(t+1\mid t,a,\theta)=1$ for $t<3$.

The norm gate is role-dependent and history-independent in this chamber; \Cref{tab:gate} lists the admissible actions. The gate is intentionally simple: exposure comes from the different action corridors made visible by ordinary role permissions.

\begin{table}[H]
\centering
\caption{Role gate of the four-role chamber. A dash means the action is not admissible for the role; the gate does not depend on the history.}
\label{tab:gate}
\begin{tabular}{lcccccc}
\toprule
role & enter & wait & ack & correct & silent & handoff\\
\midrule
guest     & $\checkmark$ & $\checkmark$ & $\checkmark$ & --           & $\checkmark$ & --\\
regular   & $\checkmark$ & $\checkmark$ & $\checkmark$ & $\checkmark$ & $\checkmark$ & --\\
custodian & $\checkmark$ & $\checkmark$ & $\checkmark$ & $\checkmark$ & $\checkmark$ & $\checkmark$\\
broker    & $\checkmark$ & $\checkmark$ & $\checkmark$ & --           & $\checkmark$ & $\checkmark$\\
\bottomrule
\end{tabular}
\end{table}

The observation kernel is nearly deterministic: the action usually emits its matching symbol. The hidden locus changes only two small tendencies. Locus $\alpha$ emits correction more reliably when correction is available; locus $\beta$ emits handoff more reliably when handoff is available. Formally, when action $a$ has matching symbol $y(a)$,
\[
    O(y(a)\mid t,a,r,\theta)=1-\delta_{a,\theta},
\]
with the remaining mass $\delta_{a,\theta}$ placed on silence $s$; the silent action emits $s$ with probability one. The deviations are
\[
\begin{array}{c|cc}
 & \theta=\alpha & \theta=\beta\\
\hline
 a=\mathrm{correct} & 0.02 & 0.10\\
 a=\mathrm{handoff} & 0.10 & 0.02\\
\end{array}
\]
and all other non-silent actions have deviation $0.04$ for both loci.

The chamber fixes one \emph{ceremonial script} per role: at every step the role draws uniformly among its admissible actions. Pinning the script class is part of the room --- etiquette acts as a gate at the script level --- so the choric hull is generated by role mixtures alone. Since the clock is the state and the gate is history-independent, the three observations are i.i.d.\ conditional on $(\theta,r)$, and \Cref{tab:onestep} lists the one-step laws.

\begin{table}[H]
\centering
\caption{One-step public laws under the ceremonial scripts, rounded to three decimals. Rows are role--locus pairs; columns are emitted symbols.}
\label{tab:onestep}
\begin{tabular}{llcccccc}
\toprule
role & locus & $e$ & $w$ & $a$ & $c$ & $s$ & $h$\\
\midrule
guest     & $\alpha$ & .240 & .240 & .240 & 0    & .280 & 0\\
guest     & $\beta$  & .240 & .240 & .240 & 0    & .280 & 0\\
regular   & $\alpha$ & .192 & .192 & .192 & .196 & .228 & 0\\
regular   & $\beta$  & .192 & .192 & .192 & .180 & .244 & 0\\
custodian & $\alpha$ & .160 & .160 & .160 & .163 & .207 & .150\\
custodian & $\beta$  & .160 & .160 & .160 & .150 & .207 & .163\\
broker    & $\alpha$ & .192 & .192 & .192 & 0    & .244 & .180\\
broker    & $\beta$  & .192 & .192 & .192 & 0    & .228 & .196\\
\bottomrule
\end{tabular}
\end{table}

These laws induce exact attainable one-step intervalsThe table induces exact attainable one-step intervals for the two symbols that matter most in the chamber:
\[
\begin{aligned}
I_1(\alpha,c)&=[0,.196], & I_1(\beta,c)&=[0,.180],\\
I_1(\alpha,h)&=[0,.180], & I_1(\beta,h)&=[0,.196].
\end{aligned}
\]
At one step the intervals overlap because the guest role supplies zero mass and the custodian/broker roles supply cover. If the ceremony forbids the guest role and requires a handoff-capable role, the intervals change to
\[
\begin{aligned}
I^{\mathrm{hg}}_1(\alpha,c)&=[0,.163], & I^{\mathrm{hg}}_1(\beta,c)&=[0,.150],\\
I^{\mathrm{hg}}_1(\alpha,h)&=[.150,.180], & I^{\mathrm{hg}}_1(\beta,h)&=[.163,.196].
\end{aligned}
\]
The correction coordinate still overlaps, but the handoff coordinate now has a possible floor mismatch: choosing broker-like cover for $\alpha$ cannot imitate custodian-like cover for $\beta$ without changing the public role grammar. This is the chamber's first sandbar: a small region where the trace law looks ordinary locally while the admissible hull has already narrowed.

\begin{remark}[Slack restores the chorus]
The sandbar exists exactly because the ceremony pins the scripts. If the script-level gate is relaxed so that any $N$-admissible script may be used, a handoff-capable role may simply never hand off, both handoff intervals regain the point $0$, and the floors disappear. The exposure in this chamber is therefore a property of the script-level gate, not of the roles alone --- a quantitative instance of gate slack and of design Rule~3 below: a room must state which script freedoms its participants actually have before its safety can be assessed.
\end{remark}

The guest role is a perfect local cover: it erases the difference between $\alpha$ and $\beta$ because neither correction nor handoff is available. Yet the guest role may not always be acceptable. If the public room requires at least one non-guest role in a release ceremony, the cover disappears.

Let the audience test family contain only counts of $c$ and $h$ over the three-step trace. Define
\[
    C_c=\{y_{1:3}: \text{at least one }c\text{ occurs}\},\qquad
    C_h=\{y_{1:3}: \text{at least one }h\text{ occurs}\}.
\]
For a one-step probability $p_c$, the event $C_c$ has probability $1-(1-p_c)^3$. For a one-step probability $p_h$, the event $C_h$ has probability $1-(1-p_h)^3$.

In the custodian role, the correction event has probabilities
\[
    1-(1-.163)^3\approx .414,
    \qquad
    1-(1-.150)^3\approx .386.
\]
The handoff event has probabilities
\[
    1-(1-.150)^3\approx .386,
    \qquad
    1-(1-.163)^3\approx .414.
\]
Each difference is about $.028$. Neither difference is dramatic in one chamber visit. But across $n=20$ independent visits, the at-least-once amplification for a one-window difference from $.386$ to $.414$ gives
\[
    (1-.386)^{20}-(1-.414)^{20}\approx 3.5\times 10^{-5}.
\]
That particular at-least-once event saturates too quickly because both base probabilities are large. A better audience test counts frequency. For binomial count tests, the separation appears around the mean. The standard deviation scale is of order $\sqrt{n}$, while the mean gap is of order $n$. Thus repeated visits convert a small one-step asymmetry into a stable frequency residue.

This chamber shows a simple lesson. The dangerous symbol is not the rare one. The dangerous symbol is the one whose attainable frequency interval fails to overlap after the room imposes its norm gate.

\section{Balanced chorus construction}

The previous chamber had a trivial perfect cover through the guest role. We now give the general construction behind such covers.

\begin{definition}[Common public kernel]
A probability law $W$ on $Y^T$ is a common public kernel for a set $B\subseteq\Theta$ if
\[
    W\in\bigcap_{\theta\in B}\operatorname{Ch}_T(\theta).
\]
\end{definition}

\begin{definition}[Joint choric mask]
A set $B\subseteq\Theta$ is \emph{jointly} $(\F_T,\eps)$-chorically masked at horizon $T$ if there is a single family $\{P_\theta\in\operatorname{Ch}_T(\theta)\}_{\theta\in B}$ with $\Delta_{\F_T}(P_\theta,P_{\theta'})\leq\eps$ for every pair $\theta,\theta'\in B$. Pairwise masking lets the witnesses vary with the pair; joint masking does not.
\end{definition}

\begin{theorem}[Balanced chorus]
A set $B\subseteq\Theta$ is jointly perfectly chorically masked for the full audience at horizon $T$ if and only if it admits a common public kernel.
\end{theorem}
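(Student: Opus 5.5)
The plan is to prove the two directions separately, both by unwinding the definitions. The only fact needed about the full audience $[0,1]^{Y^T}$ is the one already used in the convex chorus theorem: $\Delta_{[0,1]^{Y^T}}(P,Q)=\TV(P,Q)$, and on the finite carrier $\Om_T=Y^T$ this vanishes if and only if $P=Q$.

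For the forward direction, I assume $B$ is jointly perfectly masked for the full audience. This gives a single family $\{P_\theta\in\operatorname{Ch}_T(\theta)\}_{\theta\in B}$ with $\TV(P_\theta,P_{\theta'})=0$ for every pair in $B$. Hence all $P_\theta$ coincide with one law $W$. For each $\theta\in B$ we have $W=P_\theta\in\operatorname{Ch}_T(\theta)$, so $W\in\bigcap_{\theta\in B}\operatorname{Ch}_T(\theta)$, and $W$ is a common public kernel.

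For the converse, I take a common public kernel $W$ and set $P_\theta=W$ for every $\theta\in B$. This is admissible because $W\in\operatorname{Ch}_T(\theta)$ for each $\theta\in B$. Every pairwise distance is then $\Delta(W,W)=0$, which is joint perfect masking.

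The only point needing care is degenerate $B$. If $B=\varnothing$, the intersection should be read as $\Delta(\Om_T)$, which is nonempty, and joint masking holds vacuously, so the equivalence persists. If $|B|=1$, both sides reduce to nonemptiness of $\operatorname{Ch}_T(\theta)$, which is guaranteed by the polytope consequence. I would also add a remark explaining why the theorem needs \emph{joint} masking. Pairwise perfect masking gives pairwise intersecting hulls, which by itself does not yield a common point. The finite Helly certificate, applied in the full measurement space of dimension $|Y|^T-1$, shows that intersection of the hulls in every subfamily of size at most $|Y|^T$ would suffice.

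I do not expect a real obstacle here: the theorem is essentially a definitional reformulation once full-audience distance zero is identified with equality of laws.
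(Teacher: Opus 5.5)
Your proposal is correct and follows the paper's proof: identify full-audience distance zero with equality of laws, collapse a joint witness to a single law $W$ lying in every $\operatorname{Ch}_T(\theta)$, and use the constant family $P_\theta=W$ for the converse. Your remarks on degenerate $B$ and the Helly bound of size $|Y|^T$ are consistent with the paper's finite obstruction size proposition, but the theorem itself does not need them.
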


\begin{proof}
For the full test class, $\Delta(P,Q)=0$ if and only if $P=Q$. A joint family with all pairwise distances zero therefore consists of one law $W$ repeated, and $W$ lies in $\operatorname{Ch}_T(\theta)$ for every $\theta\in B$: a common public kernel. Conversely, given a common public kernel $W$, the constant family $P_\theta=W$ is a joint witness.\end{proof}

\begin{remark}
Pairwise masking is weaker than group masking. Three convex sets may intersect pairwise while having empty triple intersection. A room that protects every pair separately may still fail to supply one public story that protects the whole group.
\end{remark}

\begin{definition}[Helly number of a room]
For fixed horizon $T$, the Helly number of the family $\{\operatorname{Ch}_T(\theta):\theta\in\Theta\}$ is the smallest integer $h$ such that every subfamily with empty intersection has a subfamily of size at most $h$ with empty intersection.
\end{definition}

Since all choric hulls live in the simplex over $Y^T$, whose affine dimension is at most $|Y|^T-1$, Helly's theorem gives a finite upper bound.

\begin{proposition}[Finite obstruction size]
If a set $B\subseteq\Theta$ has no common public kernel, then some subset $B_0\subseteq B$ of size at most $|Y|^T$ has no common public kernel.
\end{proposition}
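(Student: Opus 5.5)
The plan is to apply Helly's theorem in its contrapositive form to the finite family $\{\operatorname{Ch}_T(\theta):\theta\in B\}$, placed in the affine hull of the simplex $\Delta(\Om_T)$. By the Polytope consequence proposition, each $\operatorname{Ch}_T(\theta)$ is a nonempty compact convex polytope. Each lies in $\Delta(Y^T)$, which sits inside the affine hyperplane $\{\mu\in\R^{Y^T}:\sum_\omega\mu(\omega)=1\}$. That hyperplane has dimension $d=|Y|^T-1$. Fixing any base point identifies it affinely with $\R^d$, and this identification preserves convexity and intersections.

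Since $\Theta$ is finite, $B$ is finite. Helly's theorem in $\R^d$, already used for the Finite Helly certificate, then says the following. If every subfamily of at most $d+1$ members of a finite family of convex sets has nonempty intersection, then the whole family has nonempty intersection. Taking the contrapositive: if $\bigcap_{\theta\in B}\operatorname{Ch}_T(\theta)=\varnothing$, which by definition means $B$ has no common public kernel, then some $B_0\subseteq B$ with $|B_0|\le d+1=|Y|^T$ has $\bigcap_{\theta\in B_0}\operatorname{Ch}_T(\theta)=\varnothing$. That is exactly the claim. The small case $|B|\le |Y|^T$ is trivial, since one can take $B_0=B$.

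The only point needing care is the dimension count. The bound must come from the affine dimension $|Y|^T-1$, not the ambient dimension $|Y|^T$; otherwise one would only get the weaker bound $|Y|^T+1$. So the step where I expect to spend the most effort is justifying that Helly's theorem applies within the affine hull of the simplex. That holds because all the sets lie in this common affine subspace, and Helly's theorem is affine-invariant.

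Optionally, I would note that one could sharpen the bound to $\dim\operatorname{aff}\bigcup_{\theta\in B}\operatorname{Ch}_T(\theta)+1$. This is not needed for the statement.
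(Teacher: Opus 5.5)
Your proposal is correct and follows the paper's proof: both apply Helly's theorem to the finite family $\{\operatorname{Ch}_T(\theta):\theta\in B\}$ inside the simplex $\Delta(Y^T)$, using its affine dimension $|Y|^T-1$ to obtain an obstructing subfamily of size at most $|Y|^T$. Your explicit remarks on the finiteness of $B$ and on working in the affine hull make precise what the paper's one-line argument leaves implicit.
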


\begin{proof}
The choric hulls are convex subsets of $\R^{|Y|^T}$ contained in the probability simplex, whose affine dimension is $|Y|^T-1$. By Helly's theorem~\cite{helly1923}, if the full intersection is empty, a subfamily of size at most $|Y|^T$ has empty intersection.\end{proof}

This proposition is a compactness statement for a finite public room. Failure of group masking has a finite certificate of bounded size. One does not need to inspect every hidden locus at once.

\section{Audience classes}

The audience is not a monolith. We isolate several test classes that occur naturally in public systems.

\begin{definition}[Count audience]
Given a subset $B\subseteq Y$, the count statistic is
\[
    n_B(y_{1:T})=\sum_{t=1}^T 1_{\{y_t\in B\}}.
\]
A count audience is any class of tests factoring through a finite list of count statistics.
\end{definition}

\begin{definition}[Phase audience]
Let $\mathcal P=\{I_1,\ldots,I_m\}$ be a partition of $\{1,\ldots,T\}$ into phases. A phase audience sees, for each phase $I_j$, the multiset of symbols appearing in that phase, but not their internal order.
\end{definition}

\begin{definition}[Ritual audience]
A ritual audience is generated by a finite family of forbidden or suspicious patterns
\[
    \mathcal R=\{R_1,\ldots,R_k\},\qquad R_i\subseteq Y^T.
\]
Its tests are bounded functions of the vector
\[
    (1_{R_1},\ldots,1_{R_k}).
\]
\end{definition}

\begin{definition}[Gossip closure]
Let $\F_T$ be a test class. Its gossip closure $\operatorname{Gos}(\F_T)$ is the smallest class containing $\F_T$ and closed under complements, finite maxima, finite minima, and thresholding of averages of finitely many tests.
\end{definition}

The gossip closure models weak social computation. A crowd may combine simple impressions, repeat them, and threshold them. It may not compute an optimal likelihood ratio, but it can turn small local tests into shared classifications.

\begin{proposition}[Closure monotonicity]
For every $P,Q$,
\[
    \Delta_{\F_T}(P,Q)
    \leq
    \Delta_{\operatorname{Gos}(\F_T)}(P,Q)
    \leq
    \TV(P,Q).
\]
\end{proposition}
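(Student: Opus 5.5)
Both inequalities follow from inclusions of test classes combined with the monotonicity property (d) of the elementary-properties lemma. The plan is to show
\[
    \F_T\subseteq\operatorname{Gos}(\F_T)\subseteq[0,1]^{Y^T},
\]
and then apply (d) twice. The first inclusion holds by definition, since $\operatorname{Gos}(\F_T)$ is the smallest class \emph{containing} $\F_T$ with the stated closure properties. The second inclusion, combined with the definition of ambient distance, gives the upper bound: when the test class is all of $[0,1]^{Y^T}$, the ambient distance is $\TV(P,Q)$.

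For the second inclusion, we argue by minimality. It suffices to check that the class $\mathcal U=[0,1]^{Y^T}$ contains $\F_T$ and is itself closed under each of the four gossip operations. Once this is done, the smallest such class is contained in $\mathcal U$. The closure checks are pointwise:
\begin{itemize}
\item If $f\in\mathcal U$, then $1-f$ takes values in $[0,1]$.
\item Finite maxima and minima of $[0,1]$-valued functions are $[0,1]$-valued.
\item An average $\frac1k\sum_{i=1}^k f_i$ lies in $[0,1]$. Thresholding it yields an indicator $\one\{\frac1k\sum_i f_i\geq t\}$, which is $\{0,1\}$-valued and hence in $\mathcal U$.
\end{itemize}

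The only point needing care is interpreting ``thresholding'' so that the result stays bounded in $[0,1]$. If the definition allows soft thresholds, such as a monotone clip $x\mapsto\min\{1,\max\{0,(x-t)/s\}\}$, the same pointwise check applies. Any thresholding map from $[0,1]$ into $[0,1]$ works, so this is not a real obstacle.

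Finally, we combine the two inclusions with (d): the first yields $\Delta_{\F_T}\leq\Delta_{\operatorname{Gos}(\F_T)}$, and the second yields $\Delta_{\operatorname{Gos}(\F_T)}\leq\Delta_{[0,1]^{Y^T}}=\TV$.
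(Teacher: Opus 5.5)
Your proof is correct and follows the paper's argument: the first inequality comes from the inclusion $\F_T\subseteq\operatorname{Gos}(\F_T)$, and the second holds because every gossip-closed test is still a $[0,1]$-valued test and $\TV$ is the supremum over all such tests. Your minimality check, that $[0,1]^{Y^T}$ contains $\F_T$ and is closed under complements, finite maxima and minima, and thresholded averages, spells out what the paper asserts in a single line.
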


\begin{proof}
The first inequality follows from inclusion. The second follows because every gossip-closed test is still a bounded test on $Y^T$, and total variation is the supremum over all bounded tests.\end{proof}

\section{Design rules extracted from the calculus}

The formal statements above produce concrete design rules for staged public systems. These rules are not extra assumptions. They are translations of the theorems.

\paragraph{Rule 1: remove mandatory unique gestures.}
If a hidden locus must produce a gesture that others cannot produce, the unique-gesture lower bound gives a positive exposure floor. The correct repair is not merely to add randomness. The room must either remove the mandatory gesture or give other loci enough cover mass on the same gesture.

\paragraph{Rule 2: audit frequencies, not only events.}
An at-least-once event may saturate under repetition. Frequency tests often remain sensitive. A room that looks safe under one-window event tests may fail under repeated count audiences.

\paragraph{Rule 2.5: separate trace exposure from attention exposure.}
A residue in \(Y_{1:T}\) is not the same as a residue in \(\widetilde Y_{1:T}\). Designs should record the attention lens used by the audience: gaze budget, dwell window, blanked coordinates, delay, and candidate splitting. Otherwise an apparent success may be only an aperture failure.

\paragraph{Rule 3: make the public quotient explicit.}
If the room intends the audience to see only a quotient of traces, the quotient should be part of the release system. Otherwise an audience may use a finer statistic than the designer intended.

\paragraph{Rule 4: separate first-room residue from link residue.}
A downstream release gate can add exposure even when the upstream traces are masked. The linked-room theorem gives the accounting identity: upstream distance plus link discrepancy.

\paragraph{Rule 5: require common kernels for group protection.}
Pairwise compatibility is not enough. Group masking needs a common public kernel, or at least an approximate intersection of all relevant choric hulls.

\section{Computation of the room radius}

Because the trace space is finite, the exposure radius can be computed by finite-dimensional optimization. We state this in a form that exposes the linear structure.

List the admissible role-script trace laws for $\theta$ as
\[
    P_{\theta,1},\ldots,P_{\theta,m_\theta}\in\R^{Y^T}.
\]
A choric law is $P_\theta x$ where $P_\theta$ is the matrix with these laws as columns and $x\in\Delta_{m_\theta}$.

For the full audience, the pairwise choric distance is
\[
    d_T(\theta,\theta')=\frac12
    \min_{x\in\Delta_{m_\theta},\ z\in\Delta_{m_{\theta'}}}
    \left\|P_\theta x-P_{\theta'}z\right\|_1.
\]
This is a linear program. Introduce nonnegative variables $u_y,v_y$ such that
\[
    P_\theta x-P_{\theta'}z=u-v.
\]
Then minimize
\[
    \frac12\sum_{y\in Y^T}(u_y+v_y)
\]
subject to simplex constraints on $x,z$. For a finite event audience generated by events $G_1,\ldots,G_k$, replace each trace law by the vector of event masses
\[
    \Phi(P)=(P(G_1),\ldots,P(G_k))\in[0,1]^k
\]
and solve
\[
    \min_{x,z}\left\|\Phi(P_\theta x)-\Phi(P_{\theta'}z)\right\|_\infty
\]
if the audience may choose the strongest event coordinate.

\begin{proposition}[Linear program for event audiences]
Let $\F_T=\{1_{G_1},\ldots,1_{G_k}\}$ up to complements. Then
\[
    \inf_{P,Q}\Delta_{\F_T}(P,Q)
    =\min_{x,z}\max_{1\leq i\leq k}
    \left|\Phi_i(P_\theta x)-\Phi_i(P_{\theta'}z)\right|,
\]
with $x,z$ ranging over the appropriate simplices. This value is the optimum of a linear program after introducing a scalar variable $t$ and constraints
\[
    -t\leq \Phi_i(P_\theta x)-\Phi_i(P_{\theta'}z)\leq t,
    \qquad i=1,\ldots,k.
\]
\end{proposition}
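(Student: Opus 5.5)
The plan is to prove the equality in two steps: first identify the audience distance with the $\ell_\infty$ gap of the event-mass vectors, then substitute the simplex parametrisation of the two choric hulls.

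\emph{Step one.} The measurement-reduction proposition already gives
\[
\Delta_{\F_T}(P,Q)=\|\Phi(P)-\Phi(Q)\|_\infty .
\]
The reason is that an indicator $1_{G_i}$ contributes $|P(G_i)-Q(G_i)|$ to the supremum. Its complement contributes the same absolute value, since the sign flips but the absolute value in the definition of $\Delta_{\F_T}$ ignores it.

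\emph{Step two.} I would rewrite the infimum over $P\in\operatorname{Ch}_T(\theta)$ and $Q\in\operatorname{Ch}_T(\theta')$ as an infimum over the weights.
\begin{itemize}
\item By definition of the choric hull, every such $P$ has the form $P_\theta x$ with $x\in\Delta_{m_\theta}$. Likewise every $Q$ has the form $P_{\theta'}z$ with $z\in\Delta_{m_{\theta'}}$.
\item This uses the polytope consequence: the hull is generated by finitely many trace laws, namely the listed role--script laws, which contain the deterministic-script generators.
\item Each coordinate $\Phi_i(P_\theta x)=\sum_j x_j P_{\theta,j}(G_i)$ is linear in $x$, and similarly in $z$.
\item The objective $\max_i|\Phi_i(P_\theta x)-\Phi_i(P_{\theta'}z)|$ is therefore a continuous function on the compact product of simplices.
\item Hence the infimum is attained, which justifies writing $\min$.
\end{itemize}
This gives the first displayed identity. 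It is just the measurement-polytope equivalence theorem specialised to event tests, with the projected polytopes parametrised by $(x,z)$.

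\emph{Step three.} I would linearise the objective with an epigraph variable. Consider the program: minimise $t$ over $(x,z,t)$, subject to the simplex constraints
\[
x\geq0,\quad \one^\top x=1,\qquad z\geq0,\quad \one^\top z=1,
\]
and, for $i=1,\ldots,k$, the $2k$ linear inequalities
\[
-t\leq \Phi_i(P_\theta x)-\Phi_i(P_{\theta'}z)\leq t .
\]
Two inequalities show this program has the claimed value.
\begin{itemize}
\item For any feasible $(x,z,t)$, $t$ is at least the max-absolute objective at $(x,z)$. So the LP value is at least the min-max value.
\item Conversely, set $t$ equal to that max at a minimiser $(x,z)$. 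This point is feasible and achieves the min-max value.
\end{itemize}
So the optima coincide. All constraints and the objective are affine in $(x,z,t)$, so this is a linear program.

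The only step needing care is step two's claim that the hull equals the set of mixtures of the listed columns. That is, stochastic scripts add nothing beyond mixtures of the listed laws. I would cite the polytope consequence for this rather than reprove it. The remaining steps are routine.
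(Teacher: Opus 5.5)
Your proposal is correct and takes the same route as the paper: the paper's proof identifies $\Delta_{\F_T}$ with the maximum absolute difference of the listed event masses, then invokes the standard epigraph reformulation of a min--max-absolute objective. Your extra steps fill in details the paper leaves implicit: parametrizing the hulls by the finite deterministic-script generators, using compactness of the simplices to justify writing $\min$, and comparing the two optima in both directions.
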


\begin{proof}
For this audience, the distance is the maximum absolute difference over the listed event probabilities. The displayed constraints are exactly the standard epigraph form for minimizing a maximum absolute value.\end{proof}

This computational form is part of the model, not an implementation afterthought. It lets a designer audit whether a proposed public grammar actually supplies cover mass.

\section{A failure mode: polite singularity}

A subtle failure occurs when every individual action is polite, allowed, and common, but the sequence is singular. We call this polite singularity.

\begin{definition}[Polite singularity]
A law $P\in\operatorname{Ch}_T(\theta)$ has polite singularity against $\theta'$ relative to $\F_T$ if every one-step marginal of $P$ is matched by some law in $\operatorname{Ch}_T(\theta')$, but no law in $\operatorname{Ch}_T(\theta')$ matches $P$ within distance $\eps$ on $Y^T$.
\end{definition}

\begin{example}[Matched gestures, exposed order]
Let $Y=\{a,b\}$ and $T=2$. Suppose locus $\theta$ can produce only the law concentrated on $ab$, while locus $\theta'$ can produce only the law concentrated on $ba$. The one-step symbol counts are identical: both traces contain one $a$ and one $b$. A count audience cannot distinguish them. An order audience can distinguish them perfectly. Thus the same room is safe under one quotient and exposed under another.
\end{example}

\begin{proposition}[Order residue]
In the preceding example, if $\F_T$ contains the event $\{ab\}$, then the distance is $1$. If $\F_T$ is the count audience for symbols $a,b$, the distance is $0$.
\end{proposition}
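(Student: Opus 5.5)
The plan is to compute both distances directly, using the fact that each locus's choric hull is a singleton. Since $\theta$ can produce only the point mass $\delta_{ab}$ and $\theta'$ only $\delta_{ba}$, we have $\operatorname{Ch}_2(\theta)=\{\delta_{ab}\}$ and $\operatorname{Ch}_2(\theta')=\{\delta_{ba}\}$. The infimum in the definition of the exposure radius (equivalently, of the choric distance in the Convex chorus theorem) is therefore evaluated at this single pair $P=\delta_{ab}$, $Q=\delta_{ba}$. No optimization over mixtures is needed.

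For the order audience, suppose $1_{\{ab\}}\in\F_T$. The unique-gesture lower bound then gives
\[
\Delta_{\F_T}(P,Q)\geq |P(\{ab\})-Q(\{ab\})|=|1-0|=1.
\]
The elementary-properties lemma, part (a), gives $\Delta_{\F_T}\leq 1$, so the distance is exactly $1$. Equivalently, the attainable intervals are $I_2(\theta,\{ab\})=\{1\}$ and $I_2(\theta',\{ab\})=\{0\}$, with cover gap $1$.

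For the count audience, every test factors through the pair $(n_a,n_b)$. I would invoke the Quotient masking theorem with $\nu=(n_a,n_b):Y^2\to Z$. Both traces satisfy $\nu(ab)=\nu(ba)=(1,1)$, so $\nu_\#P=\nu_\#Q=\delta_{(1,1)}$. Quotient masking then yields perfect masking, that is, distance $0$. A direct check also works: for any $f=g\circ\nu$ we get $\mathbb E_Pf=g(1,1)=\mathbb E_Qf$.

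There is no real obstacle here. The only point needing care is the interpretive one: "contains the event $\{ab\}$" must be read as containing its indicator, so that the test class actually includes $1_{\{ab\}}$. One should also note that complements do not matter, since $|\cdot|$ is used in the definition of $\Delta_{\F_T}$.
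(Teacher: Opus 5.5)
Your proof is correct and follows the paper's own argument. The indicator of $\{ab\}$ gives probabilities $1$ and $0$, and for the count audience both laws push forward to the same count vector, so quotient masking gives distance $0$. You also make explicit two steps the paper leaves implicit: that both choric hulls are singletons, and that the bound $\Delta_{\F_T}\leq 1$ closes the first case.
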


\begin{proof}
For the event $\{ab\}$, the probabilities are $1$ and $0$. For the count audience, both laws push forward to the same count vector. Apply quotient masking.\end{proof}

This example is small but important. Public systems often audit local frequencies and miss order. The audience, however, may remember sequence.

\section{Parastatic release}

The term \emph{parastatic} is used here for a release that stands beside the main trace rather than inside it. Examples include a public timestamp, a moderator note, an institutional label, or a delayed confirmation. Such releases are dangerous because they look auxiliary while changing the audience's test class.

\begin{definition}[Parastatic channel]
A parastatic channel for horizon $T$ is a Markov kernel
\[
    B:Y^T\to \Delta(W)
\]
into a finite side alphabet $W$. The augmented trace is $(Y_{1:T},W)$.
\end{definition}

\begin{proposition}[Side-channel expansion]
Let $P,Q$ be laws on $Y^T$ and let $B,B'$ be parastatic channels. The augmented laws satisfy
\[
    \TV(PB,QB')\leq \TV(P,Q)+\sup_y\TV(B(y),B'(y)).
\]
If $B=B'$, then
\[
    \TV(PB,QB)\leq \TV(P,Q).
\]
\end{proposition}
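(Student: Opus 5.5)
The side-channel expansion is a special case of the linked-room accounting theorem, and the plan is to reduce to it. The augmented trace $(Y_{1:T},W)$ is itself a Markov link out of $Y^T$. Given a parastatic channel $B:Y^T\to\Delta(W)$, define the link
\[
    \widehat B:Y^T\to\Delta(Y^T\times W),\qquad \widehat B(y)=\delta_y\otimes B(y).
\]
This link copies the trace and appends the side symbol. Under this reading $PB$ denotes $P\widehat B$, the joint law of $(Y_{1:T},W)$, and $PB'$, $QB'$ are read the same way. With $L=\widehat B$ and $L'=\widehat B'$, the linked-room accounting theorem gives
\[
    \TV(PB,QB')\leq\TV(P,Q)+\sup_y\TV\bigl(\widehat B(y),\widehat B'(y)\bigr).
\]

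The one step needing an actual computation is to identify the link discrepancy with the side-channel discrepancy. Both $\widehat B(y)$ and $\widehat B'(y)$ are supported on the fibre $\{y\}\times W$. On that fibre they equal $B(y)$ and $B'(y)$ after the identification $(y,w)\mapsto w$. Computing total variation as half the $\ell_1$ sum over $Y^T\times W$, every term off the fibre vanishes. This leaves
\[
    \TV\bigl(\widehat B(y),\widehat B'(y)\bigr)=\TV\bigl(B(y),B'(y)\bigr),
\]
and the first display follows.

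For the second display, take $B=B'$. The supremum term is then zero and the first bound gives the claim. Alternatively, the data contraction lemma applied to the common link $\widehat B$ gives it directly. The inequality is in fact an equality, since the $Y^T$-marginal of $P\widehat B$ is $P$. Marginalization is itself a Markov link, so it gives the reverse inequality $\TV(P,Q)\leq\TV(PB,QB)$. This equality is not required by the statement.

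I expect no real obstacle. The only care needed is notational: the statement writes $PB$ for the augmented law, not for the $W$-marginal. The proof must therefore use the lifted link $\widehat B$ rather than $B$ itself. If one instead read $PB$ as the $W$-marginal, the same argument would still work, applying the linked-room accounting theorem directly with $L=B$ and $L'=B'$. So the conclusion is robust to either reading.
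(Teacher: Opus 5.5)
Your proposal is correct and takes the paper's route: the paper's proof is the one-line reduction to the linked-room accounting theorem, with the second room taken to be the side alphabet. Your lifted link $\widehat B(y)=\delta_y\otimes B(y)$ and the fibre computation $\TV(\widehat B(y),\widehat B'(y))=\TV(B(y),B'(y))$ spell out the identification that the paper's one-liner leaves implicit when $PB$ is read as the augmented law on $Y^T\times W$.
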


\begin{proof}
This is the linked-room theorem with the second room equal to the side alphabet.\end{proof}

A common parastatic channel cannot increase full-audience distance. A locus-dependent parastatic channel can. In public systems, the side label is often where the room leaks: not the message, but the manner in which the message is certified, delayed, grouped, or excused.

\section{From protection to accountability}

Masking is not the same as unaccountability. A room may hide a protected locus while preserving public accountability through a different quotient. The formal distinction is simple.

\begin{definition}[Accountability statistic]
An accountability statistic is a map
\[
    \kappa:Y^T\to V
\]
whose value is intended to remain visible. A privacy quotient is a map
\[
    \nu:Y^T\to Z
\]
whose value is intended to carry no protected-locus residue.
\end{definition}

A room can be designed so that $\kappa$ is informative while $\nu$ is masked. For example, $\kappa$ may record whether a required response occurred, while $\nu$ coarsens timing and role markers. The mathematical requirement is that the audience tests for protected inference factor through $\nu$, while accountability tests factor through $\kappa$.

\begin{definition}[Split visibility]
A room has $(\kappa,\nu)$-split visibility over $B\subseteq\Theta$ if there is a common public kernel for the $\nu$-pushforwards of all loci in $B$, while the distribution of $\kappa$ satisfies a specified accountability constraint $\mathcal L\subseteq \Delta(V)$.
\end{definition}

\begin{proposition}[Compatibility condition]
Split visibility is possible if and only if there exist choric laws $P_\theta\in\operatorname{Ch}_T(\theta)$ for each $\theta\in B$ such that
\[
    \nu_\#P_\theta=W\quad\text{for a common }W,
\]
and
\[
    \kappa_\#P_\theta\in\mathcal L
\]
for every $\theta\in B$.
\end{proposition}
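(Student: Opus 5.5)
The compatibility condition follows by unwinding the definition of split visibility and the definition of a common public kernel. I prove both directions, and in each I move between the two readings of ``common kernel'': a law on $Z$ lying in every pushed-forward hull, or a choice of one choric law per locus. The only substantive step is that pushforward by $\nu$ commutes with convex hulls. Concretely, I first record the identity
\[
    \nu_\#\operatorname{Ch}_T(\theta)=\conv\{\nu_\#P:P\in\M_T(\theta)\}.
\]
It holds because $P\mapsto\nu_\#P$ is the restriction of a linear map on $\R^{Y^T}$, and a linear image of a convex hull is the convex hull of the image. So the ``$\nu$-pushforward of locus $\theta$'' in the definition of split visibility can be read unambiguously as the set $\nu_\#\operatorname{Ch}_T(\theta)\subseteq\Delta(Z)$. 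A common public kernel for these sets is then a law $W\in\bigcap_{\theta\in B}\nu_\#\operatorname{Ch}_T(\theta)$, exactly as in the definition of a common public kernel, transported to the quotient carrier.

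For the ``if'' direction, suppose choric laws $P_\theta\in\operatorname{Ch}_T(\theta)$ are given with $\nu_\#P_\theta=W$ and $\kappa_\#P_\theta\in\mathcal L$ for all $\theta\in B$. Then $W=\nu_\#P_\theta\in\nu_\#\operatorname{Ch}_T(\theta)$ for every $\theta$, so $W$ is a common kernel for the pushforwards. The accountability constraint is witnessed by the same family $P_\theta$. Hence the room has $(\kappa,\nu)$-split visibility. As a consistency check, quotient masking then shows that every pair in $B$ is perfectly masked for any audience carried by $\nu$.

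For the ``only if'' direction, suppose split visibility holds with common kernel $W$. For each $\theta\in B$, membership $W\in\nu_\#\operatorname{Ch}_T(\theta)$ yields some $P_\theta\in\operatorname{Ch}_T(\theta)$ with $\nu_\#P_\theta=W$. The delicate point, and the main obstacle, is that the definition of split visibility must tie the accountability constraint to the \emph{same} realizing laws, not to some other law in the hull. I read ``the distribution of $\kappa$ satisfies $\mathcal L$'' as referring to the laws the room actually runs, namely the realizing choric laws $P_\theta$, so that $\kappa_\#P_\theta\in\mathcal L$. Under this reading the family $(P_\theta)_{\theta\in B}$ is exactly the witness the statement requires. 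I will state this reading of the definition explicitly, since without it the two conditions could be met by different laws and the equivalence would fail.

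The preimage $\{P\in\operatorname{Ch}_T(\theta):\nu_\#P=W,\ \kappa_\#P\in\mathcal L\}$ is an intersection of a polytope with an affine subspace and with the preimage of $\mathcal L$. If $\mathcal L$ is convex, this set is convex. So the witness search is a finite feasibility program in the mixture weights, of the same shape as the linear program for event audiences, and this makes the certificate checkable in the calculus's sense.
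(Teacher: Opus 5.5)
Your proposal is correct and follows the paper's own proof, which is likewise a direct unwinding of the definitions in pushforward form. Your explicit reading that the accountability constraint applies to the same laws that realize the common kernel $W$ is what the paper assumes implicitly when it says necessity follows by taking the laws that witness split visibility; the convex-hull commutation and feasibility-program remarks are valid extras not needed for the equivalence.
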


\begin{proof}
This restates the definitions in pushforward form. Necessity follows by taking the laws that witness split visibility. Sufficiency follows because those laws provide the common privacy quotient and satisfy the accountability constraint.\end{proof}

This proposition is intentionally plain. It prevents a common confusion: hiding a protected coordinate does not require destroying every public statistic. It requires choosing which statistic must remain visible and which statistic must be neutralized.

\section{Boundary cases}

We record three boundary cases because they are common sources of drift.

\subsection{Randomness without cover}

Randomness helps only if it moves a trace law toward the relevant choric hull. If every admissible randomized script for $\theta$ keeps a gesture probability above $a$, and every admissible randomized script for $\theta'$ keeps it below $b<a$, no amount of local randomization closes the gap. This is exactly the unique-gesture lower bound.

\subsection{Coarsening without commitment}

A room may announce that the audience should ignore timing or order. That announcement has no mathematical force unless the release process actually coarsens the public trace or the audience test class is constrained. If exact traces remain visible, the full test class remains available.

\subsection{Pairwise comfort}

A room may pass every pairwise comparison and still lack a common group kernel. This is the Helly obstruction. Group protection is an intersection property, not a list of pairwise reassurances.

\subsection{Full-trace comfort}

A room may also confuse full-trace analysis with perceived-trace analysis. If the audience is bounded, the relevant law is \(LP\), not \(P\). Conversely, if the full trace remains available to a later or stronger audience, attention-filtered safety should not be reported as trace-level safety.

\section{Discussion}

The calculus developed here treats public behavior as a trace object. This is a narrow choice, but it is the source of the model's strength. The hidden locus is never directly inspected. The room is judged by what its trace allows an audience to infer. This makes the model suitable for public release systems where formal secrecy, social interpretation, and institutional procedure overlap.

The vocabulary is operationally staged: locus, role, gesture, chorus, residue, gate, and parastatic channel are bookkeeping devices for separating hidden coordinates, visible actions, admissibility constraints, and observer capabilities. The terms are useful only to the extent that they produce finite certificates: a separating test, a coupling witness, a hull intersection, a pressure ledger, or a localization failure.

Read institutionally, the debt theorems are an audit instrument. An institution under unresolved pressure widens selection; the externalization trichotomy says that widening either clears, localizes, or externalizes. An audit that records only breadth and cumulative capture cannot distinguish the three branches; an audit that records the localization kernel and the debt ledger can. This is the formal content behind standing critiques of categorical screening~\cite{gandy2006,tufekci2014,barocas2019,kearns2019}: the objection is not to response as such, but to response that cannot exhibit a localization certificate.

There are several directions in which the model can be sharpened.
\begin{enumerate}[label=(\roman*),leftmargin=2.2em]
\item Replace finite horizons by stopped processes and study exposure under random exit times.
\item Let the attention lens adapt after observing public residues over many releases, so that gaze itself becomes a strategic public resource.
\item Study choric debt under partially observed control, where unresolved pressure changes selection breadth before it improves localization.
\item Let the audience test class adapt after observing public residues over many releases.
\item Introduce costs for scripts, so that masking cannot be achieved by forcing one locus into an unreasonable role.
\item Study approximate common kernels with fairness constraints on role burden.
\item Couple the room to a dynamic norm gate that changes after public attention identifies a residue.
\end{enumerate}
The present paper keeps these extensions outside the formal core. The finite chamber already contains the main phenomenon: exposure is a property of trace, audience, and admissible cover.

\section{Conclusion}

An ambient release system is safe only when its visible grammar supplies enough cover for its hidden loci and when its audience model states what is actually perceived. The object is not a secret in isolation, but a trace inside a room, and often a filtered trace inside a bounded observer. Choric masking formalizes this idea by comparing convex hulls of admissible trace laws through audience tests and attention lenses. The resulting calculus identifies exact exposure carriers, proves lower bounds for unique gestures, separates trace-level residue from attention-level residue, shows how small residues amplify under repetition, proves how unresolved residue can become choric risk debt, and gives accounting rules for linked public releases. The theory is finite, inspectable, and compositional. It turns public readability into a finite mathematical object: every masking assertion has a finite witness, every exposure assertion has a public carrier, and a room's claim about itself is therefore always an exhibit.

\appendix
\section{Layer-cake residue for bounded tests}

We include the standard finite argument because it is useful for audits.

\begin{lemma}[Threshold residue]
Let $P,Q$ be laws on finite $\Omega$ and let $f:\Omega\to[0,1]$. If
\[
    \mathbb E_P f-\mathbb E_Q f>\eps,
\]
then there exists $t\in[0,1]$ such that
\[
    P\{f\geq t\}-Q\{f\geq t\}>\eps.
\]
\end{lemma}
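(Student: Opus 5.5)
The plan is to use the layer-cake representation already recorded in the remark after the observable-residue corollary, and then argue by averaging. For $f:\Omega\to[0,1]$ and every $\omega$ we have $f(\omega)=\int_0^1 \one\{f(\omega)\geq t\}\,dt$. Taking expectations under $P$ and under $Q$ (finite sums, so Fubini is trivial) gives
\[
    \mathbb E_P f-\mathbb E_Q f=\int_0^1 \phi(t)\,dt,
    \qquad
    \phi(t):=P\{f\geq t\}-Q\{f\geq t\}.
\]

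The averaging step is then immediate. Suppose $\phi(t)\leq\eps$ for every $t\in[0,1]$. Integrating over an interval of length one gives $\int_0^1\phi\leq\eps$, which contradicts the hypothesis $\mathbb E_P f-\mathbb E_Q f>\eps$. Hence some $t$ has $\phi(t)>\eps$.

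The only point needing care, and it is minor, is that $\phi$ must be measurable and integrable so that the integral identity and the averaging are legitimate. Because $\Omega$ is finite, $f$ takes finitely many values $0\leq v_1<\cdots<v_r\leq 1$. The function $t\mapsto\{f\geq t\}$ is then a step function, constant on each interval $(v_{j-1},v_j]$, so $\phi$ is a bounded step function.

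This also gives a fully constructive version, which is what an audit needs. The integral becomes the finite sum
\[
    \sum_j (v_j-v_{j-1})\,\phi(v_j),
    \qquad v_0:=0,
\]
with nonnegative weights summing to $v_r\leq 1$. If every $\phi(v_j)\leq\eps$, the sum is at most $\eps$ whenever $\eps\geq0$. If $\eps<0$, take $t=0$ directly: $\phi(0)=1-1=0>\eps$. So the witnessing threshold can always be chosen among the finitely many attained values $v_j$ (or $t=0$), i.e.\ the event $\{f\geq t\}$ is a level set of the test, which is exactly the threshold residue claimed.
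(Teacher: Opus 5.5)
Your proof is correct and follows the same route as the paper. Both use the layer-cake identity $\mathbb E_P f-\mathbb E_Q f=\int_0^1\bigl(P\{f\geq t\}-Q\{f\geq t\}\bigr)\,dt$, then note that an integrand bounded by $\eps$ on an interval of length one cannot integrate to more than $\eps$. Your step-function version, which picks the witness threshold among the finitely many values $f$ attains and treats $\eps<0$ separately with $t=0$, is a valid constructive refinement that the paper does not state.
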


\begin{proof}
For each $\omega$, $f(\omega)=\int_0^1 1_{\{f(\omega)\geq t\}}\,dt$. Therefore
\[
    \mathbb E_P f-\mathbb E_Q f
    =\int_0^1 \left(P\{f\geq t\}-Q\{f\geq t\}\right)dt.
\]
If every integrand were at most $\eps$, the integral would be at most $\eps$. Hence some threshold has residue greater than $\eps$.\end{proof}

\section{Approximate intersections}

For group masking, exact common kernels may be too strict. The finite geometry still gives a clean formulation.

\begin{definition}[Approximate common kernel]
For $B\subseteq\Theta$, a law $W$ on $Y^T$ is an $(\F_T,\eps)$-common kernel if for every $\theta\in B$ there exists $P_\theta\in\operatorname{Ch}_T(\theta)$ such that
\[
    \Delta_{\F_T}(P_\theta,W)\leq\eps.
\]
\end{definition}

\begin{proposition}[Approximate group masking]
If $B$ admits an $(\F_T,\eps)$-common kernel, then every pair $\theta,\theta'\in B$ is $(\F_T,2\eps)$-chorically masked.
\end{proposition}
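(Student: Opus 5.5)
The argument is a direct application of the triangle inequality for the ambient distance, part (c) of the elementary-properties lemma. Let $W$ be an $(\F_T,\eps)$-common kernel for $B$. By definition, for every $\theta\in B$ there is a witness $P_\theta\in\operatorname{Ch}_T(\theta)$ with $\Delta_{\F_T}(P_\theta,W)\leq\eps$.

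Fix a pair $\theta,\theta'\in B$ and take the corresponding witnesses $P_\theta\in\operatorname{Ch}_T(\theta)$ and $P_{\theta'}\in\operatorname{Ch}_T(\theta')$. The intermediate law is $W$ itself, so we insert it between the two witnesses. Applying the triangle inequality (c) together with symmetry (b) gives
\[
    \Delta_{\F_T}(P_\theta,P_{\theta'})
    \leq \Delta_{\F_T}(P_\theta,W)+\Delta_{\F_T}(W,P_{\theta'})
    \leq \eps+\eps=2\eps .
\]

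The pair $(P_\theta,P_{\theta'})$ consists of laws lying in $\operatorname{Ch}_T(\theta)$ and $\operatorname{Ch}_T(\theta')$ respectively, at $\F_T$-distance at most $2\eps$. This is exactly the definition of $(\F_T,2\eps)$-choric masking. Equivalently, by the convex chorus theorem, the infimum of $\Delta_{\F_T}$ over the two hulls is at most $2\eps$.

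Two small points need noting. First, $W$ itself need not lie in any choric hull; it is used only as an intermediate point, and the triangle inequality does not require it to be admissible. Second, the triangle inequality holds for an arbitrary test class $\F_T$, with no symmetry or convexity assumption, because it is proved pointwise for each test $f$ before taking the supremum. No step here is a real obstacle. The only care needed is to check that the argument never silently uses $W\in\operatorname{Ch}_T(\theta)$.
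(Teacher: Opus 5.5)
Your proof is correct and is the same as the paper's: you insert the common kernel $W$ as an intermediate law and apply the triangle inequality, with symmetry, to the witnesses $P_\theta$ and $P_{\theta'}$, giving $\Delta_{\F_T}(P_\theta,P_{\theta'})\leq 2\eps$. Your remarks that $W$ need not lie in any choric hull, and that the triangle inequality holds for an arbitrary test class, are accurate but not needed for the argument.
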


\begin{proof}
Choose $P_\theta$ and $P_{\theta'}$ within $\eps$ of $W$. The triangle inequality gives
\[
    \Delta_{\F_T}(P_\theta,P_{\theta'})
    \leq \Delta_{\F_T}(P_\theta,W)+\Delta_{\F_T}(W,P_{\theta'})
    \leq 2\eps.
\]
\end{proof}

\section{A note on finite scripts}

The set of all stochastic scripts is infinite because probabilities vary continuously. For computation, one may either optimize directly over stochastic kernels or restrict to deterministic scripts and then take convex hulls. In finite-horizon controlled processes with perfect recall, randomized scripts induce trace laws that are mixtures over deterministic scripts.

\begin{proposition}[Deterministic expansion]
Fix $\theta,r,T$. Every trace law induced by a randomized admissible script is a convex combination of trace laws induced by deterministic admissible scripts for the same role.
\end{proposition}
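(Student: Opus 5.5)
The plan is the Kuhn-type ``pre-sampling'' argument that the polytope consequence already sketches, carried out carefully for the history-dependent scripts of this model. Fix $\theta,r,T$ and a randomized admissible script $\pi=(\pi_t)_{t\le T}$. Scripts condition only on the visible history $h\in Y^{t-1}$, so the decision points form the finite tree $Y^{<T}$. I would introduce independent random variables $\{A^{(h)}\}_{h\in Y^{<T}}$ with $A^{(h)}\sim\pi_{|h|+1}(\cdot\mid h)$. A realization $a=(a^{(h)})_h$ defines the deterministic script $\pi^a_t(\cdot\mid h)=\delta_{a^{(h)}}$.

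Gate legality transfers immediately. If $a^{(h)}$ has positive probability, then $\pi_t(a^{(h)}\mid h)>0$, so $(r,h,a^{(h)})\in N$. Hence every $\pi^a$ with positive weight $w(a)=\prod_h\pi(a^{(h)}\mid h)$ lies in $\Pi_T(r)$. For histories where the gate offers no action the script is undefined; these are excluded exactly as in the polytope consequence.

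The key identity is
\[
P^{\pi,T}_{\theta,r}=\sum_a w(a)\,P^{\pi^a,T}_{\theta,r},
\]
and I would prove it by computing path probabilities. For a full path $(a_{1:T},s_{1:T},y_{1:T})$, the law under $\pi$ is $\prod_t \pi_t(a_t\mid y_{1:t-1})\,O(y_t\mid s_{t-1},a_t,r,\theta)\,K(s_t\mid s_{t-1},a_t,\theta)$. Under the mixture, one sums over $a$ the product $w(a)\prod_t\one\{a^{(y_{1:t-1})}=a_t\}O(\cdot)K(\cdot)$.

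The main point is the following. Along a single path, the prefixes $y_{1:0},\dots,y_{1:T-1}$ are distinct nodes of the tree, since they have different lengths. So the constraints fix distinct coordinates of $a$. Summing out all the remaining coordinates leaves $\prod_t\pi_t(a_t\mid y_{1:t-1})$, because those independent marginals each sum to one. The two path laws therefore agree, and marginalizing to $Y^T$ gives the identity.

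This distinctness of prefixes is the one step I regard as the real obstacle. It is where ``perfect recall'' enters: a script that reused the same randomization at two times sharing a visible history would break the product structure. In this model that cannot happen, because histories are indexed by length. The rest is bookkeeping, and finiteness of $Y^{<T}$ and $A$ makes the mixture a finite convex combination.
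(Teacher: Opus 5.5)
Your proposal is correct and follows the paper's own argument: pre-sample one action per visible history according to $\pi$, then average over the resulting deterministic scripts. Your path-probability computation, which uses that the prefixes $y_{1:0},\dots,y_{1:T-1}$ are distinct tree nodes, makes explicit the averaging step that the paper only asserts. (Your dead-history caveat is vacuous, since an admissible $\pi$ already puts legal mass at every history.)
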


\begin{proof}
A deterministic script chooses one admissible action at each visible history. A randomized script assigns a distribution over admissible actions at each visible history. Because the horizon and history tree are finite, the product of these local action distributions defines a distribution over deterministic scripts: sample in advance one action for every history according to the randomized kernel at that history. Conditional on the sampled deterministic script, the induced process follows that script. Averaging over the sampled deterministic scripts gives the same trace law as the original randomized script.\end{proof}

Thus $\operatorname{Ch}_T(\theta)$ may be generated from deterministic scripts alone. This converts the room radius into a finite linear optimization problem, although the number of deterministic scripts can be large.

\section{Notation}

The main notation is collected here in prose. The finite hidden loci are $\Theta$, visible roles are $R$, staged states are $S$, actions are $A$, observations are $Y$, and the norm gate is $N$. For a role $r$, the admissible scripts over horizon $T$ are $\Pi_T(r)$, and $P_{\theta,r}^{\pi,T}$ is the trace law induced by locus $\theta$, role $r$, and script $\pi$. The admissible trace set of a locus is $\M_T(\theta)$, its convex choric hull is $\operatorname{Ch}_T(\theta)$, the audience class is $\F_T$, and the corresponding distance is $\Delta_{\F_T}$. The signed residue of an event is $\res_{P,Q}(C)$. The exposure radius of a locus is $\mathfrak r_T$, and the room radius is $\mathfrak R_T$. In the stratified model, $\Lambda$ is the hidden carrier, $C_T^{(j)}(a)$ is the layer choric hull for coordinate value $a$, $\Phi_{\G_T}$ is the event measurement map, $H_T^{(j)}(a;\G_T)$ is the measurement hull, and $\operatorname{Loc}_{j,\eps}^{\F_T}$ is the compatible layer-value set. For attention, $L$ is a stochastic attention lens and $\mathfrak R_T^L$ is the attention-filtered room radius. For post-release dynamics, $X$ is the object-state space, $\Psi(x)$ is residual object pressure, $\Pi(x)$ is the downstream script set compatible with state $x$, $L_X$ is the burden link, $\mathsf B(\nu)$ is expected burden, and $\mathfrak E_L(C;x)$ is the attention-pressure certificate product. In repeated rooms, $D_k$ is choric risk debt, $u_k$ and $c_k$ are unresolved residue and localized clearance, $m(D_k)$ is debt-responsive selection breadth, and $h_{i,k}$ is the resolution hazard of candidate $i$.

\end{document}